\newtheorem{theorem}{Theorem} 
\newtheorem{lemma}{Lemma} 
\newtheorem{corollary}{Corollary}
\newtheorem{assumption}{Assumption}
 \DeclareFontFamily{OMX}{MnSymbolE}{}
 \DeclareSymbolFont{MnLargeSymbols}{OMX}{MnSymbolE}{m}{n}
 \DeclareFontShape{OMX}{MnSymbolE}{m}{n}{
	 <-6>  MnSymbolE5
	<6-7>  MnSymbolE6
	<7-8>  MnSymbolE7
	<8-9>  MnSymbolE8
	<9-10> MnSymbolE9
   <10-12> MnSymbolE10
   <12->   MnSymbolE12
 }{}
 \DeclareFontShape{OMX}{MnSymbolE}{b}{n}{
	 <-6>  MnSymbolE-Bold5
	<6-7>  MnSymbolE-Bold6
	<7-8>  MnSymbolE-Bold7
	<8-9>  MnSymbolE-Bold8
	<9-10> MnSymbolE-Bold9
   <10-12> MnSymbolE-Bold10
   <12->   MnSymbolE-Bold12
 }{}
 \let\llangle\@undefined
 \let\rrangle\@undefined
 \DeclareMathDelimiter{\llangle}{\mathopen}%
					  {MnLargeSymbols}{'164}{MnLargeSymbols}{'164}
 \DeclareMathDelimiter{\rrangle}{\mathclose}%
					  {MnLargeSymbols}{'171}{MnLargeSymbols}{'171}
\newcommand{\const}{\,{\rm const}\,}
\newcommand{\diag}{\,{\rm diag}\,}
\newcommand{\td}{\text{d}}
\newcommand{\ord}{\mathcal{O}}
\title{\textbf{Supersymmetric black holes with a single axial symmetry in five dimensions}}
\author{David Katona\footnote{d.katona@sms.ed.ac.uk}\,\, and \,James Lucietti\footnote{j.lucietti@ed.ac.uk}
\\ \\ 
\small \sl School of Mathematics and Maxwell Institute for Mathematical Sciences, 
\\ 
\small \sl University of Edinburgh, King's Buildings, Edinburgh, EH9 3JZ, UK }
\date{}
\begin{document}
	\maketitle
	\begin{abstract}
		We present a classification of asymptotically flat, supersymmetric black hole and soliton solutions of 
		five-dimensional minimal supergravity that admit a single axial symmetry which `commutes' with the supersymmetry. 
		This includes the first examples of five-dimensional black hole solutions with exactly one axial Killing field 
		that are smooth on and outside the horizon.
		The solutions have similar properties to the previously studied class with biaxial symmetry, in 
		particular, they have a Gibbons-Hawking base and the harmonic functions must be of multi-centred type with the centres corresponding to the connected components of the horizon or fixed 
		points of the axial symmetry. We find a large moduli space of black hole and soliton spacetimes with non-contractible 2-cycles and the  horizon topologies are $S^3$,  $S^1\times S^2$ and lens spaces $L(p,1)$.
	\end{abstract}	
\newpage
\tableofcontents

\newpage

\section{Introduction} \label{sec_Introduction}

The classification of higher dimensional stationary black hole solutions of general relativity remains a major open 
problem~\cite{emparan_black_2008}. The black hole uniqueness theorems of four dimensions do not generalise to higher 
dimensions, even for asymptotically flat vacuum spacetimes. This was first demonstrated by the discovery of an explicit 
counterexample known as the black ring~\cite{emparan_rotating_2002}, which is an asymptotically flat, five-dimensional, 
vacuum black hole solution with horizon topology $S^1\times S^2$. Furthermore, for a range of asymptotic charges, 
there exist two different black ring solutions, as well as a spherical Myers-Perry black hole solution.

A number of general results are known which constrain the topology and symmetry of higher-dimensional black hole spacetimes~\cite{hollands_black_2012}. 
These are particularly restrictive for asymptotically flat five-dimensional stationary spacetimes, which will be the focus of this paper. Topological 
censorship guarantees that the domain of outer communication (DOC) is simply connected~\cite{friedman_topological_1995}. The horizon topology theorem 
states  that cross-sections of the horizon must be $S^3, S^1\times S^2$, $S^3/\Gamma$ where $\Gamma$ is a discrete group (or connected sums 
thereof)~\cite{galloway_generalization_2006}. The rigidity theorem guarantees (under the assumption of analyticity) that rotating black holes 
must have an axial $U(1)$ symmetry that commutes with the stationary symmetry (for the non-extremal case see~\cite{hollands_higher_2007,Moncrief:2008mr}, and for generic extremal black holes see~\cite{Hollands:2008wn}). Motivated by this, further constraints on the topology of the horizon and the DOC have been derived 
for stationary black holes with a $U(1)$ axial symmetry~\cite{hollands_further_2011}. Furthermore, classification 
theorems have been proven for stationary black holes with a $U(1)^2$ biaxial symmetry~\cite{Hollands:2007aj,Hollands:2007qf,Hollands:2008fm}.

Black hole non-uniqueness is also present in five-dimensional (minimal) supergravity theory, even for asymptotically flat, supersymmetric 
black hole solutions. First, the BMPV solution was found~\cite{breckenridge_d--branes_1997}, which is a charged, rotating black hole with 
$S^3$ horizon topology and equal angular momenta in the two orthogonal 2-planes. A uniqueness theorem for the BMPV black hole  
has been proven for locally $S^3$ horizons under the assumption that the stationary Killing field (the existence of which is 
necessary by supersymmetry) is timelike outside the black hole~\cite{reall_higher_2004}. Later, supersymmetric black rings were 
constructed~\cite{elvang_supersymmetric_2004}, moreover, it was found that concentric black ring solutions can possess the same asymptotic 
charges as the BMPV black hole~\cite{gauntlett_concentric_2005}.

More recently, new classes of supersymmetric black holes with non-trivial topology have been found in this theory.  Black holes with lens 
space horizon topology were first found for $L(2,1)$ topology~\cite{kunduri_supersymmetric_2014} and then generalised to $L(p,1)$ 
topology~\cite{tomizawa_supersymmetric_2016, breunholder_moduli_2019, breunholder_supersymmetric_2019}. Black holes with $S^3$ horizons 
with non-trivial spacetime topology have also been constructed~\cite{kunduri_black_2014,  breunholder_moduli_2019, breunholder_supersymmetric_2019}. 
These solutions have a DOC with non-trivial topology due to the presence of non-contractible 2-cycles and are similar to the `bubbling' microstate 
geometries~\cite{bena_black_2008}.  These new types of black hole evade the aforementioned uniqueness theorem for the BMPV solution because 
they possess ergosurfaces on which the stationary Killing field is null. Interestingly, some of the black holes in bubbling spacetimes can have 
the same asymptotic charges as a BMPV black hole, and rather surprisingly, there even exist black holes whose horizon area exceeds that 
of the corresponding BMPV black hole with the same conserved charges~\cite{horowitz_comments_2017, breunholder_supersymmetric_2019}. This result 
is in conflict with the microscopic derivation of the BMPV black hole entropy in string theory~\cite{strominger_microscopic_1996, breckenridge_d--branes_1997}, a 
contradiction which remains to be resolved. This  highlights the importance of determining of the {\it full }moduli space of five-dimensional
supersymmetric black holes.

The general local form of supersymmetric solutions in minimal supergravity  has been known for some time~\cite{gauntlett_all_2003}. 
It is determined using Killing spinor bilinears which define a scalar function, a causal Killing field, and three 2-forms on spacetime. 
When the Killing field is timelike, the metric takes the form of a timelike fibration over a hyper-K\"ahler base space, and the  
2-forms are the complex structures of the base space. If there exists an axial symmetry of the base space that preserves the complex 
structures (i.e. the $U(1)$ action is triholomorphic), the base metric is a Gibbons-Hawking space~\cite{gibbons_hidden_1988}, 
and if the full solution is also invariant under this axial symmetry it is determined by four harmonic functions on $\mathbb{R}^3$~\cite{gauntlett_all_2003}. 
It turns out the known supersymmetric black hole solutions discussed above all belong to this class. However, despite this local solution being known 
for nearly 20 years, no general global analysis of supersymmetric solutions with a Gibbons-Hawking base has been performed\footnote{A global analysis of a subclass of supersymmetric solutions with a Gibbons–Hawking base which reduce to four-dimensional euclidean Einstein–Maxwell solutions was performed in~\cite{dunajski_einstein-maxwell_2007}.}. One of the purposes of 
this paper is to perform such an analysis to determine {\it all} asymptotically flat black hole and soliton spacetimes in this class.

In fact, a number of classification theorems for supersymmetric solutions of minimal supergravity are already known.
The near-horizon geometries of supersymmetric black holes in this theory were completely determined~\cite{reall_higher_2004}, 
and it was found that the horizon geometries are a (squashed) three-sphere $S^3$, lens space $L(p,q)$, and $S^1\times S^2$ 
(the $T^3$ geometry is excluded by~\cite{galloway_generalization_2006}). It turns out that in all cases the near-horizon geometry 
must have  a $U(1)\times U(1)$ biaxial symmetry group in addition to the stationary symmetry.  In fact,  all known regular black hole 
solutions in five dimensions possess such a biaxial symmetry.  Under the assumption of biaxial symmetry, a classification of asymptotically 
flat, supersymmetric black hole and soliton solutions in minimal supergravity has been achieved~\cite{breunholder_moduli_2019}. These solutions 
all have a Gibbons-Hawking base and the associated harmonic functions on $\mathbb{R}^3$ have collinear simple poles which correspond  to 
either horizon components or fixed points of the triholomorphic axial Killing field. This reveals a large moduli space of black holes with $S^3, S^1\times S^2$ 
and lens space $L(p,1)$ horizons and non-contractible 2-cycles in the DOC, which contains all the above examples. In particular, this rules out 
lens space horizons $L(p,q)$ for $q\neq 1$, at least in this symmetry class.

Reall conjectured that higher dimensional rotating black holes with exactly one axial symmetry should exist~\cite{reall_higher_2004}. Evidence 
supporting this has been obtained from approximate solutions~\cite{Emparan:2009vd} and the analysis of linearised 
perturbations of odd dimensional Myers-Perry black holes for $D\ge 9$~\cite{Dias:2010eu}\footnote{It is possible to construct near-horizon 
geometries with a single axial symmetry in $D\ge 6$~\cite{Kunduri:2010vg}.}. This conjecture was 
motivated by the rigidity theorem, which only applies to black holes that are rotating in the sense that the stationary Killing field is not null 
on the horizon. Even though supersymmetric black holes are  non-rotating in this sense (the stationary Killing field is null on the horizon), 
one may also expect supersymmetric black holes with a single axial symmetry to exist. There have been a number of constructions of such solutions 
in the literature~\cite{bena_one_2004, bena_black_2006, bena_sliding_2006, bena_coiffured_2014}, however, these have all resulted in solutions for which 
the metric or matter fields are not smooth at the horizon~\cite{horowitz_how_2005, candlish_smoothness_2010}. In this work we will show that one can easily 
construct examples of five-dimensional supersymmetric black hole solutions with a single axial symmetry, that are smooth on and outside the horizon, 
by working within the class of supersymmetric solutions with a Gibbons-Hawking base.

The main goal of this paper is to obtain a classification of asymptotically flat, supersymmetric black hole and soliton solutions to five-dimensional 
minimal supergravity, that possess a single axial symmetry and are smooth on and outside a horizon (if there is one). This generalises the classification 
derived under the stronger assumption of a biaxial symmetry~\cite{breunholder_moduli_2019}. Our main assumption is that the axial symmetry `commutes' 
with the  supersymmetry in the sense that it preserves the Killing spinor. It then easily follows that the $U(1)$ action is triholomorphic and commutes 
with the stationary Killing field. Our main result can be summarised in the following theorem (the full statement is given in Theorem \ref{thm_classification}).

\begin{theorem}
	Consider an asymptotically flat, supersymmetric  black hole or soliton solution of $D=5$ minimal supergravity, 
	with an axial symmetry that preserves the Killing spinor. In addition, assume that the domain of outer communication 
	is globally hyperbolic, on which the span of Killing fields is timelike. Then, the solution must have Gibbons-Hawking 
	base and the associated harmonic functions are of multi-centred type, where the poles correspond to connected components 
	of the horizon or fixed points of the axial symmetry, and the parameters must satisfy a complicated set of algebraic 
	equations and inequalities.  Furthermore, the cross-section of each horizon component must have  $S^3, S^1\times S^2$ 
	or lens space $L(p,1)$ topology.
	\label{thm_summ}
\end{theorem}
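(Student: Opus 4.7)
The plan is to proceed in three stages: first reduce to the Gibbons--Hawking class via the axial symmetry, then analyse the global structure of the resulting harmonic functions, and finally classify the permissible singular points and horizon topologies. For the first step I would exploit the assumption that the axial Killing field $W$ preserves the Killing spinor. Since the Killing spinor bilinears -- the scalar $f$, the causal Killing field $V$, and the three hyper-K\"ahler 2-forms $X^{(i)}$ -- are constructed algebraically from the spinor, $W$-invariance immediately forces $\mathcal{L}_W V = 0$ (so $V$ and $W$ commute) and $\mathcal{L}_W X^{(i)} = 0$. Working in the region where $V$ is timelike, as guaranteed by the hypothesis that the span of Killing fields is timelike in the DOC, the local classification of~\cite{gauntlett_all_2003} gives a hyper-K\"ahler base whose complex structures are inherited from the $X^{(i)}$; the invariance of these 2-forms under $W$ says that $W$ descends to a triholomorphic $U(1)$ on the base. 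A standard hyper-K\"ahler argument then forces this base to be locally Gibbons--Hawking, and $W$-invariance of the full solution means it is globally determined by four harmonic functions $H,K,L,M$ on an open subset of $\mathbb{R}^3$.

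Next I would establish the global structure of this open subset and of the harmonic functions on it. The DOC modulo the commuting Killing fields $V,W$ yields, on the region where they are linearly independent and $V$ is timelike, the $\mathbb{R}^3$ of the Gibbons--Hawking construction; asymptotic flatness then fixes the leading behaviour of $H,K,L,M$ at infinity. The candidate singular loci consist of fixed points of $W$, points above horizon components, and points where $V$ becomes null away from a horizon. I would rule out the last of these using global hyperbolicity together with the timelike-span assumption, and show that at each of the remaining singular points the harmonic functions admit at worst simple poles, via a careful local analysis using smoothness of the 5D metric and of the Killing spinor. This would establish the multi-centred structure with poles corresponding precisely to horizon components or fixed points of $W$.

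For each pole the topology of the associated fibre is then read off from the residue of $H$: a residue $p \in \mathbb{Z}$ at a horizon pole produces a horizon cross-section that is a $U(1)$-fibration determined by $p$, yielding $S^3$, $S^1 \times S^2$, or $L(|p|,1)$, compatible with the near-horizon classification of~\cite{reall_higher_2004} and excluding $L(p,q)$ with $q\neq 1$. The constraints on $K,L,M$ at each pole follow from imposing smoothness of the metric, matter fields, and Killing spinor at the horizon and at the nut fixed points; these are the complicated algebraic equations and inequalities referred to in the statement. The hard part will be the global regularity analysis in the absence of a biaxial symmetry: without a second $U(1)$ the orbit space of the DOC is three-dimensional and the rod-structure methods of~\cite{breunholder_moduli_2019} do not apply directly, so one must combine topological censorship for the DOC, the horizon topology theorem of~\cite{galloway_generalization_2006}, and direct local analysis of candidate singular sets to rule out pathological configurations of poles or branch points of the harmonic functions.
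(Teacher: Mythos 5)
Your outline follows the same architecture as the paper's proof (spinor-bilinear invariance $\Rightarrow$ triholomorphic $U(1)$ $\Rightarrow$ Gibbons--Hawking base; orbit space $\cong \mathbb{R}^3$; simple poles at horizons and fixed points; topology from the residue of $H$), but the two steps that carry essentially all of the difficulty are asserted rather than argued. First, the claim that ``the DOC modulo the commuting Killing fields $V,W$ yields the $\mathbb{R}^3$ of the Gibbons--Hawking construction'' is precisely what must be proven, and it is not automatic: a priori the three-dimensional orbit space could contain arcs of exceptional orbits, two-dimensional fixed-point sets (bolts), or the map $\bm x$ defined by $\td x^i=\iota_W X^{(i)}$ could fail to be injective or surjective. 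The paper excludes bolts and exceptional orbits by noting that triholomorphicity forces $\td W^\flat$ to be self-dual so every fixed point is an isolated nut, and then proves that $\bm x$ is a global diffeomorphism onto $\mathbb{R}^3$ minus finitely many points via $g^{-1}(\td x^i,\td x^j)=N\delta_{ij}$ together with a completeness/escape-lemma argument for integral curves and injectivity at infinity. Without some version of this, you cannot even speak of ``the'' harmonic functions on $\mathbb{R}^3$, nor apply B\^ocher's theorem to get simple poles at fixed points.

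Second, you take for granted that a horizon component sits ``above a point'' of $\mathbb{R}^3$. This is a theorem, not an assumption: one must show that the $x^i$ are \emph{constant} on each connected component of the horizon and that $r=|\bm x-\bm a|\sim c\lambda$ in Gaussian null coordinates. The paper does this by imposing closedness of $\iota_W X^{(i)}$ on the near-horizon data of \cite{reall_higher_2004} in each of the three cases ($\mathring\Delta\neq0,\mathring h\neq 0$; $\mathring h=0$; $\mathring\Delta=0$), which forces the restriction of $W$ to the horizon to align with a specific Killing field of the near-horizon geometry; this computation also yields Corollary \ref{cor_Wnonzero} ($W$ nowhere zero on the horizon), which is needed to bound the harmonic functions by $1/\lambda$. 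A smaller but real issue: you propose to ``rule out'' points where $V$ becomes null away from the horizon. These points are not ruled out --- they are the evanescent ergosurfaces ($f=H=0$), generically present and essential for evading the BMPV uniqueness theorem. The correct move, which the paper makes, is to show that the timelike-span assumption gives $N=f^2|W|^2+(V\cdot W)^2>0$ there, so the harmonic functions, expressed through spacetime invariants, extend smoothly across the locus $f=0$ even though the Gibbons--Hawking form of the metric degenerates. Your final stage (residue of $H$ determines $S^3$, $S^1\times S^2$ or $L(p,1)$, with $L(p,q)$, $q\neq1$ excluded by matching the identification lattice at the horizon against the one fixed by asymptotic flatness) is correct and matches the paper.
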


This theorem is completely analogous to the case with biaxial symmetry~\cite{breunholder_moduli_2019}. However, the method of proof is rather different since it requires an analysis of the possible three-dimensional orbit spaces. These have been analysed in detail in~\cite{hollands_further_2011}. We find that supersymmetry strongly constrains the orbit space and that it can be identified with the $\mathbb{R}^3$ base of the Gibbons-Hawking base.  In contrast, in the biaxially symmetric case the orbit space is a two-dimensional manifold with boundaries and corners, which can be identified with a half-plane where the boundary is divided into rods (this is encoded by the rod structure). In that case it was also found that supersymmetry constrains the possible orbit spaces (that is, the rod structures are constrained).  It is interesting that supersymmetry leads to such constraints on the spacetime topology.

 It turns out that the constraints 
on the parameters in Theorem \ref{thm_summ} are exactly the same as for the biaxisymmetric case, but there is no requirement for the centres to be collinear on $\mathbb{R}^3$.
As we will show, this means that generically these spacetimes have $\mathbb{R}\times U(1)$ symmetry. The existence of these 
solutions depends on whether a complicated set of  constraint equations and inequalities on the parameters can be simultaneously satisfied. 
Unfortunately, in general we do not have analytic control over these constraints.  However, for three-centred solutions, we present numerical 
evidence that solutions do exist in the case of non-collinear centres, which correspond to black holes with exactly one axial Killing field. 
This extends  the systematic study of three-centred solutions with biaxial symmetry (collinear centres)~\cite{breunholder_supersymmetric_2019}. 
To our knowledge, this is the first explicit construction of higher-dimensional black holes with $\mathbb{R}\times U(1)$ symmetry (that are smooth 
on and outside the horizon), which confirms Reall's conjecture for supersymmetric black holes.

The outline of this paper and of the proof of Theorem \ref{thm_summ} is as follows. In section \ref{ssec_susy_assumption}-\ref{ssec_axial_assumption} 
we review the local form of a supersymmetric solution, state our assumptions, and show that the  general solution must have a Gibbons-Hawking 
base. In section \ref{ssec_AF} we derive the constraints imposed by asymptotic flatness.  In section \ref{ssec_horizon} we combine the classification 
of near-horizon geometries~\cite{reall_higher_2004} with our assumptions, to prove that the horizon corresponds to a single point in the $\mathbb{R}^3$ 
cartesian coordinates of the Gibbons-Hawking base, and that the associated harmonic functions have at most simple poles at these points. In section 
\ref{ssec_orbit} we show that the $\mathbb{R}^3$ cartesian coordinates provide a global chart on the orbit space, and the associated harmonic functions 
have at most simple poles at fixed points of the $U(1)$ Killing field. This implies that the  general form of such solutions is of multi-centred type with 
simple poles in the harmonic functions, see Theorem \ref{thm_necessary}. In section \ref{sec_Sufficient} we perform a general regularity analysis of 
multi-centred solutions, in particular we derive  necessary and sufficient conditions for the solution to be smooth at a horizon and at a fixed point.
In section \ref{sec_Summary} we prove our main classification result which is stated in Theorem \ref{thm_classification}, give the asymptotic charges 
and show the symmetries of these spacetimes are generically $\mathbb{R}\times U(1)$. We conclude the paper with a discussion of the results in section 
\ref{sec_Discussion}.  A number of technical details are relegated to several Appendices. This includes a derivation of the general cohomogeneity-1 hyper-K\"ahler metric with triholomorphic euclidean $E(2)$ symmetry in Appendix \ref{app_symmetries}.

\section{Supersymmetric solutions with  axial symmetry}\label{sec_Necessary}

\subsection{Supersymmetric solutions and global assumptions}
\label{ssec_susy_assumption}

We will consider supersymmetric solutions to $D=5$ ungauged minimal supergravity. The bosonic field content of this theory consists of a spacetime metric $g$ and 
a Maxwell field $F$,  defined on a 5-dimensional spacetime manifold $\mathcal{M}$. The action is given by that of Einstein-Maxwell theory coupled to a Chern-Simons term for the 
Maxwell field. A solution is supersymmetric if it admits a supercovariantly constant spinor $\epsilon$ (Killing spinor). This condition is highly restrictive; Gauntlett et al. in \cite{gauntlett_all_2003}  derived the general {\it local} form of all supersymmetric solutions using Killing spinor bilinears. Let us now briefly summarise some of 
their results.

We will work in the conventions of~\cite{reall_higher_2004}, so in particular the metric signature is `mostly plus'.
From Killing spinor bilinears, one can construct a function $f$, a vector field $V$, and three 2-forms $X^{(i)}$, 
$i=1,2,3$. These satisfy certain algebraic identities, in particular, 
\begin{align}
	g(V,V) &=-f^2 \; ,\label{eq_fV}  \\
	\iota_V X^{(i)} &=0 \; , \label{eq_VX} \\
	X_{\mu \gamma}^{(i)} X^{(j) \gamma}_{\nu} &=  \delta_{ij} (f^2 g_{\mu\nu} + V_\mu V_\nu) - f \epsilon_{ijk} X^{(k)}_{\mu\nu}  \;  ,\label{eq_XX}
\end{align}
where $\epsilon_{ijk}$ is the alternating symbol with $\epsilon_{123}=1$\footnote{Greek indices $\mu, \nu, \dots$ denote spacetime indices and are raised and lowered with the spacetime metric $g_{\mu\nu}$}.
Importantly, (\ref{eq_fV}) shows that $V$ is causal everywhere. Furthermore, it was also shown that $V$ is non-vanishing on any region where the Killing spinor  is non-vanishing. These quantities  must also satisfy certain differential identities, in particular, $V$ is a Killing vector field on $(\mathcal{M}, g)$,  the 2-forms $X^{(i)}$ are closed and
\begin{equation}
\iota_V F= - \frac{\sqrt{3}}{2} \td f  \; . \label{eq_VF}
\end{equation}
This last relation implies that $\mathcal{L}_VF=0$, that is, the Maxwell field is preserved by the Killing vector field $V$.

On regions where $V$ is timelike, i.e. 
$f\neq 0$, the metric can be written as
\begin{equation}
	g= -f^2(\td t + \omega)^2+f^{-1}h \; , \label{eq_hdef}
\end{equation}
where  $V= \partial_t$, and $h$ is a Riemannian metric on a four-dimensional base space $B$ orthogonal to the orbits of $V$.  Note that the base metric $h$ can be invariantly defined on regions where $V$ is timelike by
\begin{equation}
h_{\mu\nu} = f \left( g_{\mu\nu} +  \frac{V_\mu V_\nu}{f^2}\right)  \; ,   \label{eq_hinv}
\end{equation}
 whereas the 1-form $\omega$ may be defined by $\iota_V \omega=0$ and $\td \omega=-\td ( f^{-2} V)$ (which fixes it up to a gradient) and hence can be regarded as a 1-form on $B$.   The constraints from supersymmetry imply that the base space $(B, h)$ is hyper-K\"ahler with complex structures given by the 2-forms $X^{(i)}$. In particular, (\ref{eq_VX})  implies that $X^{(i)}$ can be viewed as 2-forms on $B$ and (\ref{eq_XX}) implies that they obey the quaternion algebra on $(B,h)$\footnote{ Latin indices $a, b, \dots$ denote base space indices and are raised and lowered with the base metric $h_{ab}$.}
 \begin{equation}
 X^{(i)}_{ac} X^{(j) c}_{\phantom{(j) c} b} = - \delta_{ij} h_{ab} + \epsilon_{ijk} X^{(k)}_{ab} \label{eq_quaternion}  \; .
 \end{equation} 
 Furthermore, it can be shown that $X^{(i)}$ are parallel with respect to the Levi-Civita connection of $h$ and are anti-self dual with respect to the orientation $\eta$ on $B$ defined by the spacetime orientation $f(\td t+\omega)\wedge\eta$.  The Maxwell field can be written as 
\begin{equation}
	F = -\frac{\sqrt{3}}{2}\td \left(\frac{V}{f}\right) -\frac{1}{\sqrt{3}} G^+, \label{eq_Ftimelike}
\end{equation}
where $G^+$ is the self-dual part of $f\td\omega$ with respect to the base space metric $h$.

We now turn to our global assumptions.   
\begin{assumption}
\label{assumption1}
$(\mathcal{M}, g, F)$ is a solution of $D=5$ minimal supergravity such that:
\begin{enumerate}[label=(\roman*)]
\item the solution is supersymmetric in the sense that it admits a globally defined Killing spinor $\epsilon$, 
	\label{assumption_KS}

	\item the supersymmetric Killing field $V$ is complete, \label{ass_Vcomplete}
		\item the domain of outer communication (DOC), denoted by $\llangle \mathcal{M} \rrangle$, is globally hyperbolic,
		\label{assumption_DOC}
		\item $\llangle \mathcal{M} \rrangle$ is asymptotically flat, that is, it has an end diffeomorphic to $\mathbb{R}\times(\mathbb{R}^4\setminus B^4)$ where $B^4$ is a 4-ball, such that on this end, \label{assumption_AF}
		 \begin{enumerate}
	\item the metric $g=-\td u^0\td u^0 + \delta_{IJ}\td u^I\td u^J + \mathcal{O}(R^{-\tau})$ and some 
	decay rate $\tau>0$, where $u^0, (u^I)_{I=1}^4$ are the pull-back of the cartesian coordinates on $\mathbb{R}\times \mathbb{R}^4$,  
	$R :=\sqrt{u^Iu^J\delta_{IJ}}$,   and in these coordinates $\partial_\mu g_{\nu\rho} =\mathcal{O}(R^{-\tau-1})$, \label{assumption_falloff}
	\item the supersymmetric Killing field in these coordinates is $V=\partial/\partial u^0$, so we
	also refer to it as the stationary Killing field,  \label{ass_VAF}
\end{enumerate}
	\item each connected component of the event horizon $\mathcal{H}$ has a smooth cross-section $H_i$, that is, a 3-dimensional spacelike submanifold transverse to the orbits of $V$, which is compact, \label{assumption_H}
	\item   there exists a Cauchy surface $\Sigma$ that is the union of a compact set, an asymptotically flat end as in \ref{assumption_AF}, and a finite number of asymptotically cylindrical ends diffeomorphic to $\mathbb{R}\times H_i$ each corresponding to a connected component of the horizon,  \label{assumption_cauchy}
	\item the metric $g$ and the Maxwell field $F$ are smooth ($C^\infty$) on $\llangle \mathcal{M} \rrangle$ and at the horizon (if there is one). \label{assumption_smoothness}

\end{enumerate}
\end{assumption}
\noindent\textbf{Remarks.}
\begin{enumerate}
\item Assumption \ref{assumption1} \ref{assumption_KS} implies that the spinor bilinears $f$,  $V$ and $X^{(i)}$ introduced above are globally defined on spacetime.
	\item Under these assumptions it follows that $\llangle \mathcal{M}\rrangle$ simply connected, 
	by the topological censorship theorem \cite{friedman_topological_1995}.
	\item From global hyperbolicity and completeness of $V$ it follows that each integral curve of $V$ intersects a spacelike Cauchy surface $\Sigma$ 
exactly once. The flow-out from $\Sigma$ along $V$ is injective otherwise there would be closed causal curves in $\llangle \mathcal{M}\rrangle$, 
contradicting global hyperbolicity. Using the Flow-out Theorem (see e.g. \cite{lee_introduction_2012}) one can see that $\llangle \mathcal{M}\rrangle$ 
is diffeomorphic to $\mathbb{R}\times\Sigma$, and that the orbit space $\llangle \mathcal{M}\rrangle/\mathbb{R}_V$ is a manifold 
homeomorphic to $\Sigma$.   Furthermore, the base space $B$ can be identified with the open subset of the orbit space $\llangle \mathcal{M}\rrangle/\mathbb{R}_V$ corresponding to timelike orbits of $V$ and the base metric $h$ is the corresponding orbit space metric~\cite{geroch_method_1971}.  \label{remark_orbit}
	\item The assumption  that $V$ is timelike in the asymptotic region (Assumption \ref{assumption1} \ref{assumption_AF}) implies that the metric can be written as timelike fibration over a hyper-K\"ahler base space (\ref{eq_hdef}), at least in the asymptotic region.  We emphasise that we do not assume that $V$ is strictly  timelike everywhere in $\llangle \mathcal{M} \rrangle$ and therefore the hyper-K\"ahler structure is not globally defined.  Notably, the Killing field $V$ must be tangent to the event horizon and therefore tangent to the null generators of the horizon.
\item The horizon corresponds to asymptotically cylindrical ends due to the well-known fact the horizon of a supersymmetric black hole must be extremal (see discussion around equation (\ref{eq_AC}) for the argument in our context).  By Assumption \ref{assumption1} \ref{assumption_cauchy} we can compactify these ends of $\Sigma$ by adding boundaries diffeomorphic to $H_i$. Thus, we can also view $\Sigma$ as an asymptotically flat 4-manifold with boundaries $H_i$ corresponding to each connected component of the horizon.
	
\end{enumerate}

\subsection{Including axial symmetry}
\label{ssec_axial_assumption}

Let us now turn to the general analysis of a supersymmetric solution admitting a compatible axial symmetry.  In particular, we will make the following assumptions.

\begin{assumption}\label{assumption2}
The supersymmetric background $(\mathcal{M}, g, F, \epsilon)$  admits a globally defined spacelike Killing field $W$ such that:
\begin{enumerate}[label=(\roman*)]
\item its flow has periodic orbits, that is, it is an `axial' Killing field in the sense it generates a $U(1)$ isometry \label{assumption_Waxial}
	\item it preserves the Maxwell field, $\mathcal{L}_W F=0$ \label{assumption_WMaxwell}
	\item it preserves the Killing spinor $\mathcal{L}_W\epsilon=0$,  that is, commutes with the remaining supersymmetry \label{assumption_Wspinor}
	\item at each point of $\llangle \mathcal{M} \rrangle$ there exists a linear combination of $V$ and $W$ which is timelike.\label{assumption_span}
\end{enumerate}
\end{assumption}

Assumption \ref{assumption2} \ref{assumption_Wspinor} is a supersymmetric generalisation of the usual requirement that the axial Killing field commutes with 
the stationary Killing field. This is revealed by the following lemma, which also greatly restricts the possible base space geometries.
\filbreak
\begin{lemma}
	Under Assumption \ref{assumption2} \ref{assumption_Wspinor} the following hold:
	\begin{enumerate}[label=(\alph*)]
		\item the Killing spinor bilinears $(f, V, X^{(i)})$ are preserved by the axial Killing field $W$; in particular, the supersymmetric and axial Killing fields commute, $[V, W]=0$,
		\item  wherever $V$ is timelike, the data on the base $(f, h, X^{(i)})$ and $\omega$ (in an appropriate gauge) are preserved by the axial Killing field $W$; in particular, $W$ defines a triholomorphic Killing field of the hyper-K\"ahler structure $(B, h, X^{(i)})$ which must therefore be of Gibbons-Hawking form. 
	\end{enumerate}
	\label{Lemma_GH}
\end{lemma}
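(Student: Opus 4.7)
The plan is to exploit the algebraic nature of the spinor bilinears. For part (a), since $W$ is a Killing field its Kosmann--Lichnerowicz spinor Lie derivative obeys a Leibniz rule with respect to Clifford multiplication, and the Dirac structure (gamma matrices, spin connection) is preserved along the $W$-flow because it is an isometry. Applying this Leibniz rule to the explicit bilinear expressions for $f$, $V_\mu$ and $X^{(i)}_{\mu\nu}$, the hypothesis $\mathcal{L}_W\epsilon=0$ immediately yields $\mathcal{L}_W f = 0$, $\mathcal{L}_W V = 0$, and $\mathcal{L}_W X^{(i)}=0$. The identity $\mathcal{L}_W V = [W,V]$ then gives $[V,W]=0$, completing (a).

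For part (b), I would first work in the open region $\mathcal{M}_t := \{f \neq 0\}$ where the timelike-base data are defined. The invariant formula (\ref{eq_hinv}) writes $h_{\mu\nu}$ purely in terms of $g_{\mu\nu}$, $V_\mu$ and $f$; since $\mathcal{L}_W g=0$ and $\mathcal{L}_W V=\mathcal{L}_W f=0$ by (a), this gives $\mathcal{L}_W h = 0$ on $\mathcal{M}_t$. Because $[V,W]=0$, the vector field $W$ descends to a Killing field of the quotient $(B,h)=\mathcal{M}_t/\mathbb{R}_V$, and the $W$-invariant 2-forms $X^{(i)}$ descend (using $\iota_V X^{(i)}=0$) to the hyper-K\"ahler complex structures on $B$, all three of which are preserved. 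Hence $W$ acts triholomorphically on $(B,h,X^{(i)})$.

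The 1-form $\omega$ is defined up to a gauge $\omega \to \omega + \td \psi$ for a smooth function $\psi$ on $B$. From invariance of $f$ and $V^\flat$ one gets $\mathcal{L}_W \td\omega = 0$, so $\mathcal{L}_W \omega$ is closed on $B$. I would then produce a $W$-invariant representative by averaging $\omega$ over the $U(1)$ orbits: the averaged form $\bar\omega$ still satisfies $\iota_V \bar\omega = 0$ (because $[V,W]=0$ ensures $\phi_\tau^{W\,*}V=V$) and has the same exterior derivative as $\omega$, and it will differ from $\omega$ by an exact form once one shows the required cohomological triviality, placing us in a gauge where the full timelike-base data $(f,h,X^{(i)},\omega)$ is $W$-invariant.

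Once everything is $W$-invariant, the lemma follows from the classical Gibbons--Hawking theorem: every four-dimensional hyper-K\"ahler manifold admitting a triholomorphic $U(1)$ Killing field is locally of Gibbons--Hawking form, determined by a single harmonic function on an open set of $\mathbb{R}^3$. The step I expect to demand the most care is the gauge fixing for $\omega$: the $U(1)$-averaging argument and the compatibility with $\iota_V\omega=0$ must be organised globally on each connected component of $B$, and the resulting closed correction must be shown to be exact (single-valued) given the relevant topology of the orbit space $\llangle \mathcal{M}\rrangle/\mathbb{R}_V$.
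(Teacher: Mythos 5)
Your proposal is correct and follows essentially the same route as the paper: the Leibniz rule for the spinorial Lie derivative gives invariance of the bilinears, the invariant formula (\ref{eq_hinv}) gives $\mathcal{L}_W h=0$, and the Gibbons--Hawking theorem for triholomorphic Killing fields finishes the argument. The only divergence is your treatment of $\omega$: rather than averaging over $U(1)$ orbits and checking exactness of the closed correction, the paper simply fixes the residual gauge $\omega\to\omega+\td\lambda$ (equivalently $t\to t-\lambda$) by demanding $\mathcal{L}_W t=0$, which makes $\omega=-f^{-2}V^\flat-\td t$ manifestly $W$-invariant and sidesteps the cohomological point you flag.
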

\begin{proof}
	By Assumption \ref{assumption2} \ref{assumption_Wspinor} $W$ preserves the Killing spinor, so by the Leibniz rule, it also preserves Killing spinor bilinears, i.e. $W(f)=0$, $[W, V]=0$, 
	$\mathcal{L}_WX^{(i)}=0$. 
	
	Next, the base metric $h$ is invariantly defined when $f\neq 0$ by (\ref{eq_hinv}), and as $g$, $f$ and $V$ are preserved by both 
	Killing fields, so is $h$. 
	Now, $\omega$ is only defined up to a gauge transformation $\omega\to\omega+\td\lambda$ generated by $t\to t-\lambda$ where $\lambda$ is a function on $B$. 
	We may partially fix this gauge by requiring $\mathcal{L}_W t=0$, so that $\omega$ is also invariant under $W$. In this gauge it is manifest that 
	$W$ can be regarded as a vector field on the base.\footnote{
		Evidently, $W$ being well-defined on the base is not gauge-dependent. To be more precise, we can define 
		$\widetilde{W}:= \pi_*W$ on $B$, where $\pi: \mathcal{M}\to B$ is the quotient map by $V$. This is 
		a projection, in a coordinate basis $(W^t, W^a)\mapsto(W^a)$, and it is well-defined since $\mathcal{L}_V W=0$. 
		Then $\mathcal{L}_Wh=0$ on the spacetime implies $\mathcal{L}_{\widetilde{W}}h=0$ on $B$, and similarly for 
		other $U(1)$-invariant tensors well-defined on the base. In the following, we will not distinguish between 
		$W$ and $\widetilde{W}$, except for Appendix \ref{app_KS}.
	}  Since $X^{(i)}$ are the complex structures of $B$, and they are preserved by the Killing field 
	$W$ (i.e. it is triholomorphic),  the metric $h$ can always be written in Gibbons–Hawking form \cite{gibbons_hidden_1988}.

\end{proof}

\noindent\textbf{Remarks.}
\begin{enumerate}
	\item The following converse of Lemma \ref{Lemma_GH} is also true: a Killing field that is triholomorphic on the base and commutes with the stationary 
	Killing field preserves the Killing spinor. This is shown in Appendix \ref{app_KS}.
	\item Lemma   \ref{Lemma_GH} shows that Assumption \ref{assumption2} \ref{assumption_WMaxwell} is redundant, because, when $f\neq0$, $F$ 
	can be expressed in terms of $U(1)$-invariant quantities as in (\ref{eq_Ftimelike}). We will show that the region $f\neq0$ is dense in 
	$\llangle \mathcal{M} \rrangle$ (see Corollary \ref{cor_fdense}), therefore, using continuity of $\mathcal{L}_W F$, $F$ is preserved by $W$ on and outside the horizon. 

\end{enumerate} 

We have established that on regions where $f\neq 0$, the base metric $h$ has Gibbons-Hawking form. Recall the local form of this is
\begin{equation}
	h = \frac{1}{H}(\td \psi + \chi)^2 + H\td x^i\td x^i \; , \label{eq_GHmetric}
\end{equation}
where $x^i$, $i=1,2,3$ are cartesian coordinates on $\mathbb{R}^3$, $H$ and $\chi$ is a harmonic function and a 1-form on $\mathbb{R}^3$, respectively, 
satisfying
\begin{equation}
	\star_3 \td \chi = \td H \; , \label{eq_chieq}
\end{equation}
where $\star_3$ denotes the Hodge star operator on $\mathbb{R}^3$ with respect to the euclidean metric. In these coordinates, the complex structures are
\begin{equation}
	X^{(i)} = (\td \psi +\chi)\wedge \td x^i  -\frac{1}{2} H \epsilon_{ijk}\td x^j \wedge \td x^k \; , \label{eq_complexforms}
\end{equation}
and 
the triholomorphic Killing field is $W= \partial_\psi$.

Remarkably, it has been shown~\cite{gauntlett_all_2003}  that if the triholomorphic Killing field $W $ of the base is a Killing field of the five-dimensional metric, as is the case for us, then 
the general solution is completely determined by four harmonic functions, $H$, $K$, $L$, $M$ on 
$\mathbb{R}^3$ as follows. Let $\omega_\psi$ be a function and $\hat\omega$ and $\xi$ be 1-forms on $\mathbb{R}^3$ satisfying
\begin{align}
	\omega_\psi &= \frac{K^3}{H^2}+\frac{3}{2}\frac{KL}{H}+M \; , \label{eq_omegapsi} \\
	\star_3 \td\hat\omega &= H \td M - M \td H +\frac{3}{2} (K \td L-L\td K) \; , \label{eq_omegahat} \\
	\star_3 \td\xi &= -\td K \; . \label{eq_xi}
\end{align}
Then $f$ and $\omega$ can be written as
\begin{align}
	f &= \frac{H}{K^2+HL} \; , \label{eq_f}\\
	\omega &= \omega_\psi(\td \psi+\chi) + \hat\omega \; ,\label{eq_omega}
\end{align}
while the Maxwell field takes the form
\begin{equation}
	F = \td A = \frac{\sqrt{3}}{2} \td \left(f(\td t + \omega)-\frac{K}{H}(\td \psi + \chi)-\xi\right)\;.\label{eq_Maxwell}
\end{equation}
We emphasise that at this stage, the local form of the solution is now fully determined, up to four harmonic functions on $\mathbb{R}^3$.

We will now introduce several spacetime invariants that are useful for our global analysis, following \cite{breunholder_moduli_2019}. Invariance of the Maxwell field under the Killing fields $V, W$ allows us to introduce an electric and magnetic potential $\Phi$, $\Psi$ satisfying 
\begin{align}
	\frac{\sqrt{3}}{2}\td \Phi &:= \iota_V F \label{eq_electricpot} \; ,\\
	\frac{\sqrt{3}}{2}\td \Psi &:= \iota_W F \label{eq_magneticpot} \; ,
\end{align}
which are
globally defined functions (up to an additive constant) on the DOC since  it is simply connected.  In fact, by (\ref{eq_VF})  we must have $\Phi = - f +\text{const}$ so the electric potential for supersymmetric solutions does not give an independent invariant. These potentials are preserved by the Killing fields.  Indeed, using (\ref{eq_electricpot}-\ref{eq_magneticpot}), 
	the electric and magnetic potentials satisfy $\mathcal{L}_V\Phi \propto \iota_V\iota_VF=0$, $\mathcal{L}_W\Psi \propto \iota_W\iota_WF=0$, 
	$\mathcal{L}_V\Psi=-\mathcal{L}_W\Phi =\iota_W \td f =0$ where in the final step we used the relation between $\Phi$ and $f$.

It turns out that a key spacetime invariant is given by the 
determinant of the inner product matrix of Killing fields:
\begin{equation}
	N :=-\begin{vmatrix}
		g(V, V) & g(V, W)\\
		g(W, V) & g(W,W)
	\end{vmatrix} = \frac{f}{H} = \frac{1}{K^2+HL} \; , \label{eq_Ndef}
\end{equation}
where the last two equalities are valid for solutions with a Gibbons-Hawking base as above.  By Lemma \ref{Lemma_GH},  $N$ is preserved by both Killing fields, because it is defined in terms of $V$, $W$ and $g$.  The significance of $N$ is that  Assumption \ref{assumption2} \ref{assumption_span} implies that $N>0$ everywhere on $\llangle \mathcal{M} \rrangle \setminus \mathcal{F}$ where
\begin{equation}
\mathcal{F}:=   \{p\in \llangle \mathcal{M} \rrangle \;|  \; W_p= 0 \}   \label{eq_fixedpts}
\end{equation}
denotes the set of fixed points of $W$ in the DOC (see proof of Lemma \ref{lemma_harmonicwelldefined} below).

 Let us now recall a useful result proven in~\cite[Lemma 1]{breunholder_moduli_2019}.

\begin{lemma}
	A supersymmetric solution $(\mathcal{M}, g,F)$ with a Gibbons-Hawking base is smooth on the region $N>0$ if and only if the associated harmonic functions $H$, $K$, $L$, $M$ are smooth and $K^2+HL>0$.
	\label{lemma1}
\end{lemma}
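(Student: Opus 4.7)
Both directions reduce to algebraic manipulation of the explicit local form (3.6)--(3.13). Assume $H,K,L,M$ are smooth on the relevant open subset of $\mathbb{R}^3$ with $K^2+HL>0$; the 1-forms $\chi,\hat\omega,\xi$ from (3.4), (3.7), (3.8) then have smooth curls and are locally smooth (up to gauge). Working in adapted coordinates $(t,\psi,x^i)$ with $V=\partial_t$ and $W=\partial_\psi$, the plan is to compute each component of $g$ and $A$ and verify that the apparent $H^{-1},H^{-2}$ singularities arising from $\omega_\psi=K^3/H^2+\tfrac{3}{2}KL/H+M$ and $f^{-1}H^{-1}=(K^2+HL)/H^2$ cancel identically. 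A representative calculation gives
\begin{equation*}
g_{\psi\psi}\;=\;-f^2\omega_\psi^2+f^{-1}H^{-1}\;=\;\frac{\tfrac{3}{4}K^2L^2+HL^3-2MK^3-3KLMH-M^2H^2}{(K^2+HL)^2},
\end{equation*}
which is manifestly smooth when $K^2+HL>0$. Analogous cancellations for the remaining components of $g$ and $F$---each reducing to a polynomial in $H,K,L,M$ divided by a power of $(K^2+HL)$---complete this direction.

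\textbf{Necessity ($\Rightarrow$).} Conversely, assume $(g,F)$ are smooth on $\{N>0\}$; the Killing spinor is then smooth by linear-PDE regularity and hence so is the bilinear $f$. I would exhibit $H,K,L,M$ as smooth functions of spacetime invariants of $(V,W)$. The Gram determinant $N=g(V,W)^2-g(V,V)g(W,W)$ is smooth and positive by hypothesis, yielding $K^2+HL=1/N$; the bilinear $f$ (with $f^2=-g(V,V)$ and sign fixed by $\epsilon$) yields $H=f/N$; the magnetic potential $\Psi$ of (3.16) is single-valued on the simply-connected DOC, and in a $W$-invariant gauge for $A$ it satisfies $\Psi=-f\omega_\psi+K/H$, from which a short manipulation produces the manifestly smooth identity
\begin{equation*}
K\;=\;-\,g(V,W)/N\,+\,H\Psi.
\end{equation*}
Finally $L$ and $M$ are determined by the two smooth relations $HL=1/N-K^2$ and $HM+\tfrac{1}{2}KL=-\Psi/N$.

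\textbf{Main obstacle.} The delicate step is smoothness of $L$ and $M$ across the locus $\{H=0\}$. On $\{H\ne 0\}$ division gives smooth $L,M$ directly. Across $\{H=0\}$ the smooth invariants $1/N-K^2$ and $-\Psi/N-\tfrac{1}{2}KL$ both vanish identically (as forced by $K^2=1/N$ and $\tfrac{1}{2}KL=-\Psi/N$ at $H=0$), so at any regular point of $\{H=0\}$ where $dH\ne 0$, Hadamard's lemma factors each as $H$ times a smooth function, extending $L,M$ smoothly. Handling the remaining critical locus of $H$ inside $\{H=0\}$ requires an additional ingredient: combining real-analyticity of the harmonic functions on $\{H\ne 0\}$ with removable-singularity theorems for bounded harmonic functions, using an auxiliary smooth invariant such as $g(W,W)/N^2=\tfrac{3}{4}K^2L^2+HL^3-2MK^3-3KLMH-M^2H^2$ to control growth. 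Making this analytic extension across the (lower-dimensional) critical set rigorous is the crux of the proof.
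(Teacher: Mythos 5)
First, note that the paper does not prove this lemma itself: it is imported verbatim from \cite[Lemma 1]{breunholder_moduli_2019}. However, the identities the paper displays immediately after it, equation (\ref{eq_harmonicinvariant}), are precisely the content of the necessity direction, and comparing your proposal against them exposes the problem. Your sufficiency direction is sound in outline (the representative cancellation for $g_{\psi\psi}$ is correct, and the remaining components of $g$ and $A$ do reduce to polynomials in $H,K,L,M$ over powers of $K^2+HL$), and your treatment of $H=f/N$ and $K=(f\Psi-g(V,W))/N$ agrees with the paper. The genuine gap is in your recovery of $L$ and $M$: you obtain them only implicitly, from $HL=1/N-K^2$ and $HM+\tfrac{1}{2}KL=-\Psi/N$, which forces a division by $H$ and hence a delicate extension argument across $\{H=0\}$. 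You correctly identify that Hadamard's lemma handles only the regular part of that locus, and you explicitly leave the critical set of $H$ inside $\{H=0\}$ unresolved (``the crux of the proof''). As written, that step is not carried out, and the proposed combination of removable-singularity theorems with an auxiliary quartic invariant is not shown to close it; so the necessity direction is incomplete.

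The missing idea is that $L$ and $M$ admit \emph{direct} expressions as smooth invariants divided by $N$ alone, with no division by $H$:
\begin{equation*}
L=\frac{fg(W,W)+2g(V,W)\Psi-f\Psi^2}{N}\,,\qquad
M=\frac{g(W,W)g(V,W)-3f\Psi g(W,W)-3\Psi^2g(V,W)+f\Psi^3}{2N}\,,
\end{equation*}
which are (\ref{eq_harmonicinvariant}) in the paper. These follow from exactly the kind of algebra you already performed for $g_{\psi\psi}$ (for instance, $fg(W,W)+2g(V,W)\Psi-f\Psi^2=(H-fK^2)/H^2=LN$ upon substituting $f=H/(K^2+HL)$), and they are manifestly smooth wherever $N>0$ and the invariants $f$, $\Psi$, $g(V,W)$, $g(W,W)$ are smooth. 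With these in hand the entire discussion of the locus $\{H=0\}$, Hadamard factorisation, and analytic continuation across the critical set becomes unnecessary, and the necessity direction is immediate.
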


\noindent In particular, if $N>0$, the harmonic functions can be expressed in terms of spacetime invariants:
\begin{align}
	H  = \frac{f}{N} \; , \qquad \qquad &\qquad \qquad L=\frac{fg(W, W)+ 2g(V, W)\Psi-f\Psi^2}{N} \; , \nonumber\\
	K = \frac{f\Psi - g(V, W)}{N} \; , \quad   &M = \frac{g(W,W)g(V, W)-3f\Psi g(W,W)-3\Psi^2g(V, W)+f\Psi^3}{2N} \; . \label{eq_harmonicinvariant}
\end{align}
This shows that the harmonic functions $K$, $L$, $M$ are only defined up to a gauge transformation $\Psi\to\Psi+c$ where $c$ is a constant.  This allows us to deduce the following important result.

\begin{lemma}\label{lemma_harmonicwelldefined}
	 The harmonic functions $H, K, M, L$ are well-defined and smooth on  
		$\llangle \mathcal{M} \rrangle\setminus \mathcal{F}$, and they are preserved by the two Killing fields $V$, $W$. 
\end{lemma}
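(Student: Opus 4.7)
My plan is to use the invariant formulas (\ref{eq_harmonicinvariant}) as \emph{defining} expressions for $H,K,L,M$ on the whole of $\llangle \mathcal{M}\rrangle \setminus \mathcal{F}$. Provided their right-hand sides extend smoothly to this larger set, we obtain the desired smooth extensions of the harmonic functions, and invariance under $V, W$ becomes automatic since every ingredient in (\ref{eq_harmonicinvariant}) is Killing-invariant. The only quantity that could cause trouble is the denominator $N$, so the central task is to show $N>0$ on $\llangle \mathcal{M}\rrangle \setminus \mathcal{F}$.

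For the positivity of $N$, fix $p \in \llangle \mathcal{M}\rrangle \setminus \mathcal{F}$, so $W_p \neq 0$ and hence $g(W,W)|_p > 0$ (as $W$ is spacelike); also $g(V,V) = -f^2 \leq 0$. By Assumption \ref{assumption2} \ref{assumption_span} there exist constants $a,b$ such that $Q(a,b) := a^2 g(V,V) + 2ab\,g(V,W) + b^2 g(W,W)$ is strictly negative at $p$. Suppose, for contradiction, that $N(p) \leq 0$, i.e., $g(V,V)g(W,W) \geq g(V,W)^2$ at $p$. Since $g(V,V) \leq 0$ and $g(W,W)|_p > 0$, we have $g(V,V)g(W,W) \leq 0$ at $p$, so the inequality forces $g(V,W)|_p = 0$ and then $g(V,V)|_p = 0$. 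But then $Q(a,b)|_p = b^2 g(W,W)|_p \geq 0$ for all $(a,b)$, contradicting Assumption \ref{assumption2} \ref{assumption_span}. Hence $N > 0$ on $\llangle \mathcal{M}\rrangle \setminus \mathcal{F}$.

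With $N>0$ in hand, the extension is routine. The quantities $f$, $g(V,W)$ and $g(W,W)$ are smooth on $\llangle \mathcal{M}\rrangle$ by Assumption \ref{assumption1} \ref{assumption_smoothness}; the magnetic potential $\Psi$ defined by (\ref{eq_magneticpot}) is smooth and globally defined up to an additive constant on the simply connected DOC; and $N = g(V,W)^2 - g(V,V)g(W,W)$ is smooth and strictly positive on $\llangle \mathcal{M}\rrangle \setminus \mathcal{F}$. Hence the right-hand sides of (\ref{eq_harmonicinvariant}) yield smooth functions there, and on the subset where the Gibbons-Hawking description applies ($f\neq 0$ and $W\neq 0$) they agree with the pull-backs from $\mathbb{R}^3$ of the harmonic functions, by the very derivation of (\ref{eq_harmonicinvariant}). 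Invariance under $V, W$ is then immediate: $f$ is preserved by Lemma \ref{Lemma_GH}, $g(V,W)$ and $g(W,W)$ are preserved because $V, W$ are commuting Killing fields, and $\mathcal{L}_V \Psi = \mathcal{L}_W \Psi = 0$ was established in the paragraph preceding (\ref{eq_Ndef}); being rational combinations of these invariants, $H,K,L,M$ are preserved too. The one substantive step is the positivity of $N$; once that is in hand the rest is purely algebraic, and the main subtlety to watch is the edge case in which $V$ becomes null at $p$, which is exactly what Assumption \ref{assumption2} \ref{assumption_span} rules out via the contradiction above.
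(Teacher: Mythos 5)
Your proposal is correct and follows essentially the same route as the paper: the key step in both is that Assumption \ref{assumption2} \ref{assumption_span} forces $N>0$ away from the fixed-point set (the paper notes directly that $N=f^2|W|^2+(V\cdot W)^2$ is a sum of squares whose vanishing would require $f=V\cdot W=0$, contradicting the timelike-span assumption, while you reach the same conclusion by contradiction from the quadratic form $Q(a,b)$), after which the invariant expressions (\ref{eq_harmonicinvariant}) give well-definedness, smoothness and Killing-invariance. The only cosmetic difference is that the paper cites Lemma \ref{lemma1} for smoothness where you verify it directly from the smoothness of the ingredients of (\ref{eq_harmonicinvariant}); both are valid.
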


\begin{proof}
	Assumption \ref{assumption2} \ref{assumption_span} implies that if $f(p)=0$ at some $p\in \llangle \mathcal{M} \rrangle$ then $W\cdot V\neq 0$ at $p$, otherwise 
	there would not exist a timelike linear combination of the Killing fields at $p$. It follows that the zeros of the invariant $N=f^2|W|^2 +(V\cdot W)^2$ in 
	$\llangle \mathcal{M} \rrangle$ coincide with the zeros of $W$. Hence, for any $p \in  \llangle \mathcal{M} \rrangle$ such that $W_p\neq 0$ we have $N>0$. Therefore, by Lemma \ref{lemma1}, the associated harmonic functions are  well-defined and smooth
	at every  point in  $\llangle \mathcal{M} \rrangle$ that is not a fixed point of $W$. The harmonic functions can be expressed in terms of invariants as (\ref{eq_harmonicinvariant}), and since these invariants are preserved by both Killing fields, so are the harmonic functions. 
\end{proof}

\noindent\textbf{Remark.}
From Assumption \ref{assumption2} \ref{assumption_span} it follows that $f\neq 0$ at a fixed point of $W$. Therefore, wherever $f=0$ in  $\llangle \mathcal{M} \rrangle$, we must have $N>0$, so the associated harmonic functions are 
	well-defined, even though the base is not. Therefore, from (\ref{eq_harmonicinvariant}) it can be seen 
	that the zeros of $f$ and $H$ must coincide in  $\llangle\mathcal{M} \rrangle$. These so-called `evanescent ergosurfaces' have 
	been analysed in great detail in \cite{niehoff_evanescent_2016}. These are smooth, timelike hypersurfaces outside the horizon on which the stationary Killing field 
	becomes null. This way black hole solutions can evade the uniqueness theorem of \cite{reall_higher_2004}, which assumed that the stationary Killing field is strictly timelike outside the horizon. \\

From (\ref{eq_complexforms}), it is immediate that the cartesian coordinates satisfy
\begin{equation}
	\td x^i=\iota_WX^{(i)} \; . \label{eq_xdef}
\end{equation}
The right-hand side $\iota_WX^{(i)}$ is a globally defined 1-form on spacetime, and closed everywhere as a consequence of $\mathcal{L}_WX^{(i)}=0$, and the fact that $X^{(i)}$ are 
closed \cite{gauntlett_all_2003}. Since $\llangle \mathcal{M} \rrangle$ is simply connected, equation 
(\ref{eq_xdef}) allows us to introduce globally defined functions $x^i$  on  $\llangle \mathcal{M} \rrangle$ (up to an additive constant) that coincide with the local cartesian coordinates of the Gibbons-Hawking base. From (\ref{eq_VX}) it follows that the functions $x^i$ are preserved by both $V$ and $W$.   In the following sections, we will determine the behaviour of $x^i$ in the asymptotically flat region and near the horizon.

\subsection{Asymptotic flatness} \label{ssec_AF}

In this section we use asymptotic flatness as stated in our definition (Assumption \ref{assumption1} \ref{assumption_AF}) to deduce the behaviour of the base metric $h$, 
the complex 2-forms $X^{(i)}$, and the $\mathbb{R}^3$  cartesian coordinates $x^i$ of the Gibbons-Hawking base, near spatial infinity.

First observe by Assumption \ref{assumption1} \ref{ass_VAF}  we may identify $u^I, I=1, 2,3,4$ as coordinates on the base space $B$.   Asymptotic flatness (Assumption \ref{assumption1} \ref{assumption_AF}) then implies that in the asymptotic end, 
\begin{align}
f &= 1+ O(R^{-\tau}) \; ,\\
\omega &=  O(R^{-\tau}) \td u^I \; ,\\
h &= (\delta_{IJ}+\mathcal{O}(R^{-\tau}))\td u^I \td u^J, \label{eq_hAE}
\end{align}
so, in particular, $(B, h)$ has an asymptotically euclidean end diffeomorphic to $\mathbb{R}^4\setminus B^4$.   The next result constrains the behaviour of the hyper-K\"ahler structure near infinity.

\begin{lemma}
On the asymptotically flat end the complex structures of $(B, h)$ can be written in cartesian coordinates  as
\begin{equation}
	X^{(i)} = \Omega_-^{(i)} + \mathcal{O}(R^{-\tau}) \; , \label{eq_XAF}
\end{equation}
where $\Omega_-^{(i)}$ are a standard basis of anti-self-dual 2-forms on $\mathbb{R}^4$,
\begin{align}
	\Omega_-^{(1)} &= \td u^1 \wedge \td u^4 + \td u^3 \wedge \td u^2 \; ,\nonumber \\
	\Omega_-^{(2)} &= \td u^1 \wedge \td u^3 + \td u^2 \wedge \td u^4 \; , \\
	\Omega_-^{(3)} &= \td u^1\wedge \td u^2 + \td u^4\wedge \td u^3 \; . \nonumber
\end{align} 
\end{lemma}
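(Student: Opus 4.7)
The plan is to show, in the cartesian chart $(u^I)$ on the asymptotically flat end, that (i) each $X^{(i)}$ converges componentwise to a constant 2-form $X^{(i)}_0$ with error of order $R^{-\tau}$; (ii) the limit triple satisfies the flat quaternion algebra and is anti-self-dual with respect to the Euclidean orientation; and (iii) a rotation by the $SU(2)_R$ R-symmetry (equivalently, an $SO(4)$ rotation of $u^I$) brings $X^{(i)}_0$ to the standard basis $\Omega^{(i)}_-$.

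For step (i), I use that each $X^{(i)}$ is parallel with respect to the Levi--Civita connection of $h$, which in the cartesian chart reads
\begin{equation}
\partial_K X^{(i)}_{IJ} = \Gamma^L{}_{KI}\, X^{(i)}_{LJ} + \Gamma^L{}_{KJ}\, X^{(i)}_{IL}.
\end{equation}
The asymptotic expansion (\ref{eq_hAE}) together with the assumption $\partial g = \mathcal{O}(R^{-\tau-1})$ implies that the Christoffel symbols of $h$ are of order $R^{-\tau-1}$, while the quaternion algebra (\ref{eq_quaternion}) forces $X^{(i)ab}X^{(i)}_{ab} = 4$ (no sum on $i$), which, combined with $h^{-1}\to\delta^{-1}$ at large $R$, gives a uniform pointwise bound on the components $X^{(i)}_{IJ}$. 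Hence $\partial_K X^{(i)}_{IJ} = \mathcal{O}(R^{-\tau-1})$; integrating radially from infinity — with angular uniformity controlled by arc-length integrals on spheres $S^3_R$ of radius $R$ — yields the existence of the limit $X^{(i)}_0$ and
\begin{equation}
X^{(i)}_{IJ}(u) = (X^{(i)}_0)_{IJ} + \mathcal{O}(R^{-\tau}).
\end{equation}

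For step (ii), passing to the limit $R\to\infty$ in (\ref{eq_quaternion}) and using $h_{IJ}\to\delta_{IJ}$ shows that $X^{(i)}_0$ obeys the flat quaternion algebra with respect to $\delta$. The anti-self-duality of $X^{(i)}$ on $(B, h)$ likewise passes to the limit, after arranging that the cartesian chart induces the orientation $\eta$ on $B$ (which can be enforced by reflecting one coordinate if necessary). The space of triples of anti-self-dual 2-forms on $(\mathbb{R}^4,\delta)$ satisfying the quaternion algebra is an $SO(3)$-torsor, so $X^{(i)}_0 = R^i{}_j\,\Omega^{(j)}_-$ for a unique $R\in SO(3)$. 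The $SU(2)_R$ R-symmetry of minimal supergravity rotates the Killing spinor and induces an $SO(3)$ action on the bilinears $X^{(i)}$ while leaving $f$ and $V$ invariant, so we may set $R=I$ to obtain (\ref{eq_XAF}).

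The main subtlety, beyond the routine analytic decay estimates, is the orientation bookkeeping: the base orientation $\eta$ is tied to the spacetime orientation through $f(\td t + \omega)\wedge\eta$, and this must be reconciled with the standard orientation on $\mathbb{R}^4$ used to define $\Omega^{(i)}_-$ as anti-self-dual, since a mismatch would swap SD with ASD in the limit. This is a one-time compatibility condition on the choice of cartesian coordinates, not a genuine obstruction.
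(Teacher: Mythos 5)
Your proposal is correct and follows essentially the same route as the paper: boundedness of the components from the quaternion algebra, constancy up to $\mathcal{O}(R^{-\tau})$ from parallelism plus the metric decay, passage of the quaternion algebra and anti-self-duality to the flat limit, and a final constant $SO(3)$ rotation to reach the standard basis $\Omega^{(i)}_-$. The only differences are cosmetic — your explicit radial integration and the remark on orientation bookkeeping make precise steps the paper leaves implicit.
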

\begin{proof}
The 2-forms $X^{(i)}$ satisfy the quaternion algebra (\ref{eq_quaternion}), which in particular implies that $X^{(i)}_{ab}X^{(i) ab}= -4$ (no sum over $i$). 
Hence, from (\ref{eq_hAE}) we immediately deduce that in the cartesian coordinates on the asymptotic end $X^{(i)}_{IJ}= O(1)$.  Next, we use the fact 
that $X^{(i)}$ are parallel with respect to the Levi-Civita connection $\nabla^{(h)}$ defined by $h$.  By Assumption \ref{assumption1} \ref{assumption_falloff} the 
derivatives of the metric in cartesian coordinates are $\partial_I h_{JK} = O(R^{-\tau-1})$  and therefore the covariant derivative 
$\nabla_I^{(h)} X^{(i)}_{JK} = \partial_I X^{(i)}_{JK} + O(R^{-\tau-1})$, so we deduce that $X^{(i)}_{IJ} = \bar{X}_{IJ}+ O(R^{-\tau})$ 
where $\bar{X}_{IJ}$ are constants.  

Finally, we use that $X^{(i)}$ is ASD with respect to the base metric $h$.   To this end, let us decompose $\bar{X}^{(i)}= \bar{X}^{(i)}_+ + \bar{X}^{(i)}_-$  where $\bar{X}_\pm^{(i)}$ are the SD/ASD parts with respect to the euclidean metric on the asymptotic end, that is, $\star_\delta \bar{X}^{(i)}_\pm = \pm \bar{X}^{(i)}_\pm$.   Then
\begin{equation}
\star_h X^{(i)} = \star_\delta X^{(i)} + O(R^{-\tau}) = \star_\delta \bar{X}^{(i)} + O(R^{-\tau}) = \bar{X}_+^{(i)}- \bar{X}_-^{(i)} + O(R^{-\tau}) \; ,
\end{equation}
where in the first equality we have used (\ref{eq_hAE}) to write $\star_h$ in terms of $\star_\delta$ and $O(R^{-\tau})$ terms and $X_{IJ}^{(i)}=O(1)$. Hence, $\star_h X^{(i)}= - X^{(i)}$ implies that $\bar{X}^{(i)}_{+ IJ}=O(R^{-\tau})$ and therefore since they are constants $\bar{X}_{+ IJ}^{(i)}=0$.  We have therefore shown that 
\begin{equation}
X^{(i)}= \bar{X}_-^{(i)}+ O(R^{-\tau})   \; ,\label{eq_XX-}
\end{equation}
where $\bar{X}^{(i)}_{IJ}$ are constant components of  ASD 2-forms on $(\mathbb{R}^4, \delta)$.  Thus, the quaternion algebra (\ref{eq_quaternion}) implies that $\bar{X}_-^{(i)}$ obey the quaternion algebra with respect to the euclidean metric $\delta$. Therefore, by performing a constant $SO(3)$ rotation on $X^{(i)}$, we may always set $\bar{X}_-^{(i)}= \Omega^{(i)}_-$,  where $\Omega^{(i)}$ are a basis of ASD 2-forms on $\mathbb{R}^4$ given in the lemma.
\end{proof}

We have shown that asymptotic flatness implies that the base space is asymptotically euclidean and the hyper-K\"ahler structure is asymptotically that of euclidean space.   However, we also know that the hyper-K\"ahler structure is of Gibbons-Hawking form. Combining these facts we deduce the following.

\begin{lemma}\label{lem_AF}
The triholomorphic Killing field can be written as
\begin{equation}
W=\tfrac{1}{2}(J_{12}+J_{34})+ O(R^{1-\tau}) \; ,
\end{equation}
where $J_{IJ} := u^J\partial_I - u^I\partial_J$. The corresponding $\mathbb{R}^3$-cartesian coordinates $x^i$ defined by (\ref{eq_xdef}) are 
\begin{align}
	x^1  =& \frac{1}{2}(u^1 u^3 + u^2 u^4) + \mathcal{O}(R^{2-\tau}) \; ,\qquad x^2  = \frac{1}{2}(u^2 u^3 - u^1 u^4) + \mathcal{O}(R^{2-\tau}) \; ,\nonumber\\
	&x^3 = \frac{1}{4} \left((u^1)^2+(u^2)^2-(u^3)^2-(u^4)^2\right)+ \mathcal{O}(R^{2-\tau}) \; .  \label{R4R3}
\end{align}
In particular, $r = \sqrt{x^ix^i}=R^2/4 + \mathcal{O}(R^{2-\tau})$ and so $r\to \infty$ in the asymptotically flat end.  
\end{lemma}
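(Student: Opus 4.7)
\noindent\emph{Proof plan.} The strategy is to determine the leading asymptotic form of $W$ from the Killing equation, the triholomorphic condition and the $U(1)$ periodicity, and then integrate $\td x^i = \iota_W X^{(i)}$ (equation (\ref{eq_xdef})) to obtain the stated formulas for the cartesian coordinates.

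From $\mathcal{L}_W h = 0$ together with the previous lemma's $h_{IJ} = \delta_{IJ} + O(R^{-\tau})$ and $\partial h = O(R^{-\tau-1})$, the leading part of $W$ on the asymptotic end solves the Killing equation for the flat metric, so $W = M^I{}_J u^J \partial_I + a^I \partial_I + O(R^{1-\tau})$ for some constant antisymmetric $M$ and constant $a^I$. The condition $\mathcal{L}_W X^{(i)} = 0$ combined with $X^{(i)} = \Omega_-^{(i)} + O(R^{-\tau})$ then forces $M$ to annihilate all three constant ASD 2-forms $\Omega_-^{(i)}$, restricting $M$ to the three-dimensional self-dual subalgebra of $\mathfrak{so}(4)$ spanned by $\tfrac12(J_{12}+J_{34})$, $\tfrac12(J_{13}-J_{24})$ and $\tfrac12(J_{14}+J_{23})$.

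Next, I use the residual coordinate freedom on the asymptotic end to standardize $W$. A compact $U(1)$ subgroup of $\mathrm{Isom}(\mathbb{R}^4) = SO(4) \ltimes \mathbb{R}^4$ necessarily fixes a point (the centroid of any orbit is invariant), so shifting $u^I$ by a constant we may place this fixed point at the origin and eliminate $a^I$; this is an asymptotic isometry compatible with Assumption~\ref{assumption1}~\ref{assumption_AF}. The remaining rotational freedom preserving the previously-standardized ASD basis is the self-dual $SU(2)$ factor of $SO(4)$, and using it we rotate $M$ to a positive multiple of $J_{12}+J_{34}$. The coefficient $\tfrac12$ is fixed by the normalization $W = \partial_\psi$ in the Gibbons-Hawking form (\ref{eq_GHmetric}): smoothness of the asymptotic end as standard $\mathbb{R}^4$ (rather than an orbifold quotient $\mathbb{R}^4/\mathbb{Z}_k$) forces the $\psi$-period to be $4\pi$, and $\tfrac12(J_{12}+J_{34})$ is precisely the generator whose flow closes with that period.

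With $W = \tfrac12(J_{12}+J_{34}) + O(R^{1-\tau})$ and $X^{(i)} = \Omega_-^{(i)} + O(R^{-\tau})$, the defining equation gives $\td x^i = \iota_{(J_{12}+J_{34})/2}\,\Omega_-^{(i)} + O(R^{1-\tau})$. A short direct calculation shows each leading 1-form is exact, e.g.\ $\iota_{(J_{12}+J_{34})/2}\,\Omega_-^{(1)} = \tfrac12 \td(u^1 u^3 + u^2 u^4)$ and similarly for $i = 2, 3$. Integrating and absorbing constants of integration into the origin of $x^i$ yields the stated polynomial leading terms, with errors $O(R^{2-\tau})$ since integrating an $O(R^{1-\tau})$ 1-form in the region $R \gg 1$ produces an $O(R^{2-\tau})$ function. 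Finally, the algebraic identity $(x^1)^2 + (x^2)^2 + (x^3)^2 = R^4/16$ on the leading quadratic polynomials (essentially the Hopf map invariant) immediately gives $r = R^2/4 + O(R^{2-\tau})$. The main obstacle is fixing the coefficient $\tfrac12$: the Killing equation and triholomorphy determine $W$ only up to an overall scale and a translation, and pinning down the scale requires the global input that the $\psi$-period must be $4\pi$ for the asymptotic end to be smooth $\mathbb{R}^4$; the remaining steps—solving the asymptotic Killing equation, identifying the triholomorphic subalgebra, and performing the short integration for $x^i$—are routine.
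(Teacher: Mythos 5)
Your proposal is correct and follows essentially the same route as the paper: identify the leading part of $W$ as a constant rotation, use triholomorphy with $X^{(i)}=\Omega_-^{(i)}+O(R^{-\tau})$ to place it in the self-dual $\mathfrak{su}(2)$, fix the coefficient $\tfrac12$ by the $4\pi$-periodicity of the orbits, and integrate $\iota_W\Omega_-^{(i)}$. The only cosmetic differences are that the paper invokes the result of Chr\'usciel on asymptotic Killing fields (rather than your Killing-equation-plus-fixed-point argument for eliminating the translation part) and normal-forms $W$ to $\alpha(J_{12}+J_{34})+\beta(J_{12}-J_{34})$ before using triholomorphy to kill $\beta$, rather than conjugating within the self-dual $SU(2)$ afterwards.
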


\begin{proof}
The axial Killing field $W$ can be written as an $\mathbb{R}$-linear 
combination of rotational Killing fields of $\mathbb{R}^4$ up to $\mathcal{O}(R^{-\tau})$ corrections \cite{chrusciel_killing_2005}.  Therefore, without loss of generality (by rotating 
$u^I$ coordinates if necessary) we can write 
\begin{equation}
	W = \alpha (J_{12}+J_{34}) + \beta (J_{12}-J_{34}) + \mathcal{O}(R^{1-\tau}) \; ,
\end{equation}
where $\alpha,\beta$ constants. One can check that $J_{12}+J_{34}$ preserves 
$\Omega_-^{(i)}$, while $J_{12}-J_{34}$ rotates $\Omega_-^{(1)}$ and $\Omega_-^{(2)}$ into the one another. Hence, from (\ref{eq_XAF}),
$W$ is triholomorphic if and only if $\beta=0$, and requiring $4\pi$-periodicity of orbits of $W$ fixes $\alpha = 1/2$. 

For the cartesian coordinates of the Gibbons-Hawking metric we have
\begin{equation}
	\td x^i = \iota_W X^{(i)} = \frac{1}{2}\iota_{J_{12}}\Omega_-^{(i)} + \frac{1}{2}\iota_{J_{34}}\Omega_-^{(i)} + \mathcal{O}(R^{1-\tau}) \; ,
\end{equation}
and a straightforward computation yields (\ref{R4R3}) upon integration.
\end{proof}

From this we deduce a number of important corollaries.

\begin{corollary}\label{cor_AFGH} 
The asymptotic end of $(B, h)$ is diffeomorphic to an $S^1$-fibration over $\mathbb{R}^3 \backslash B^3$ where $S^1$ is the orbits of $W$ and $B^3$ is a 3-ball in $\mathbb{R}^3$.
\end{corollary}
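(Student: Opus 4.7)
The plan is to recognise the map $\Phi: u^I \mapsto x^i(u^I)$, defined on the asymptotic end of $B$ by (\ref{R4R3}), as an asymptotic perturbation of the standard Hopf fibration $\mathbb{R}^4 \setminus \{0\} \to \mathbb{R}^3 \setminus \{0\}$, and to transfer the bundle structure from the leading order. Recall from the previous section that $W(x^i) = 0$, so $\Phi$ is $W$-invariant; moreover the asymptotic end of $B$ is diffeomorphic to $\mathbb{R}^4 \setminus B^4$ via the cartesian coordinates $u^I$.

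First I would check that at leading order $\Phi$ is indeed (half) the Hopf map: a direct computation of $(x_0^1)^2 + (x_0^2)^2 + (x_0^3)^2 = R^4/16$ from the explicit forms in (\ref{R4R3}) shows that the leading-order map is a submersion onto $\mathbb{R}^3 \setminus \{0\}$ whose fibers are precisely the orbits of $\tfrac{1}{2}(J_{12} + J_{34})$. The subleading corrections are $O(R^{2-\tau})$, and by the asymptotic flatness assumption on derivatives of the metric (Assumption \ref{assumption1} \ref{assumption_falloff}), together with (\ref{eq_XAF}) and (\ref{eq_xdef}), the Jacobian obeys $\partial_J x^i = \partial_J x^i_0 + O(R^{1-\tau})$. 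Since $\partial_J x^i_0 = O(R)$, for $R$ sufficiently large the Jacobian $\partial_J x^i$ has full rank $3$, so $\Phi$ is a submersion on the end $\{R > R_0\}$ for some $R_0$.

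Next, I would establish that the $U(1)$ action generated by $W$ is free on this end. By Lemma \ref{lem_AF}, $W = \tfrac{1}{2}(J_{12}+J_{34}) + O(R^{1-\tau})$, whose leading term has $\delta$-norm squared equal to $R^2/4$, nowhere vanishing for $R > 0$. Since $h = \delta + O(R^{-\tau})$, we get $h(W,W) \geq R^2/8$ for $R$ large enough, so $W$ has no fixed points on the asymptotic end; a free $U(1)$-action therefore follows from periodicity. Because $W$ is tangent to the fibers of $\Phi$ (as $W(x^i)=0$) and the fibers are 1-dimensional (by the submersion property), each fiber is a disjoint union of $W$-orbits, all of which are circles of period $4\pi$. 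To show the fibers are connected (i.e. single orbits), I would use that at leading order each Hopf fiber is a single $S^1$; since $\Phi$ restricted to $\{R > R_0\}$ is a proper submersion onto its image (by the relation $r = R^2/4 + O(R^{2-\tau})$, large $R$ corresponds to large $r$ and vice versa), Ehresmann's theorem gives a smooth fiber bundle structure, and connectedness of fibers persists under the small perturbation.

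Finally, the image $\Phi(\{R > R_0\})$ contains all $x \in \mathbb{R}^3$ with $|x| > r_0$ for some $r_0 > 0$, by continuity of $r$ as a function of $R$ and the fact that $r \to \infty$ as $R \to \infty$; restricting to $\{r > r_0\}$ gives the required identification of the asymptotic end of $B$ as an $S^1$-bundle over $\mathbb{R}^3 \setminus B^3$. The main technical obstacle is the connected-fiber claim, since we have bounds only on the metric and its first derivatives, not on higher derivatives or a uniform $C^1$-closeness on all scales; however, rescaling $u^I \mapsto u^I/R$ turns the asymptotic region into a neighbourhood of a point and reduces the argument to an application of the implicit function theorem to a smooth perturbation of the Hopf map, which is the cleanest way to make the argument rigorous.
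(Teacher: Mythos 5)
Your proof is correct, but it takes a genuinely different route from the paper's. The paper does not argue via the map $u^I\mapsto x^i$ as a perturbed Hopf submersion; instead it works from the Gibbons--Hawking form of the metric directly: Corollary \ref{cor_AF_H} gives $H=h(W,W)^{-1}=1/r+\mathcal{O}(r^{-2})$, the 1-form $\chi$ is integrated to $(\tilde\chi_0+\cos\theta)\td\phi+\mathcal{O}(r^{-1})$, and the leading-order metric is recognised as flat $\mathbb{R}^4$ in Euler-angle coordinates, with the identification lattice (\ref{eq_lattice_original}) of $(\psi,\phi)$ fixed by matching to the $S^3$ at spatial infinity. The fibration statement is then read off from the chart $(\psi,x^i)$. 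Your approach buys a more self-contained topological justification of exactly the point the paper treats as ``manifest'' --- that the $W$-orbits are the \emph{whole} fibres of $\bm x$ and that the end is globally (not just locally) a circle bundle over $\mathbb{R}^3\setminus B^3$. What it does not deliver is the explicit lattice data ($\psi$ of period $4\pi$, $\phi$ of period $2\pi$, $\tilde\chi_0$ odd), which the paper's coordinate computation produces as a by-product and which is used repeatedly downstream (Lemma \ref{lemma_Htop}, Section \ref{ssec_fixed}); so your argument proves the corollary as stated but does not replace that part of the surrounding discussion.

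Two small points of care. First, Ehresmann only gives local triviality with a fixed compact $1$-dimensional fibre, which a priori could be several circles; the single-circle conclusion rests entirely on your final rescaling step, which does close the gap: after $v^I=u^I/R_1$ the rescaled map is uniformly $C^1$-close to the Hopf map on an annulus, so each fibre lies in a tubular neighbourhood of a Hopf fibre and is a graph over it by the implicit function theorem, hence connected. Second, ``no fixed points plus periodic orbits implies a free $U(1)$-action'' is not quite right (exceptional orbits with discrete isotropy are not excluded at this stage --- the paper only rules them out later, in Lemma \ref{lemma_orbit} and Appendix \ref{app_exceptional}); fortunately your conclusion does not need freeness, only that each fibre of $\bm x$ is a single closed orbit, which your connectedness argument supplies.
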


\begin{corollary}\label{cor_AF_H}
On the asymptotic end the harmonic function associated to the Gibbons-Hawking base takes the form
\begin{equation}
	H = \frac{1}{r}+ \sum_{l\ge1, m} h_{lm}r^{-l-1}Y^m_l \; ,   \label{eq_AF_H}
\end{equation}
where $Y^m_l$ are the spherical harmonics and $h_{lm}$ are constants.
\end{corollary}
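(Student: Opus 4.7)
The plan is to first pin down the leading $1/r$ behaviour of $H$ by comparing the Gibbons-Hawking form of the base metric with the asymptotic euclidean structure from Lemma \ref{lem_AF}, and then to deduce the full expansion via the standard multipole decomposition of a harmonic function on an exterior region of $\mathbb{R}^3$.

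For the first step, the key observation is that in the Gibbons-Hawking chart \eqref{eq_GHmetric}, the triholomorphic Killing field $W=\partial_\psi$ satisfies $h(W,W)=H^{-1}$. On the other hand, Lemma \ref{lem_AF} gives $W=\tfrac{1}{2}(J_{12}+J_{34})+O(R^{1-\tau})$, so combining this with the asymptotic form $h_{IJ}=\delta_{IJ}+O(R^{-\tau})$ from \eqref{eq_hAE} and the fact that the rotations $J_{12}$ and $J_{34}$ are orthogonal in the euclidean metric yields
\begin{equation}
h(W,W)=|W|^2_\delta+O(R^{2-\tau})=\frac{R^2}{4}+O(R^{2-\tau}).
\end{equation}
Substituting the relation $r=R^2/4+O(R^{2-\tau})$ from Lemma \ref{lem_AF} and inverting then gives
\begin{equation}
H=\frac{1}{r}+O(r^{-1-\tau/2}) \qquad \text{as } r\to\infty.
\end{equation}

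For the second step, equation \eqref{eq_chieq} together with $d\star_3 dH=0$ (obtained by applying $d\star_3$ to \eqref{eq_chieq}) shows that $H$ is harmonic on the $\mathbb{R}^3$ base of the Gibbons-Hawking structure. By Corollary \ref{cor_AFGH}, this base covers the exterior of a three-ball, so $H$ is a harmonic function on $\mathbb{R}^3\setminus B^3$ that tends to zero at infinity. The standard separation of variables in spherical coordinates then produces the expansion
\begin{equation}
H=\sum_{l\ge 0,\,m}h_{lm}\,r^{-l-1}Y^m_l,
\end{equation}
since the growing harmonics $r^l Y^m_l$ are ruled out by decay at infinity. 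Matching with the leading behaviour from the previous step fixes the monopole coefficient so that the $l=0$ contribution is exactly $1/r$, yielding the claimed form.

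The main step is the first one: translating asymptotic flatness of the spacetime metric into the leading behaviour of the Gibbons-Hawking harmonic function. This is where the explicit asymptotic form of $W$ established in Lemma \ref{lem_AF} is essential. Once $H\sim 1/r$ is in hand, the multipole expansion is a routine consequence of the theory of harmonic functions on exterior domains in $\mathbb{R}^3$.
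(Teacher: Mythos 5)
Your proposal is correct and follows essentially the same route as the paper: both identify $H=h(W,W)^{-1}$, use Lemma \ref{lem_AF} and the asymptotically euclidean form of $h$ to get $h(W,W)=\tfrac{1}{4}R^2(1+O(R^{-\tau}))=r(1+O(r^{-\tau/2}))$, and then invoke harmonicity on $\mathbb{R}^3\setminus B^3$ to fix the subleading multipole terms. Your write-up is just a slightly more explicit version of the paper's argument.
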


\begin{proof}
For the second corollary observe that the harmonic function of the Gibbons-Hawking metric (\ref{eq_GHmetric}) can be written invariantly on the base as $H=  h(W, W)^{-1}$ where recall in these coordinates $W=\partial_\psi$.  On the other hand, from Lemma \ref{lem_AF} we can compute the norm of $W$ and find $h(W, W)= \tfrac{1}{4} R^2(1+ O(R^{-\tau})) = r (1+O(r^{-\tau/2}))$, which establishes the claimed leading term. The form of the subleading terms follows from the fact $H$ is harmonic on $\mathbb{R}^3 \backslash B^3$.
\end{proof}

\noindent {\bf Remark.} Our Assumption \ref{assumption1} \ref{assumption_AF} implies  that the invariants $g(V,W)=-f^2 \iota_W \omega=  O(R^{1-\tau})$ and $g(W, W)= f^{-1} h(W,W)- f^2 (\iota_W\omega)^2= \tfrac{1}{4}R^2(1+ O(R^{-\tau}))$ and so in particular $N= \tfrac{1}{4}R^2(1+ O(R^{-\tau}))$.  Furthermore, it also implies the Maxwell field (\ref{eq_Ftimelike}) $F_{\mu\nu}= O(R^{-\tau-1})$ in cartesian coordinates.  From (\ref{eq_magneticpot}),  it follows that the magnetic potential is $\Psi=\Psi_0+ O(R^{1-\tau})$ where $\Psi_0$ is a constant.  Using (\ref{eq_harmonicinvariant}) a computation reveals that $\lim_{R\to\infty}K=0$,  $\lim_{R\to\infty}L=1$ and $M = -\frac 32 \Psi_0+ O(R^{1-\tau})=-\frac 32 \Psi_0+O(r^{1/2-\tau/2})$. Harmonicity on $\mathbb{R}^3$ then implies
\begin{equation}
L= 1+ O(r^{-1}) \; , \qquad K= O(r^{-1}) \; , \qquad M =m+ O(r^{-1}) \; ,  \label{eq_AF_KLM}
\end{equation}
for some constant $m$.\\

The leading term in $H=1/r + \mathcal{O}(r^{-2})$ gives euclidean space. To see this, first we integrate for the 1-form $\chi$ using (\ref{eq_chieq}) 
which gives, up to a gauge transformation,
\begin{equation}
\chi= ( \tilde \chi_0+ \cos \theta ) \td \phi + \mathcal{O}(r^{-1}) \; ,   \label{eq_chiflat}
\end{equation}
where $(r, \theta, \phi)$ are spherical polar coordinates on $\mathbb{R}^3$ and $\tilde\chi_0$ is an integration constant. Under a coordinate change 
$(\psi, \phi)\to (\psi + c\phi, \phi)$, the constant $\tilde\chi_0\to\tilde\chi_0-c$ so we can fix it to any value that we like.  It turns  out that a convenient choice, which we will make, is to  fix $\tilde\chi_0$ to be an odd integer.
In particular, $\tilde\chi_0=\pm 1$ removes the Dirac string singularity in $\chi$ on the lower (upper) half $z$-axis on $\mathbb{R}^3$.

To determine the identification lattice of the angular directions let us define new coordinates
\begin{equation}
\tilde\psi= \psi+ \tilde\chi_0 \phi \; , \qquad \tilde{\phi}= \phi \; ,   \label{eq_eulerinfinity}
\end{equation}
and $r= \tfrac{1}{4}R^2$. Then the Gibbons-Hawking metric to leading order in $R$ becomes
\begin{equation}
h_0= \td R^2+ \tfrac{1}{4} R^2 ( ( \td\tilde \psi+ \cos\theta \td \tilde\phi)^2+ \td \theta^2+ \sin^2\theta \td \tilde \phi^2) \; .
\end{equation}
This is isometric to $\mathbb{R}^4$ in spherical coordinates where the radial coordinate of $\mathbb{R}^4$ is given by 
$R$   and $(\theta, \tilde\phi, \tilde \psi)$ are Euler-angles of $S^3$ with their identification lattice generated by
\begin{align}
	P:\quad (\tilde\psi,\tilde \phi)\sim(\tilde \psi+4\pi, \tilde\phi) \; , && R:\quad(\tilde\psi, \tilde\phi)\sim(\tilde\psi +2\pi, \tilde \phi+2\pi) \; .
	\label{eq_lattice}	
\end{align}
In the original coordinates, with $\tilde{\chi}_0$ an odd integer, this identification is equivalent to 
\begin{align}
	P:\quad (\psi,\phi)\sim(\psi+4\pi,\phi) \; , && R:\quad(\psi,\phi)\sim(\psi, \phi+2\pi) \; ,\label{eq_lattice_original}
\end{align}
that is, the angles are independently periodic so $\phi$ can be thought of as the standard azimuthal angle on $\mathbb{R}^3$. In these coordinates 
it is thus manifest that the asymptotic end is diffeomorphic to an $S^1$-fibration over $\mathbb{R}^3\backslash B^3$ where $\psi$ is a coordinate on $S^1$ and $x^i$ are coordinates on $\mathbb{R}^3$.

\subsection{Near-horizon geometry}\label{ssec_horizon}

By Lemma \ref{lemma_harmonicwelldefined} the harmonic functions are smooth at generic points of $\llangle \mathcal{M} \rrangle$, so the only potentially singular behaviour 
occurs at the horizon and fixed points of $W$.  In this section we will investigate the constraints imposed by a smooth black hole horizon on a supersymmetric solution with axial symmetry satisfying our above assumptions. In particular, we will deduce that connected components of the horizon correspond to points in the $\mathbb{R}^3$ coordinates of the Gibbons-Hawking base and that the harmonic functions have at most simple poles at these points. We will heavily rely on the known classification of near-horizon geometries~\cite{reall_higher_2004} which uses Gaussian null coordinates adapted to the horizon. In particular, our strategy will be to impose that the spacetime near the horizon admits an axial Killing field $W$ that preserves the 2-forms $X^{(i)}$, and then deduce the cartesian coordinates $x^i$ of the Gibbons-Hawking base in terms of Gaussian null coordinates using (\ref{eq_xdef}).

The supersymmetric Killing field $V$ must be tangent to the event horizon $\mathcal{H}$ and therefore (\ref{eq_fV}) implies it must be null on the horizon and thus tangent to its null generators. Furthermore, (\ref{eq_fV}) implies the horizon is extremal, that is, $\td (g(V, V))=0$ at the horizon.  Thus, $\mathcal{H}$ is an extremal Killing horizon of $V$. Next, by our Assumption \ref{assumption1} \ref{assumption_H} each component of the horizon $\mathcal{H}$ has a smooth cross-section $H$ transverse to the orbits of $V$. Let $(y^A)$ be coordinates on $H$.
Then, following~\cite{reall_higher_2004}, in a neighbourhood of the horizon $\mathcal{H}$ we may introduce Gaussian null coordinates $(v, \lambda, y^A)$ adapted so $V=\partial_v$ where $U=\partial_\lambda$ 
is tangent to affine null geodesics transverse to the horizon that are normalised so $g(V, U)=1$ and synchronised so the horizon is at $\lambda = 0$ (note $U$ is past-directed and $\lambda>0$ is outside the horizon). It can be shown that in such a neighbourhood of the horizon the metric  takes the form
\begin{equation}
	g =  -\lambda^2 \Delta^2 \td v^2 + 2 \td v \td\lambda +2\lambda h_A\td v \td y^A +\gamma_{AB}\td y^A\td y^B \; , \label{eq_gaussianNullCoords}
\end{equation}
where  the metric components are smooth functions of $(\lambda, y^A)$ for sufficiently small  $\lambda$.  The quantities $\Delta, h_A, \gamma_{AB}$ define components of a function, 1-form and Riemannian metric on the 3d surfaces of constant $(v, \lambda)$ which include the horizon cross-sections $H$ ($v=\text{constant}, \lambda=0$).   We will denote any quantity evaluated at $\lambda=0$ by $\mathring \Delta := \Delta|_{\lambda=0}$, $\mathring h_A:= h_A|_{\lambda=0}$, $\mathring\gamma_{AB}:= \gamma_{AB}|_{\lambda=0}$ etc, so in particular $(H, \mathring \gamma)$ is a 3d Riemannian manifold.

As is typical for spacetimes containing extremal horizons,  the horizon corresponds to an asymptotically cylindrical end of the space orthogonal to the orbits of $V$.  We recall the argument for this in the present context.
	The orbit space metric is $\hat{g}_{\mu\nu}:= g_{\mu\nu}- \frac{V_\mu V_\nu}{g(V, V)}$ in Gaussian null coordinates is 
	\begin{equation}
		\hat{g}= \frac{(\td \lambda + \lambda h_A\td y^A)^2}{\Delta^2\lambda^2}+\gamma_{AB}\td y^A\td y^B \; .   \label{eq_AC}
	\end{equation}
	When we approach the horizon on a geodesic $y^A=\const$, the proper distance $\int ^0 \td \lambda/(\lambda \Delta)$ diverges at 
	least logarithmically, so it takes infinite proper distance to reach the horizon in the orbit space.  Thus, the horizon corresponds to an asymptotic end diffeomorphic  to $\mathbb{R}\times H$.

Now we consider the axial Killing field $W$ near the horizon.  It must be tangent to the event horizon $\mathcal{H}$ and therefore in Gaussian null coordinates $W^\lambda=0$ at $\lambda=0$. Furthermore, $[W, V]=0$ implies all components of $W$ are $v$-independent. Then, by evaluating $\mathcal{L}_W g=0$ on the horizon it follows that $\mathcal{L}_{\mathring W} \mathring\gamma=0$ where $\mathring W:= \mathring W^A \partial_{y^A}$, that is, $\mathring W$ is a Killing field of $(H, \mathring \gamma)$.   Therefore, by smoothness of $W$ at the horizon, we may write $W$ in Gaussian null coordinates as
\begin{equation}
W= \mathring W^A \partial_{y^A} + \lambda \tilde{W}_\lambda \partial_\lambda + O(\lambda) \partial_{y^A}  + W^v \partial_v \; ,   \label{eq_WnearH}
\end{equation}
where $\tilde W_\lambda$ is a smooth function at the horizon.
In fact, the $v$-component of $W$ does not feature in the subsequent calculations, and therefore we do not need its detailed form here. Note that $W$ cannot vanish identically on the horizon, because if it did, it then follows $W$ vanishes everywhere. This is because, if $W=0$ on $\mathcal{H}$ then all tangential derivatives of $W$ 
	also vanish on $\mathcal{H}$, hence by  Killing's equation all first derivatives of $W$ must vanish on $\mathcal{H}$, which implies that $W$ vanishes everywhere.

We now prove one of the main results of this section which relates the Gibbons-Hawking coordinates to Gaussian null coordinates.
\begin{lemma}
	The cartesian coordinates $x^i$ defined by (\ref{eq_xdef}) are constant on a connected component of the horizon. Furthermore, the euclidean distance from a connected component of the horizon $x^i=a^i$,  is
	$r:=| \bm x - \bm a|= c\lambda + \mathcal{O}(\lambda^2)$ for some positive constant $c$.
	\label{lemma_GNC_cart}
\end{lemma}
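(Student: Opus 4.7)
The strategy is to work in Gaussian null coordinates $(v,\lambda,y^A)$ adapted to the horizon as in (\ref{eq_gaussianNullCoords}) and exploit the defining relation $\td x^i=\iota_W X^{(i)}$ together with the bilinear identities (\ref{eq_VX})--(\ref{eq_XX}). Since $V=\partial_v$, the identity $\iota_V X^{(i)}=0$ gives $X^{(i)}_{v\mu}=0$ throughout a neighbourhood of $\mathcal{H}$, so at $\lambda=0$ only $X^{(i)}_{\lambda A}$ and $X^{(i)}_{AB}$ may be nonzero. The covariant components of $V$ on $\mathcal{H}$ reduce to $V_\lambda|_{\lambda=0}=1$, and by (\ref{eq_WnearH}) $W|_{\lambda=0}=\mathring W^A\partial_{y^A}+W^v\partial_v$ with $W^\lambda|_{\lambda=0}=0$.

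The first claim reduces to showing $X^{(i)}_{AB}|_{\lambda=0}=0$. Evaluating (\ref{eq_XX}) on the horizon at $(\mu,\nu)=(A,B)$ makes the right-hand side vanish, since $f=0$ and $V_A=V_B=0$. On the left-hand side the $\gamma=v,\lambda$ contributions drop by $X^{(i)}_{v\mu}=0$, reducing it to the algebraic identity $X^{(i)}_{AC}\,\mathring\gamma^{CD}\,X^{(j)}_{DB}=0$ at $\lambda=0$. Taking $i=j$, contracting with $\mathring\gamma^{AB}$, and using antisymmetry of $X^{(i)}_{AB}$ together with positive-definiteness of $\mathring\gamma$ forces $X^{(i)}_{AB}|_{\lambda=0}=0$. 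Restricting $\td x^i=\iota_W X^{(i)}$ to horizon-tangent vectors $\partial_v$ and $\partial_{y^A}$ then gives zero, since every surviving contribution involves $X^{(i)}_{v\mu}$, $X^{(i)}_{AB}|_{\lambda=0}$, or $W^\lambda|_{\lambda=0}$. Hence each $x^i$ is locally constant on $\mathcal{H}$, taking some value $a^i\in\mathbb{R}^3$ on each connected component.

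For the distance expansion I Taylor expand in $\lambda$ to obtain $x^i-a^i=\lambda\,(\iota_W X^{(i)})_\lambda|_{\lambda=0}+\mathcal{O}(\lambda^2)=-\lambda\,\mathring W^A\alpha^{(i)}_A+\mathcal{O}(\lambda^2)$, where $\alpha^{(i)}_A:=X^{(i)}_{\lambda A}|_{\lambda=0}$. Evaluating (\ref{eq_XX}) on the horizon at $(\mu,\nu)=(\lambda,\lambda)$ and using $V_\lambda=1$ yields, after the same reduction of index sums, $\alpha^{(i)}_A\mathring\gamma^{AB}\alpha^{(j)}_B=\delta_{ij}$; that is, $\{\alpha^{(i)}\}_{i=1}^{3}$ forms an orthonormal coframe on $(H,\mathring\gamma)$. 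Consequently
\begin{equation*}
r^2=\sum_i(x^i-a^i)^2=\lambda^2\,\mathring\gamma_{AB}\mathring W^A\mathring W^B+\mathcal{O}(\lambda^3),
\end{equation*}
so the leading coefficient is $c=|\mathring W|_{\mathring\gamma}=\sqrt{g(W,W)}|_{\mathcal{H}}$.

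The principal remaining obstacle is the positivity $c>0$, i.e., that $\mathring W$ is nowhere zero on $H$. If $\mathring W$ vanished at some $p\in H$ then $W|_p$ would be a multiple of the null $V|_p$; combining $[V,W]=0$ with the compactness of $W$-orbits and the non-compactness of $V$-orbits in $\llangle\mathcal{M}\rrangle$ forces the proportionality constant (hence $W^v$) to vanish at $p$, so $W|_p=0$ and $p$ is a fixed point of $W$ sitting on the horizon. This is then excluded by Assumption \ref{assumption2} \ref{assumption_span} in a boundary-limit form, since $V$ is null on $\mathcal{H}$ and no timelike linear combination of $V,W$ could persist as $p$ is approached from within $\llangle\mathcal{M}\rrangle$. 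This establishes $r=c\lambda+\mathcal{O}(\lambda^2)$ with $c>0$, completing the lemma.
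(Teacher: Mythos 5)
Your proof of the first claim ($x^i$ constant on each horizon component) is correct and takes a genuinely different, more elementary route than the paper. The paper substitutes the explicit near-horizon form $X^{(i)}=\td\lambda\wedge Z^{(i)}+\lambda(\dots)$ from \cite{reall_higher_2004}, splits into three cases according to whether $\mathring\Delta$ and $\mathring h$ vanish, and in each case identifies $\mathring W$ among the Killing fields of $(H,\mathring\gamma)$ before imposing $\td\iota_WX^{(i)}=0$. You instead extract $X^{(i)}_{v\mu}=0$ and $X^{(i)}_{AB}|_{\lambda=0}=0$ directly from the bilinear identities (\ref{eq_VX})--(\ref{eq_XX}) at $f=0$, and then only need tangency of $W$ to $\mathcal{H}$; this avoids the case analysis entirely and is a nice simplification. (Your algebra is consistent with the paper's (\ref{eq_complexstr}); note only that (\ref{eq_XX}) taken at face value in the stated signature produces $\alpha^{(i)}\cdot\alpha^{(j)}=-\delta_{ij}$, so the sign conventions there deserve a second look, though the orthonormality you assert is the correct statement, cf.\ the normalisation of $Z^{(i)}$ below (\ref{eq_complexstr}).) Your identification $c=|\mathring W|_{\mathring\gamma}$ also agrees with all three cases of the paper's computation.

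However, there is a genuine gap in the final step, which is precisely where the paper does the real work. The lemma needs $c>0$, i.e.\ that $\mathring W$ is nowhere zero on $H$ (this is Corollary \ref{cor_Wnonzero}, and it is essential downstream: Lemma \ref{lemma_harmonic_horizon} needs $N_0>0$ on all of $H$). Your exclusion of a zero of $W$ on the horizon via ``Assumption \ref{assumption2} \ref{assumption_span} in a boundary-limit form'' does not work: that assumption is imposed only on the open set $\llangle\mathcal{M}\rrangle$, and the equivalent invariant statement is $N>0$ away from zeros of $W$ in the DOC. Near the horizon $N=\mathcal{O}(\lambda^2)$ tends to zero in any case, so a point $p\in H$ with $W_p=0$ produces no contradiction with $N>0$ just outside: nothing in the limit forbids it. A priori a Killing field of a compact Riemannian 3-manifold can vanish on a circle, so isolated or one-dimensional zero sets of $\mathring W$ must be ruled out by some additional input. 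The paper's input is triholomorphicity at order $\lambda^0$ in the $\sigma_L^{(q)}\wedge\td\lambda$ (resp.\ $\td\lambda\wedge\td\psi$) components of $\td\iota_WX^{(i)}$, which forces $\mathring W$ to be proportional to $\mathring h$ (or a right-invariant field, or $\partial_\psi$) --- Killing fields of \emph{constant norm} that are either identically zero or nowhere zero on $H$; the identically-zero option is then excluded by the Killing-equation rigidity argument. Your proof uses the closedness of $\iota_WX^{(i)}$ only implicitly (through the existence of the functions $x^i$) and never extracts these constraints on $\mathring W$. The same omission means you do not establish that $c=|\mathring W|$ is \emph{constant} on $H$ (a Killing field need not have constant norm), whereas the lemma asserts a single constant $c$ per horizon component; this constancy also drops out of the paper's case analysis. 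The first of these two points is the essential one to repair.
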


\begin{proof}
In the following we will use the notation 
and results of \cite{reall_higher_2004}.  In the neighbourhood of $\mathcal{H}$ the hyper-K\"ahler 2-forms can be written in Gaussian null coordinates as 
\begin{equation}
	X^{(i)}= \td\lambda \wedge Z^{(i)} + \lambda(h\wedge Z^{(i)}-\Delta \star_3Z^{(i)}) \; , \label{eq_complexstr}
\end{equation}
where $h = h_A\td y^A$, $\star_3$ is the Hodge-dual with respect to $\gamma=\gamma_{AB}\td y^A \td y^B$, and $Z^{(i)}=Z^{(i)}_A\td y^A$ satisfy
\begin{align}
	\star_3Z^{(i)}=\frac{1}{2}\epsilon_{ijk}Z^{(j)}\wedge Z^{(k)} \;, &&Z^{(i)}\cdot Z^{(j)}=\delta_{ij}   \; ,
\end{align}
where the inner product $\cdot$ is defined by $\gamma$.
Since $V$ is Killing and preserves $X^{(i)}$,  the $Z^{(i)}$ are all preserved by its flow. Furthermore, in~\cite{reall_higher_2004} it was shown that compactness of $H$ (Assumption \ref{assumption1} \ref{assumption_H}) implies that on each connected component of the horizon $\mathring\Delta$ is constant,  $\mathring h^A$ is a Killing vector of $(H, \mathring \gamma)$,  $\mathring h^2:=\mathring h^A \mathring h_A$ is constant,  $\mathcal{L}_{\mathring h} \mathring Z^{(i)}=0$,\footnote{
	Throughout this section indices $A, B...$ of $Z^{(i)}, h$ etc are raised and lowered with  $\gamma_{AB}$ (and its inverse).
} and in the neighbourhood of $\mathcal{H}$ \footnote{
	Compared to \cite{reall_higher_2004}, here we do not restrict these equations to the horizon. 
} 
\begin{align}
	\td h = \Delta \star_3h + \mathcal{O}(\lambda)\td y^A + \mathcal{O}(1)\td\lambda \; , && \td\star_3h = \mathcal{O}(\lambda)\td y^A + \mathcal{O}(1)\td\lambda \; .
\end{align}
Further analysis depends on whether $\Delta$ and $h$ vanish on the horizon. There are three cases to consider which we examine in detail below.

\paragraph{Case 1a: $\mathring{\Delta} \neq0$ and $\mathring{h} \neq0$.}  One 
can define the following functions and 1-forms on (a possibly smaller) neighbourhood of $\mathcal{H}$,
\begin{align}
	\hat x^{(i)}:= \frac{1}{\sqrt{h^2}}h\cdot Z^{(i)} \; , && \sigma_L^{(i)}:=\frac{\Delta^2+h^2}{\Delta}Z^{(i)}+\frac{1}{\Delta}\td\left(h\cdot Z^{(i)}\right) \; , \label{eq_sigmadef}
\end{align}
which satisfy
\begin{equation}
	\td\sigma_L^{(i)}=-\frac{1}{2}\epsilon_{ijk}\sigma_L^{(j)}\wedge\sigma_L^{(k)}+\mathcal{O}(\lambda)\td y^A\wedge \td y^B + \mathcal{O}(1)\td\lambda\wedge \td y^A  \label{eq_sigmaalg}
\end{equation}
and $\hat x^{(i)}\hat x^{(i)}=1$.    Now define vector fields $\xi_L^i$ by  $\langle \sigma_L^i , \xi_L^j\rangle =\delta_{ij}$ and $\xi_L^i(v)=0$ and $\xi_L^i(\lambda)=0$, so in particular $\mathring \xi_L^i$ are the dual vectors to $\mathring \sigma_L^i:= \mathring (\sigma_L^i)_A \td y^A$ (note by our definitions $\sigma_L^i$ has a $\lambda$-component but $\xi_L^i$ does not). The near-horizon analysis~\cite{reall_higher_2004} shows that $\mathring \xi_L^i$ are Killing vector fields of $(H, \mathring \gamma)$ that commute with $\mathring h$.
It follows that in the generic case ($\mathring h\neq0$), the geometry of $H$ is locally isometric to that of a squashed 
three-sphere $S^3$ and the Killing fields of $H$ are exactly $\mathring h$ and $\mathring \xi_L^{(i)}$, which generate a $U(1)\times SU(2)$ isometry.   Therefore,  we deduce that the axial Killing field $W$ restricted to the horizon $\mathring W$ must be an $\mathbb{R}$-linear combination of $\mathring h, \mathring \xi_L^i$.  Hence,  from (\ref{eq_WnearH}) we can write $W$ in some neighbourhood of the horizon as
\begin{equation}
	W = W_0 h + W_i \xi_L^{(i)} + \lambda W_\lambda(y) \partial_\lambda + \mathcal{O}(\lambda)\partial_{y^A} + \mathcal{O}(\lambda^2)\partial_\lambda \; ,\label{eq_Whorizon}
\end{equation}
where $W_0, W_i$ are constants, $W_\lambda$ is a function of $y^A$, and we have adjusted the subleading terms as necessary. 

We are now in a position to compute $\iota_W X^{(i)}$ near the horizon and hence use (\ref{eq_xdef}) to determine the Gibbons-Hawking coordinates $x^i$ in terms of Gaussian null coordinates. For this it is useful to use the following identities:
\begin{equation}
	\td \hat x^{(i)} = \epsilon_{ijk}\hat x^{(j)}\sigma_L^{(k)}+\mathcal{O}(\lambda)\td y^A + \mathcal{O}(1)\td\lambda  \label{eq_dxi_sigma}
\end{equation}
and 
\begin{equation}
	h = \frac{\Delta\sqrt{h^2}}{\Delta^2 + h^2}\hat x^{(i)}\sigma_L^{(i)} + \mathcal{O}(\lambda)\td y^A+O(1) \td \lambda \;. \label{eq_h_sigma}
\end{equation}
Then,  using definitions (\ref{eq_sigmadef}), the expression for the complex structures (\ref{eq_complexstr}), 
and the form of $W$ in (\ref{eq_Whorizon}), we find after a tedious calculation  that
\begin{align}
	\iota_WX^{(i)}=&-\td\left(W_0\sqrt{h^2}\lambda \hat x^{(i)}\right) +  W_j \left(-\frac{\Delta}{\Delta^2+h^2}\delta_{ij}\td\lambda+\frac{\sqrt{h^2}}{\Delta^2+h^2}\epsilon_{ipj}\hat x^{(p)}\td\lambda-\frac{\lambda\Delta}{\Delta^2+h^2}\epsilon_{ijk}\sigma_L^{(k)}\right)\nonumber \\
	&+\lambda W_\lambda\left(\frac{\Delta}{\Delta^2+h^2}\sigma_L^{(i)}-\frac{\sqrt{h^2}}{\Delta^2+h^2}\epsilon_{ijk}\hat x^{(j)}\sigma_L^{(k)}\right)+\mathcal{O}(\lambda^2)\td y^A + \mathcal{O}(\lambda)\td\lambda \; . \label{eq_dxiexpansion}
\end{align}
Thus, taking the exterior derivative of (\ref{eq_dxiexpansion}) we get
\begin{align}
	\td\iota_WX^{(i)} = &\left[\delta_{iq}\left(\frac{\sqrt{h^2}}{\Delta^2+h^2}W_j\hat x^{(j)}-\frac{\Delta}{\Delta^2+h^2}W_\lambda\right)\right.\nonumber\\
	&\qquad+\epsilon_{iqj}\left(-\frac{\Delta}{\Delta^2+h^2}W_j -\frac{\sqrt{h^2}}{\Delta^2+h^2}W_\lambda\hat x^{(j)}\right)\nonumber\\
	&\qquad\qquad\left.-\frac{\sqrt{h^2}}{\Delta^2+h^2}W_q\hat x^{(i)}+\mathcal{O}(\lambda)\right]\left(\sigma_L^{(q)}\wedge \td\lambda\right) + \mathcal{O}(\lambda)\td y^A\wedge \td y^B  \;.
\end{align}
Now, recall that triholomorphicity of $W$ implies $\td \iota_W X^{(i)}=0$. 
The $\sigma_L^{(q)}\wedge \td\lambda$  are linearly independent on the horizon, and by continuity, also on some neighbourhood of the horizon, so the coefficient of these terms must vanish for 
all $i, q$. Contracting the coefficient of these terms with $\delta_{iq}$ and $\frac{1}{2}\epsilon_{iqm}$ and requiring them to vanish at $\lambda=0$, then yields the following linear system of equations for $W_i$, 
\begin{equation}
	A_{mj}W_j:=\left(\frac{2h^2}{3 \Delta} \hat x^{(j)}\hat x^{(m)}+\Delta\delta_{jm}+\frac{1}{2}\sqrt{h^2}\epsilon_{ijm}\hat x^{(i)}\right)_{\lambda=0}W_j=0  \; ,
\end{equation}
with $W_\lambda=\tfrac{2}{3} (\sqrt{h^2} \hat{x}^{(j)})_{\lambda=0} W_j$.
Since the determinant of the matrix
\begin{equation}
	\det A = \frac{(2 \mathring h^2+3\mathring \Delta^2)(\mathring h^2+4\mathring \Delta^2)}{12\mathring \Delta}\neq 0 \; ,
\end{equation}
it follows that $W_i=0$, which also implies that $W_\lambda=0$. 

Therefore, we have shown that  $W=W_0 h +\mathcal{O}(\lambda) \partial_{y^A}+\mathcal{O}(\lambda^2) \partial_\lambda $ for some constant
 $W_0\neq 0$. 
Thus substituting back into (\ref{eq_dxiexpansion})  the definition for the cartesian coordinates (\ref{eq_xdef}) yields
\begin{equation}
	\td x^i = -\td\left(W_0\sqrt{h^2}\lambda \hat x^{(i)}\right) + \mathcal{O}(\lambda^2)\td y^A + \mathcal{O}(\lambda)\td\lambda \; .  \label{eq_dxi}
\end{equation}
At $\lambda =0$, $\td x^i\propto \td \lambda$, therefore $x^i$ are constant on the horizon, and the cross-section $H$ of each connected component of the horizon corresponds to a 
single point in $\mathbb{R}^3$. Integrating (\ref{eq_dxi}) and taking $a^i$ to correspond to a connected component of the horizon yields 
\begin{equation}
	x^i-a^i = -W_0\sqrt{\mathring h^2}\lambda \hat x^{(i)}+ \mathcal{O}(\lambda^2) \; ,
\end{equation}
and the euclidean distance on $\mathbb{R}^3$ is
\begin{equation}
	r = \sqrt{(x^i-a^i)( x^i-a^i)} = |W_0|\sqrt{\mathring h^2}\lambda+\mathcal{O}(\lambda^2) \; .\label{eq_rlambda}
\end{equation}

\paragraph{Case 1b:  $\mathring h=0$.} In this case one must have $\mathring \Delta \neq 0$ and $h= \mathcal{O}(\lambda) \td y^A$.
We again define $\sigma_L^{(i)}$ as in (\ref{eq_sigmadef}) which satisfy (\ref{eq_sigmaalg}), and dual vectors $\xi_L^{(i)}$ that satisfy $\xi_L^{(i)}(v)= \xi^{(i)}_L(\lambda)=0$. Thus, in this case we can write
\begin{equation}
Z^{(i)}= \frac{\sigma^{(i)}_L}{\Delta}+ \mathcal{O}(\lambda) \td y^A+ \mathcal{O}(1) \td \lambda  \; .
\end{equation}
 Then, using (\ref{eq_complexstr}), we find 
\begin{align}
	X^{(i)}&=\td\left(\frac{\lambda\sigma_L^{(i)}}{\Delta}\right)+\mathcal{O}(\lambda^2)\td y^A\wedge \td y^B + \mathcal{O}(\lambda)\td\lambda\wedge \td y^A.
\end{align}
The horizon geometry $\mathring \gamma= \mathring \sigma_L^{(i)}\mathring \sigma_L^{(i)} /\mathring \Delta^2$ where we again define $\mathring\sigma_L^{(i)} := (\mathring\sigma_L^{(i)})_A \td y^A$.   It follows that the horizon geometry is  isometric to a round $S^3$ or a lens space. The dual vectors $\mathring \xi_L^{(i)}$ are now all Killing vector fields of $(H, \mathring{\gamma})$, and thus we can write the axial Killing field on the horizon as  $\mathring W = W_i  \mathring \xi_L^{(i)}  + \mathring{W}_R$ where $W_i \in \mathbb{R}$ and $\mathring{W}_R$ is a left-invariant vector field so $\mathcal{L}_{\mathring W_R} \mathring \sigma_L^{(i)}=0$. Therefore, by (\ref{eq_WnearH}),  in a neighbourhood of the horizon we can write
\begin{equation}
W =\mathring W_R +W_i \mathring \xi_L^{(i)} + \lambda W_\lambda(y) \partial_\lambda + \mathcal{O}(\lambda)\partial_{y^A} + \mathcal{O}(\lambda^2)\partial_\lambda \; .
\end{equation} 
Then a short computation using the above gives
\begin{align}
	\iota_W X^{(i)}=-\td \left(\frac{W^i_R}{\Delta}\lambda \right) + \frac{\lambda W_\lambda}{\Delta}\sigma_L^{(i)} -W_j \left(\frac{\delta_{ij}\td\lambda}{\Delta} + \frac{\lambda}{\Delta}\epsilon_{ijk}\sigma^{(k)}\right)+\mathcal{O}(\lambda^2)\td y^A + \mathcal{O}(\lambda)\td \lambda \; ,
	\label{eq_dxi2}
\end{align}
where $W^i_R:= \iota_{\mathring W_R} \mathring \sigma_L^{(i)}$. It follows that
\begin{equation}
\td \iota_{W} X^{(i)} = \sigma_L^{(k)} \wedge \td \lambda \left[ \frac{W_j \epsilon_{ijk}}{\Delta} - \frac{\delta_{ik} W_\lambda}{\Delta} +\mathcal{O}(\lambda) \right] + \mathcal{O}(\lambda) \td y^A \wedge \td y^B \; ,
\end{equation}
and therefore we deduce that triholomorphicity of $W$ implies $W_i=W_\lambda=0$. Therefore,  from (\ref{eq_xdef}) we again
we find that $\td x^i \propto \td \lambda$ at the horizon, so each connected component of the  horizon is a point say $a^i$ in $\mathbb{R}^3$. By 
integrating (\ref{eq_dxi2}) we find
\begin{align}
	x^i-a^i = -\frac{W_R^i}{\mathring\Delta}\lambda  + \mathcal{O}(\lambda^2) \; , \qquad r = \left|\frac{W_0}{\mathring\Delta}\right|\lambda + \mathcal{O}(\lambda^2) \; ,
\end{align}
where $W_0^2 := W^i_R W^i_R$ is a constant (since $\td W^i_R= \epsilon_{ijk} W^j_R \mathring \sigma_L^{(k)}$), which must be non-zero to avoid $W$ vanishing identically on the horizon.

\paragraph{Case 2: $\mathring\Delta=0$.}\label{case3} In this case $Z^{(i)}$ can be written in terms of coordinates $z^i$ as~\cite{reall_higher_2004} 
\begin{equation}
	Z^{(i)}=K(z^j)\td z^i+\mathcal{O}(\lambda)\td z^i\; , \qquad 
	h = \td \log K +\mathcal{O}(\lambda)\td z^i \; ,
\end{equation}
with  $K = L/\sqrt{z^iz^i}$ for some constant $L$.
Let us introduce standard spherical polar coordinates $(z^i)\to (R, \theta, \phi )$, and define $\psi:=\log R$. The near-horizon data is then
\begin{equation}
	\mathring \gamma=L^2(\td\psi^2+\td\theta^2+\sin^2\theta \td\phi^2)\; , \qquad \mathring h=  - \td\psi\; ,
\end{equation}
and 
\begin{equation}
	X^{(i)}= L \td \left( \lambda ( \hat x^{(i)} \td\psi+ \td \hat x^{(i)})\right)+\mathcal{O}(\lambda^2)\td y^A\wedge \td y^B + \mathcal{O}(\lambda)\td\lambda\wedge \td y^A \; ,
\end{equation}
where $\hat x^{(i)}(\theta, \phi):= z^i/R$.
The horizon cross-section $H$ is locally isometric to $S^1\times S^2$, and its independent Killing fields are $\partial_\psi$ and the standard 
Killing fields of $S^2$.  Therefore, without loss of generality we can always adapt the coordinates on $S^2$ so that $\mathring W= W_\psi \partial_\psi+ W_\phi \partial_\phi$, where $W_\psi, W_\phi$ are constants, and hence write $W$ in the neighbourhood of the horizon (\ref{eq_WnearH}) as
\begin{equation}
	W = W_\psi \partial_\psi + W_\phi \partial_\phi + \lambda W_\lambda(\theta,\phi,\psi) \partial_\lambda +\mathcal{O}(\lambda)\partial_{y^A} + \mathcal{O}(\lambda^2)\partial_\lambda \; ,
\end{equation}
which  yields
\begin{align}
	\iota_WX^{(i)}&= -\td\left(LW_\psi\lambda \hat x^{(i)}\right)+LW_\phi(\lambda \td\psi-\td\lambda)\partial_\phi \hat x^{(i)}+\nonumber\\
	&+L\lambda W_\lambda(\hat x^{(i)}\td\psi+\td\hat x^{(i)})+\mathcal{O}(\lambda^2)\td y^A + \mathcal{O}(\lambda)\td \lambda \; . \label{eq_dxi3}
\end{align}
It follows that
\begin{equation}
\td \iota_W X^{(i)} =L \left(  W_\phi  \partial_\phi \hat{x}^i+  W_\lambda \hat{x}^i  + O(\lambda)\right) \td \lambda\wedge \td \psi+ \dots \; ,
\end{equation}
where $\dots$ represent terms not proportional to $\td \lambda \wedge \td\psi$. Therefore, we deduce that $\iota_WX^{(i)}$ is closed (to leading order) if and only if 
$W_\phi=W_\lambda = 0$ (to see this note $\hat{x}^i \partial_\phi \hat{x}^i=0$). As before, in order for $W$ not to vanish identically, $W_\psi$ must be non-zero. From (\ref{eq_xdef}) we also find that $x^i$ are constant on the horizon, 
and (\ref{eq_dxi3}) can be integrated to get 
\begin{align}
	x^i -a^i= -LW_\psi \lambda \hat x^{(i)}+\mathcal{O}(\lambda^2) \; , \qquad r = |LW_\psi |\lambda+\mathcal{O}(\lambda^2) \; .
\end{align} 
\end{proof}

\noindent \textbf{Remark.} In Gaussian null coordinates (\ref{eq_gaussianNullCoords}) one can check 
$(\mathcal{L}_W g)_{\lambda\lambda} = 2 \partial_\lambda W^v$ so that $W^v=W^v(y)$.  If one assumes 
that $W$ is tangent to the cross-section $H$ of the horizon it therefore follows that $W^v=0$. Then, 
$(\mathcal{L}_W g)_{\lambda A}= 0$ implies $\partial_\lambda W^A=0$ and in turn  $(\mathcal{L}_W g)_{v\lambda}=0$ 
implies $\partial_\lambda W^\lambda=0$.  By assumption, $W$ is tangent to the horizon and hence $W^\lambda|_{\lambda=0}=0$ 
as in (\ref{eq_WnearH}), so we deduce that $W^\lambda=0$. This shows that in Gaussian null coordinates, any Killing vector 
that is tangent to the cross-section takes the form $W= W^A(y) \partial_{y^A}$ in the whole neighbourhood of the horizon 
(not just on the horizon), that is, $W$ is tangent to constant $(v, \lambda)$ surfaces even for $\lambda \neq 0$. 
Therefore, if one assumes that $W$ is tangent to $H$ (which we have not) the proof of Lemma \ref{lemma_GNC_cart} 
simplifies somewhat. \\

The proof of Lemma \ref{lemma_GNC_cart} also reveals the following important property of the axial Killing field.
\begin{corollary}\label{cor_Wnonzero}
The axial Killing field $W$ has no zeroes on the horizon.
\end{corollary}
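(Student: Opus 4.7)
The plan is to extract the corollary directly from the case-by-case analysis already carried out in the proof of Lemma \ref{lemma_GNC_cart}. In each of the three near-horizon cases, the triholomorphicity constraint $\td\iota_WX^{(i)}=0$ forced the restriction $\mathring W$ of the axial Killing field to the horizon cross-section into a very specific form, and it suffices to observe that the resulting vector field is nowhere vanishing on the relevant horizon geometry.

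Concretely, I would collect the outputs of the three cases as follows. In Case 1a ($\mathring\Delta\neq0$, $\mathring h\neq 0$) the analysis gave $\mathring W = W_0\,\mathring h$ with $W_0$ a constant; in Case 1b ($\mathring h=0$) it gave $\mathring W = \mathring W_R$, a right-invariant Killing field on $(H,\mathring\gamma)$ of constant norm $W^i_RW^i_R = W_0^2$; and in Case 2 ($\mathring\Delta=0$) it gave $\mathring W = W_\psi\partial_\psi$ on $S^1\times S^2$. As noted inside the proof of Lemma \ref{lemma_GNC_cart}, the Killing equation implies that if $W$ vanished identically on $\mathcal H$ then it would vanish on all of $\llangle\mathcal M\rrangle$, contradicting our assumptions; equivalently, the constants $c$ appearing in $r = c\lambda + \mathcal O(\lambda^2)$ are forced to be non-zero. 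Hence the coefficients $W_0$ and $W_\psi$ above are non-zero.

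It remains to argue that each of these specific Killing fields is nowhere vanishing on the respective horizon cross-section. In Case 1a, $\mathring h^A$ is (up to the constant $W_0$) the generator of the $U(1)$ factor of the $U(1)\times SU(2)$ isometry of the squashed $S^3$, which acts freely. In Case 1b, right-invariant Killing fields on $S^3$ or $L(p,1)$ are left-translates of a single tangent vector by a transitive isometry, so they are either identically zero or nowhere zero. In Case 2, $\partial_\psi$ is manifestly a nowhere-vanishing coordinate field on $S^1\times S^2$. Putting these together, $\mathring W$ is nowhere zero on any connected component of the horizon, which is the claim of the corollary. There is no substantive obstacle here: the entire non-trivial content was already established in Lemma \ref{lemma_GNC_cart}, and the corollary is a matter of reading off the surviving generator and invoking the standard fact that a non-zero invariant vector field on a homogeneous space cannot have isolated zeros.
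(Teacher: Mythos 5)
Your proposal is correct and follows exactly the route the paper intends: the paper gives no separate proof, merely noting that the corollary is "revealed" by the proof of Lemma \ref{lemma_GNC_cart}, and your case-by-case reading-off of $\mathring W$ (nonzero multiple of $\mathring h$ with constant norm $\mathring h^2\neq 0$; a right-invariant field of constant nonzero norm $W_0/\mathring\Delta$; a nonzero multiple of $\partial_\psi$) together with the Killing-equation argument excluding $W\equiv 0$ on $\mathcal{H}$ is precisely that content. No gaps.
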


We are  now in a position to determine the precise singular 
 behaviour of the  harmonic functions near a horizon.
\begin{lemma}
	The associated 
	harmonic functions in a neighbourhood of a connected component of the horizon $x^i=a^i$ are of the form
	\begin{equation}
		H =  \frac{h}{|\bm x - \bm a|}+ \tilde H \; , \quad K=  \frac{k}{|\bm x - \bm a|}+ \tilde K \; , \quad L =  \frac{l}{|\bm x - \bm a|}+ \tilde L \; , 
	 \quad M =  \frac{m}{|\bm x - \bm a|}+ \tilde M  \; ,  \label{eq_harmonichorizon}
	\end{equation} where $h,k, l, m$ are
	(possibly zero) constants and $\tilde H, \tilde K, \tilde L, \tilde M$ are harmonic functions regular at $a^i$.
	\label{lemma_harmonic_horizon}
\end{lemma}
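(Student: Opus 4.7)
The plan is to combine the spacetime-invariant formulas (\ref{eq_harmonicinvariant}) for the harmonic functions with the Gaussian null coordinate analysis underlying Lemma \ref{lemma_GNC_cart} in order to establish the pointwise bound $H,K,L,M = O(r^{-1})$ near $\bm{x}=\bm{a}$. The lemma then follows from the elementary fact that any harmonic function on a punctured ball in $\mathbb{R}^3$ that is $O(r^{-1})$ must consist of a multiple of $r^{-1}$ plus a harmonic function regular at the origin.

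By Corollary \ref{cor_Wnonzero} the axial Killing field $W$ has no zeros on the horizon, so by Lemma \ref{lemma_harmonicwelldefined} each of $H,K,L,M$ is a smooth harmonic function on some punctured neighbourhood of $\bm{a}$ in $\mathbb{R}^3$. Any such function admits the standard multipole expansion
\begin{equation}
u(r,\theta,\phi) = \sum_{l\ge 0,\;|m|\le l}\left(a_{lm}\, r^{l} + b_{lm}\, r^{-l-1}\right)Y_l^m(\theta,\phi) \; ,
\end{equation}
and a pointwise bound $|u|\le C r^{-1}$ forces $b_{lm}=0$ for every $l\ge 1$, leaving only a possible monopole singularity plus a harmonic function regular at $\bm{a}$. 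Hence it suffices to prove the growth bound for each of $H,K,L,M$.

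To obtain the bound, I would work in Gaussian null coordinates (\ref{eq_gaussianNullCoords}) adapted to the horizon and estimate the ingredients of (\ref{eq_harmonicinvariant}). From $g(V,V) = -\lambda^2\Delta^2$ one immediately obtains $f = \lambda|\Delta| = O(\lambda)$. Using the form (\ref{eq_WnearH}) of $W$ together with the case-by-case conclusion in the proof of Lemma \ref{lemma_GNC_cart} that the leading term of $W^\lambda/\lambda$ vanishes (so $W^\lambda = O(\lambda^2)$), the identity
\begin{equation}
g(V,W) \;=\; W^\lambda \,+\, \lambda\, h_A W^A \,-\, \lambda^2 \Delta^2\, W^v
\end{equation}
gives $g(V,W) = O(\lambda)$, while $g(W,W) = \mathring\gamma_{AB}\mathring W^A \mathring W^B + O(\lambda)$ is bounded and the magnetic potential $\Psi$ defined by (\ref{eq_magneticpot}) is smooth at the horizon by smoothness of $F$ and $W$. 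Consequently $N = f^2 g(W,W) + g(V,W)^2 = O(\lambda^2)$, and inspection of each of the numerators in (\ref{eq_harmonicinvariant}) shows they are all $O(\lambda)$. Therefore $H,K,L,M = O(\lambda^{-1})$, uniformly in $y^A$ by compactness of the horizon cross-section. Finally, Lemma \ref{lemma_GNC_cart} gives $r = c\lambda + O(\lambda^2)$ with $c>0$ a constant across the cross-section, which upgrades this to $H,K,L,M = O(r^{-1})$, as required.

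The main obstacle I anticipate is ensuring that the estimate $W^\lambda = O(\lambda^2)$ and the attendant bounds on $g(V,W)$, $g(W,W)$ are uniform in $y^A$ in each of the three near-horizon cases handled in the proof of Lemma \ref{lemma_GNC_cart}, rather than along particular approach curves; once this uniformity is in place, the rest of the argument is just an algebraic check using (\ref{eq_harmonicinvariant}) followed by the standard multipole uniqueness statement for harmonic functions on a punctured ball in $\mathbb{R}^3$.
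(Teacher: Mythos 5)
Your overall strategy coincides with the paper's: express $H,K,L,M$ through the invariant formulas (\ref{eq_harmonicinvariant}), estimate the invariants in Gaussian null coordinates, convert $\lambda$ to $r=|\bm x-\bm a|$ via Lemma \ref{lemma_GNC_cart}, and finish with the standard structure theory for harmonic functions on a punctured ball. However, there is a genuine gap at the division step. You establish only \emph{upper} bounds: $N=f^2g(W,W)+g(V,W)^2=O(\lambda^2)$ and all numerators in (\ref{eq_harmonicinvariant}) are $O(\lambda)$. From upper bounds alone the conclusion $H,K,L,M=O(\lambda^{-1})$ does not follow; one needs a strictly positive \emph{lower} bound $N\geq c\,\lambda^2$, i.e. that $N_0:=|\mathring{W}|^2\mathring{\Delta}^2+(\mathring{W}\cdot\mathring{h})^2$ is bounded away from zero on the compact cross-section. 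This is precisely the point the paper singles out as crucial. It is not a consequence of $\mathring{W}\neq 0$ alone: in the case $\mathring{\Delta}=0$ (the $S^1\times S^2$ near-horizon geometry) the first term of $N_0$ vanishes identically, and one must check separately that $\mathring{W}\cdot\mathring{h}\neq 0$, which requires the case-by-case information from the proof of Lemma \ref{lemma_GNC_cart} (there $\mathring{W}=W_\psi\partial_\psi$ with $W_\psi\neq0$ and $\mathring{h}=-\td\psi$, so $\mathring{W}\cdot\mathring{h}=-W_\psi\neq0$); in the cases $\mathring{\Delta}\neq 0$ positivity follows from Corollary \ref{cor_Wnonzero}.

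Once this positivity of $N_0$ is supplied — which is exactly how the paper argues, combining Corollary \ref{cor_Wnonzero} with the three near-horizon cases — your proof closes. The uniformity in $y^A$ that you flag as your anticipated obstacle is then automatic from compactness of the cross-section and continuity of $N_0$, and your final step (a harmonic function on a punctured ball that is $O(r^{-1})$ is a monopole plus a regular harmonic function) is the same appeal to standard harmonic function theory the paper makes. So the route is the paper's route; what is missing is the one estimate that makes the division legitimate.
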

\begin{proof}
In Gaussian null coordinates the invariants $g(V, V)= -\lambda^2 \Delta^2$, $g(V, W)= \lambda \mathring W \cdot \mathring h + \mathcal{O}(\lambda^2)$ and hence (\ref{eq_Ndef}) gives
\begin{equation}
N = \lambda^2 (N_0+ \mathcal{O}(\lambda)) \; , \qquad N_0 :=  | \mathring W|^2\mathring \Delta^2  +  (\mathring W \cdot \mathring h)^2  \; ,
\end{equation}
where we have used (\ref{eq_WnearH}), together with the fact that $\tilde{W}_\lambda= \mathcal{O}(\lambda)$ which follows from the proof of Lemma \ref{lemma_GNC_cart}.  Crucially, since $W$ has no zeroes on the horizon (Corollary \ref{cor_Wnonzero})  the function $N_0$ is strictly positive on the horizon for all types of near-horizon geometry (see three cases in proof of Lemma \ref{lemma_GNC_cart}). Therefore, the harmonic functions (\ref{eq_harmonicinvariant}) near the horizon take the form
\begin{align}
H &= \frac{ \Delta}{N_0  \lambda }+\mathcal{O}(1) \; , \qquad L = \frac{\Delta |W|^2+2 (W \cdot h) \Psi- \Delta \Psi^2}{N_0 \lambda}+\mathcal{O}(1)\; ,  \\
K&= \frac{\Delta \Psi- W\cdot h}{N_0 \lambda}+ \mathcal{O}(1) \;, \qquad M= \frac{ |W|^2 ( W \cdot h)-3 \Delta \Psi |W|^2- 3 \Psi^2 (W\cdot h)+\Delta \Psi^3}{2N_0 \lambda}+\mathcal{O}(1)  \; .\nonumber
\end{align}
Therefore, by Lemma \ref{lemma_GNC_cart} we deduce that $| \bm x - \bm a | H = \mathcal{O}(1)$ as $x^i \to  a^i$ and similarly for the other harmonic functions.  The claim now follows by standard harmonic function theory.
\end{proof}

\subsection{Orbit space and general form of harmonic functions}
\label{ssec_orbit}

The orbit space is defined by
\begin{equation}
\hat\Sigma : = \llangle \mathcal{M} \rrangle/[\mathbb{R}\times U(1)]  \cong \Sigma /U(1) \; ,   \label{eq_Sigmahat}
\end{equation}
where the $\mathbb{R}\times U(1)$ action is defined by the flow of the Killing fields $V, W$ and the second equality follows from Remark \ref{remark_orbit}. 
By Corollary \ref{cor_AFGH} we deduce that the orbit space $\hat{\Sigma}$ has an end diffeomorphic to $\mathbb{R}^3 \backslash B^3$ on which the 
Gibbons-Hawking cartesian coordinates $x^i$ are a global chart.  We now turn to a detailed study of the orbit space.

An extensive analysis of the structure of such orbit spaces has been performed in \cite{hollands_further_2011}. In general, $\hat \Sigma$ is a simply connected topological 
space with a boundary $\partial\hat\Sigma= \hat H \cup_{i=1}^l S^2_i \cup S^2_\infty$, where $\hat H$ is the orbit space of the event horizon, $S^2_\infty$ denotes the asymptotic boundary and  $S^2_i$ correspond to fixed points of the $U(1)$ action (i.e. zeroes of $W$ corresponding to `bolts'~ \cite{gibbons_classification_1979}). The interior of the orbit space is the 
union of three kinds of points $\hat\Sigma=\hat L \cup \hat E \cup \hat F$, corresponding to regular orbits (trivial isotropy), exceptional orbits (discrete isotropy) and fixed points ($U(1)$ isotropy), respectively. 
$\hat L$ is open in $\hat \Sigma$ and has a structure of a manifold, internal fixed points of $\hat\Sigma$ are isolated, and $\hat E$ are 
smooth arcs ending on either fixed points or $\hat H$ (they cannot form closed loops \cite{fintushel_circle_1977}).

\begin{lemma} \label{lemma_orbit}
The interior of the orbit space $\hat{\Sigma} = \hat L  \cup \hat F$ where $\hat L$ corresponds to regular orbits and $\hat F$ to a 
finite number of isolated fixed points, i.e., the $U(1)$-action has no exceptional orbits. Furthermore, its boundary $\partial \hat{\Sigma}= \hat{H}\cup S^2_\infty$, 
i.e. there are no fixed points corresponding to bolts.
\end{lemma}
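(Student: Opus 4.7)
My plan is to first localise all fixed points of $W$ to the Gibbons--Hawking base region $\{f\neq 0\}$, and then to use the Gibbons--Hawking form of the base to deduce that the fixed point set is a finite collection of isolated nuts rather than bolts. Corollary \ref{cor_Wnonzero} already rules out fixed points on the horizon. For an interior fixed point $p\in\llangle\mathcal{M}\rrangle$ one has $W_p=0$, so the only linear combinations of $V$ and $W$ at $p$ are scalar multiples of $V$; Assumption \ref{assumption2} \ref{assumption_span} then forces $V$ to be timelike at $p$, i.e.\ $f(p)\neq 0$. Hence every element of $\mathcal{F}$ lies in the base $B$, where by Lemma \ref{Lemma_GH} the metric is Gibbons--Hawking with $W=\partial_\psi$ and $h(W,W)=1/H$. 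Using $f=H/(K^2+HL)$ together with $K^2+HL>0$ on $B$ (Lemma \ref{lemma1}), $f$ and $H$ have the same sign on $B$, so continuity and $f>0$ in the asymptotic end (Assumption \ref{assumption1} \ref{assumption_AF}) give $H>0$ throughout $B$. Thus fixed points of $W$ correspond exactly to isolated poles of $H$, i.e.\ to single points of $\hat\Sigma$ in the chart provided by the cartesian coordinates $x^i$ of (\ref{eq_xdef})---they are nuts and not bolts, ruling out any $S^2_i$ components in $\partial\hat\Sigma$ and giving $\partial\hat\Sigma=\hat H\cup S^2_\infty$.

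Finiteness of $\mathcal{F}$ then follows from the structure of the Cauchy surface in Assumption \ref{assumption1} \ref{assumption_cauchy}: $\Sigma$ is the union of a compact core, a single asymptotically flat end on which $H\sim 1/r$ by Corollary \ref{cor_AF_H} (hence regular), and finitely many cylindrical horizon ends on which $H$ is bounded by Lemma \ref{lemma_harmonic_horizon} and on which $W$ does not vanish by Corollary \ref{cor_Wnonzero}. An isolated subset of such a $\Sigma$ cannot accumulate at infinity or at the horizon, so $\mathcal{F}$ is finite.

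It remains to rule out exceptional orbits. On $B$ the Gibbons--Hawking action $\partial_\psi$ with $\psi$ of period $4\pi$---the value fixed in Section \ref{ssec_AF} from asymptotic flatness---is free wherever $H$ is finite and positive, so $\hat E\cap(B/W)=\varnothing$. Any remaining exceptional arc, according to the topological analysis of \cite{hollands_further_2011}, must have its endpoints on $\mathcal{F}$ or on $\hat H$. Since the isotropy is constant along a $W$-orbit, such an arc would lift in spacetime to a $W$-invariant locus with constant nontrivial $\mathbb{Z}_n$ stabiliser; because orbits on the evanescent ergosurface $\{f=0\}$ and on $\mathcal{H}$ arise as limits of orbits in $B$ (for $\mathcal{H}$ this uses the explicit near-horizon form of $W$ extracted in the proof of Lemma \ref{lemma_GNC_cart}, which extends smoothly to nearby $\lambda>0$ orbits in $B$), the flow of $W$ has period $4\pi$ throughout the closure of $B$, contradicting a nontrivial $\mathbb{Z}_n$ stabiliser on any exceptional locus. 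Hence $\hat E=\varnothing$ and $\hat\Sigma=\hat L\cup\hat F$. The step I expect to be most delicate is precisely this last one: upgrading the continuity of orbit periods across $\{f=0\}$ and across $\mathcal{H}$ into a rigorous exclusion of a stabiliser jump from trivial to $\mathbb{Z}_n$, since isotropy is only upper semicontinuous and both the ergosurface and the horizon are $W$-invariant, so nothing a priori forbids a nontrivial stabiliser developing on these strata.
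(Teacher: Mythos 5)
There are genuine gaps. The most important one is the exclusion of bolts, i.e.\ the isolatedness of the fixed-point set. The paper's proof of this rests on a specific geometric input that your argument omits: triholomorphicity of $W$ implies that $\td\widetilde{W}^\flat$ is self-dual on the base, so by the Gibbons--Hawking classification of fixed points every zero of $W$ is a nut of type $(\pm1,\pm1)$ and bolts are impossible. Your substitute --- ``fixed points of $W$ correspond exactly to isolated poles of $H$, i.e.\ to single points of $\hat\Sigma$ in the chart provided by $x^i$'' --- assumes precisely the isolatedness to be proved, and moreover invokes the global chart property of $x^i$ and the simple-pole structure of $H$ at fixed points, which are established only later (Lemmas \ref{lemma_bijection} and \ref{lemma_harmonic_fixedpt}) and whose proofs depend on Lemma \ref{lemma_orbit}; the argument is therefore circular. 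A secondary error: the claim that $H>0$ throughout $B$ by continuity from the asymptotic end is false, because $B=\{f\neq0\}$ is in general disconnected by the evanescent ergosurfaces, and $f$ (hence $H$) is negative near fixed points with $h_i=-1$ (cf.\ the sign condition (\ref{eq_fcond2})); nothing in the lemma requires positivity of $H$, but the freeness claim you build on it is then not even stated on all of $B$.

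The second gap is the one you flag yourself. Your exclusion of exceptional orbits reduces to showing that the isotropy cannot jump from trivial to $\mathbb{Z}_n$ on the $W$-invariant strata $\{f=0\}$ and $\mathcal{H}$, and, as you note, upper semicontinuity of the isotropy means that continuity of the period of nearby free orbits does not deliver this. The paper closes the argument along a different route: an arc of exceptional orbits cannot terminate at a $(\pm1,\pm1)$ nut (again using the self-duality input, via \cite{hollands_further_2011}), so it would have to end on a horizon component; $S^1\times S^2$ components are excluded because all their points have the same isotropy, and for $\mathring\Delta\neq0$ the arc necessarily meets a region with $f\neq 0$ where the Gibbons--Hawking chart fixes the period of $\psi$ --- with the honest version of that last step (a flowout/openness contradiction) spelled out in Appendix \ref{app_exceptional}. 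Without the nut-type input your route has no mechanism to forbid an exceptional arc attached to a fixed point, and your freeness claim on $B$ is asserted rather than proved (it is exactly the delicate point the appendix addresses). The parts that do match the paper and are sound are the localisation of fixed points to $\{f\neq0\}$ via Assumption \ref{assumption2} \ref{assumption_span} together with Corollary \ref{cor_Wnonzero}, and the finiteness argument from closedness, isolatedness and the structure of the Cauchy surface.
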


\begin{proof}
First recall that Assumption \ref{assumption2} \ref{assumption_span} implies $f\neq0$ at the zeros of $W$, so the fixed points correspond to internal points of the base 
manifold $B$. Let us define $\widetilde{W}^\flat$ to be the metric dual of $W$ with respect to $h$. As a consequence of triholomorphicity, 
$\td \widetilde{W}^\flat$ is self-dual in $B$ (see e.g. \cite{gibbons_hidden_1988}), hence each of the fixed points of $W$ corresponds to a `nut' 
of type $(\pm1, \pm1)$ and  `bolts' are not possible  \cite{gibbons_classification_1979}. As a result, fixed points of $W$ must be isolated in $B$ and 
hence correspond to internal fixed points in $\hat{\Sigma}$. $W$ has no zeros in a neighbourhood of horizon components (Corollary \ref{cor_Wnonzero}) or 
at spatial infinity (Lemma \ref{lem_AF}), hence by Assumption \ref{assumption1} \ref{assumption_cauchy} $\hat F$ is contained in a compact set. $\hat F$ is also 
closed in $\hat\Sigma$, thus it must be finite. Furthermore, since fixed points correspond to a `nut' of type $(\pm1, \pm1)$,  there cannot be 
any arc of exceptional orbits ending on them (see~\cite{hollands_further_2011}). Thus, if there are arcs of exceptional orbits, those arcs must end on 
$\hat H$. Assume for contradiction that there is an arc of exceptional orbit ending on a horizon component $\hat H_i$. $H_i$ cannot have $S^2\times S^1$ 
topology, because in that case all points of $H_i$ have the same isotropy group (see Case 3 in Section \ref{case3} and \cite{reall_higher_2004}). It follows 
that $\mathring\Delta\neq 0$ for $H_i$, therefore there must be exceptional orbits with $f>0$ in their neighbourhood. In the base of such a neighbourhood 
we can use Gibbons-Hawking coordinates $(\psi, x^i)$, which excludes the possibility of an exceptional orbit since on such a chart the period of $\psi$ is fixed. (A more 
detailed and technical argument is given in Appendix \ref{app_exceptional}).
\end{proof}

The spacetime invariants $f, \Psi, N, x^i$ that we have constructed are preserved by the Killing fields $V, W$ and therefore descend to functions on the orbit space.\footnote{By abuse of notation, we denote such invariant functions on $\mathcal{M}$ and their corresponding functions on the orbit 
space by the same letter.}   It follows by  Lemma \ref{lemma_orbit} that fixed points in Gibbons-Hawking coordinates $x^i$ correspond to points in $\mathbb{R}^3$. We shall now prove that $x^i$ can be used as global coordinates on $\hat L$, so that in particular $\hat L$ is diffeomorphic to $\mathbb{R}^3$ with 
a finite set of points removed corresponding to the image of fixed points in $\hat F$ and horizon components $\hat{\mathcal{H}}_i$ (recall a horizon component in Gibbons-Hawking coordinates also corresponds to a point in $\mathbb{R}^3$ by Lemma \ref{lemma_GNC_cart}).

\begin{lemma}
	The function\footnote{
		For compact notation we will denote the vector of functions $x^i$ by $\bm x$.
	} $\bm x: \hat L\to \mathbb{R}^3\setminus \bm x(\hat{\mathcal{H}}\cup \hat F)$ is a diffeomorphism.\label{lemma_bijection}
\end{lemma}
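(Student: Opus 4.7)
The plan is to prove the lemma by showing that $\bm x$ is a smooth proper local diffeomorphism onto a simply connected target, and hence a covering map which, by connectedness of $\hat L$, must be a diffeomorphism. Smoothness is immediate: $x^i$ are defined globally on $\llangle \mathcal{M} \rrangle$ by integration of the closed 1-forms $\iota_W X^{(i)}$ (using simple connectedness of the DOC), they are $V,W$-invariant by (\ref{eq_VX}) and $\mathcal{L}_W X^{(i)}=0$, and so descend to smooth functions on the manifold $\hat L$. For the local diffeomorphism property, the key computation is that the pairing $g^{-1}(\td x^i,\td x^j) = N\,\delta^{ij}$ -- verified directly in Gibbons-Hawking coordinates on the dense open subset $\{f\neq 0\}\cap\hat L$, and in general following from the Killing-spinor-bilinear identity (\ref{eq_XX}) by using antisymmetry of $X^{(k)}$ to kill the $\epsilon_{ijk}$ term against $W^\alpha W^\beta$. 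Since $N>0$ on $\hat L$ (Lemma \ref{lemma_harmonicwelldefined}), the $\td x^i$ are pointwise linearly independent 1-forms that annihilate $V$ and $W$, so they descend to linearly independent 1-forms on $\hat L$; crucially, this gives independence across the evanescent ergosurfaces $\{f=0\}\cap \hat L$ where the Gibbons-Hawking structure degenerates.

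Next I would establish properness of $\bm x$. Let $K\subset \mathbb{R}^3\setminus \bm x(\hat{\mathcal{H}}\cup\hat F)$ be compact and let $\{p_n\}\subset\bm x^{-1}(K)$. By Assumption \ref{assumption1} \ref{assumption_cauchy}, the Cauchy surface $\Sigma$ is the union of a compact piece, a single asymptotically flat end, and finitely many asymptotically cylindrical horizon ends; together with the finite interior fixed-point set $\hat F$ (Lemma \ref{lemma_orbit}), the only way $\{p_n\}$ could fail to have a convergent subsequence in $\hat L$ is if (after passing to a subsequence) it escaped through one of these ends or approached a fixed point. Each alternative is inconsistent with $\bm x(p_n)\in K$: in the asymptotic end $|\bm x(p_n)|\to\infty$ by Lemma \ref{lem_AF}; near a horizon component $\bm x(p_n)\to \bm x(\hat{\mathcal{H}}_i)$ by Lemma \ref{lemma_GNC_cart}; and near a fixed point $\bm x(p_n)$ converges into $\bm x(\hat F)$. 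Hence $\bm x^{-1}(K)$ is compact and $\bm x$ is proper.

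Finally, a proper local diffeomorphism between manifolds is a smooth covering onto its image, and this image must be all of $\mathbb{R}^3\setminus \bm x(\hat{\mathcal{H}}\cup\hat F)$ since it is nonempty, open, and closed in the connected target. The target is $\mathbb{R}^3$ with finitely many points removed and so is simply connected -- a feature special to dimension three -- while $\hat L = \hat\Sigma\setminus \hat F$ is connected because $\hat\Sigma$ is simply connected (it is the quotient of the simply connected DOC by the connected group $\mathbb{R}\times U(1)$, cf.\ topological censorship) and removing finitely many isolated interior points from a 3-dimensional space preserves connectedness. A connected cover of a simply connected space is one-sheeted, so $\bm x$ is a diffeomorphism. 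The main obstacle in this argument is the properness step: it is routine in spirit but requires carefully combining the precise structure of the Cauchy surface from Assumption \ref{assumption1} \ref{assumption_cauchy} with the explicit asymptotic, horizon, and fixed-point behaviour of $\bm x$ established in the previous lemmas.
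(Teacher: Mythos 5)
Your proof is correct, but the global part of your argument takes a genuinely different route from the paper's. The first step is identical: both compute $g^{-1}(\td x^i,\td x^j)=N\delta_{ij}$ from the bilinear identity (\ref{eq_XX}) and use $N>0$ on $\hat L$ to get a local diffeomorphism. For globalisation, however, the paper argues directly with integral curves: it shows surjectivity by proving completeness (via the Escape Lemma) of the vector fields dual to $\td x^i$ along piecewise-linear paths in $\mathbb{R}^3$ that avoid the images of the centres, and it shows injectivity by following two putative preimages along such a curve into the asymptotic end, where $\bm x$ is known to be injective by Corollary \ref{cor_AFGH}. You instead establish properness of $\bm x$ and invoke the standard fact that a proper local diffeomorphism onto a connected, simply connected target (here $\mathbb{R}^3$ minus finitely many points, simply connected precisely because the punctures have codimension three) is a one-sheeted covering, hence a diffeomorphism. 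Your properness step is sound: it correctly combines the end structure of $\Sigma$ from Assumption \ref{assumption1} \ref{assumption_cauchy} with Lemma \ref{lem_AF} ($|\bm x|\to\infty$ in the asymptotic end), Lemma \ref{lemma_GNC_cart} ($\bm x\to\bm a_i$ at a horizon end), the finiteness of $\hat F$ from Lemma \ref{lemma_orbit}, and continuity of $\bm x$ at fixed points. The trade-off is that the paper's approach is elementary and constructive (it never needs covering-space theory, only asymptotic injectivity of $\bm x$), whereas yours concentrates all the global input into the single properness claim and the topology of the punctured $\mathbb{R}^3$; both are legitimate, and your version arguably makes it more transparent exactly which global hypotheses (end structure, finiteness of centres, codimension of the punctures) are doing the work.
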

	
\begin{proof}

Let us start by showing that $\bm x$ is a local diffeomorphism on $\hat L$. From Lemma \ref{Lemma_GH} it follows that 
$N$ is preserved by $V, W$, so it descends to the orbit space, furthermore $N>0$ on $\hat L$. Recall, that $\hat L$ is a manifold. The algebraic relations (\ref{eq_XX}) and definition (\ref{eq_xdef}) and (\ref{eq_Ndef}) imply that on the spacetime 
\begin{align}
	g^{-1}(\td x^i, \td x^j) = W^\mu W^\nu X^{(i)}_{\mu \rho }X^{(j)}_\nu{}^\rho = W^\mu W^\nu\delta_{ij}(f^2g_{\mu\nu}+V_\mu V_\nu) = N\delta_{ij} \; ,
\end{align}
hence $\td x^i$ are linearly independent in $T^*_q\mathcal{M}$ for any $q\in \mathcal{M}$ where $N(q)>0$, which is the case in $\llangle\mathcal{M}\rrangle$ on regular orbits. 
Since $\iota_W \td x^i = \iota_V \td x^i =0$, $\td x^i$ are also linearly independent in $T^*_p\hat L$  for all $p\in \hat L$. Therefore, $\bm x$ 
is a local diffeomorphism on $\hat L$. For it to be a diffeomorphism onto $\mathbb{R}^3\setminus \bm x(\hat{\mathcal{H}}\cup \hat F)$, we need to 
show surjectivity and global injectivity.

First, we shall prove surjectivity. Let us define the dual vectors $e_i$ on $\hat L$, i.e. 
$\td x^i(e_j)=\delta_j^i$. Let $\bm a:=\bm x(p)\in \mathbb{R}^3$ for some $p\in \hat L$, and 
$\bm y\in \mathbb{R}^3\setminus \bm x(\hat F\cup \hat{\mathcal{H}})$ an arbitrary point. Consider a path in 
$\mathbb{R}^3 \setminus \bm x(\hat F \cup \hat{\mathcal{H}})$ from $\bm a$ to $\bm y$ that is a union of line segments, such that straight 
continuation of any line segment stays in $\mathbb{R}^3 \setminus \bm x(\hat F \cup \hat{\mathcal{H}})$ (i.e. there is no fixed point or 
horizon mapped to the continuation of the segments). This can be done using two segments due to the fact that $|\hat F|$ and the number 
of horizon components are finite (Assumption \ref{assumption1} \ref{assumption_cauchy} and Lemma \ref{lemma_orbit}). For the segment ending on $\bm a$ let the vector tangent to it be $u^i \bm e_i$ with $\{ \bm e_i\}$ being the 
standard basis of $\mathbb{R}^3$. Then consider the maximal integral curve $\gamma$ of $U=u^i e_i$ 
starting at $p$ in $\hat L$.  Assume for contradiction that it is incomplete, which means that  $x^i(\gamma(t)) = a^i + t u^i $ is 
bounded. However, due to the Escape Lemma (see e.g. \cite{lee_introduction_2012}), the image of $\gamma$ cannot be contained in any compact subset 
of $\hat L$, therefore it must approach the asymptotically flat end (recall by construction $\gamma$ does not approach a horizon or fixed point and our Assumption \ref{assumption1} \ref{assumption_cauchy}). 
Thus, $x^i$ is bounded along $\gamma$ as we approach the asymptotically flat end, which
by Lemma \ref{lem_AF} is a contradiction. Therefore, $U=u^i e_i$ must be a complete vector field. One can similarly show that the vector field $V=v^i e_i$,  where $v^i \mathbf{e}_i$ is tangent to the line in $\mathbb{R}^3$ that ends at $\bm {y}$, is complete.  This shows that starting at $p \in  \hat L$, we can follow the integral curves of $U$ and $V$ to
reach a point $q \in\hat L$ such that $\bm x(q)=\bm y$. But $\bm y$ is arbitrary in $\mathbb{R}^3 \setminus \bm x(\hat F \cup \hat{\mathcal{H}})$  and hence
 $\bm x$ is surjective.

We next show that $\bm x$ is injective. For contradiction, let us assume 
that $p\neq q\in \hat L$ and $\bm x(p)=\bm x(q)=: \bm x_0$. As above, let us choose a straight line through $\bm x_0$ in 
$\mathbb{R}^3\setminus \bm x(\hat F\cup \hat{\mathcal{H}})$ with tangent vector $U=u^i \bm e_i$, and let $\gamma_p(t)$ and $\gamma_q(t)$ denote 
the two integral curves of $U$ in $\hat L$ starting at $\gamma_p(0)=p$ and $\gamma_q(0)=q$. The two curves are disjoint by the 
uniqueness of integral curves. The straight line in $\mathbb{R}^3$ does not go 
through any fixed points or horizon components, hence by using the argument of the previous paragraph $\gamma_p$ and $\gamma_q$ are complete. We claim that $\gamma_p, \gamma_q$ must enter the asymptotically flat end of $\hat L$. For contradiction,  suppose the contrary, so that these curves are contained in a compact set $K\subset \hat L$. Then by continuity of $\bm x$ the image $\bm x(K)$ is a compact subset of $\mathbb{R}^3$. On the other hand, by completeness of the curves $\bm x(\gamma_p)= \bm x(\gamma_q)=\{\bm x_0 + \bm u t, t\ge0\}$ is unbounded, so cannot be contained in a compact subset of $\mathbb{R}^3$. Therefore, we have a contradiction, so  $\gamma_p, \gamma_q$ must enter the asymptotically flat end of $\hat L$ as $t\to\infty$. This means that for any large enough $|\bm x|$, there exist two distinct 
points with the same $\bm x$ value. This violates asymptotic flatness, since by Corollary \ref{cor_AFGH} the $\bm x$ are global coordinates on the asymptotically flat end of $\mathbb{R}^3$ (thus injective). Therefore, we have obtained a contradiction, and we deduce that  $\bm x$ is globally injective and hence a diffeomorphism.
\end{proof}

\noindent \textbf{Remark.}  $\hat \Sigma \cup_i\{\hat{\mathcal{H}}_i\}$ is in bijection with $\mathbb{R}^3$  (here we are adding each horizon 
component as a single point). This follows from continuity of $\bm x$ on $\mathcal{M}$ and 
injectivity on $\hat L$.

\begin{corollary}
	The set $\{p\in\llangle\mathcal{M}\rrangle: f(p)\neq0 \}$ is dense in $\llangle\mathcal{M}\rrangle$.\label{cor_fdense}
\end{corollary}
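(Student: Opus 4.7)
The plan is to reduce the statement to the fact that the harmonic function $H$ is not identically zero and then invoke real analyticity. First, by Assumption \ref{assumption2} \ref{assumption_span} the fixed point set $\mathcal{F}$ is disjoint from $\{f=0\}$, so every zero of $f$ in $\llangle\mathcal{M}\rrangle$ lies in the open set $\llangle\mathcal{M}\rrangle\setminus\mathcal{F}$, on which (by Lemma \ref{lemma_harmonicwelldefined}) the harmonic function $H$ is globally defined and smooth. Moreover, the invariant formula $H=f/N$ from (\ref{eq_harmonicinvariant}), combined with $N>0$ off $\mathcal{F}$, shows that on $\llangle\mathcal{M}\rrangle\setminus\mathcal{F}$ the zero sets coincide: $\{f=0\}=\{H=0\}$. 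So it suffices to prove that the zero set of $H$ has empty interior in $\llangle\mathcal{M}\rrangle\setminus\mathcal{F}$.

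Next, I would descend to the orbit space. Since $H$ is preserved by both Killing fields, it induces a function on $\hat\Sigma$, and via the diffeomorphism $\bm x$ of Lemma \ref{lemma_bijection} it may be regarded as an honest harmonic function on the open subset $\mathbb{R}^3\setminus \bm x(\hat{\mathcal{H}}\cup \hat F)$. By Lemma \ref{lemma_GNC_cart} and Lemma \ref{lemma_orbit} this set is $\mathbb{R}^3$ minus finitely many points, in particular connected. By Corollary \ref{cor_AF_H} we have $H = 1/r + O(r^{-2})$ asymptotically, so $H$ is certainly not identically zero. Real analyticity of harmonic functions on a connected open subset of $\mathbb{R}^3$ then forces $\{H=0\}$ to be a proper real-analytic subset, and hence to have empty interior.

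Finally, I would lift this statement back to spacetime. The quotient map $\pi:\llangle\mathcal{M}\rrangle\to\hat\Sigma$ is continuous and open, and $\pi^{-1}(\hat L)=\llangle\mathcal{M}\rrangle\setminus\mathcal{F}$, since $\hat F$ consists of $W$-fixed points in the orbit space. Because $\mathcal{F}$ is a union of $V$-orbits through the isolated $W$-fixed points in the base, it is a closed subset of codimension at least two in the five-manifold $\llangle\mathcal{M}\rrangle$, hence has empty interior. The preimage of the open dense subset $\{H\neq 0\}\subset \hat L\setminus\{H=0\}$ is therefore open and dense in $\pi^{-1}(\hat L)$, which is itself open and dense in $\llangle\mathcal{M}\rrangle$. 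Taking the union with $\mathcal{F}\subset\{f\neq 0\}$ proves the claim. No serious obstacle arises here; the only point that requires a moment's care is the connectedness of $\mathbb{R}^3\setminus \bm x(\hat{\mathcal{H}}\cup\hat F)$, which is immediate because the removed set is finite.
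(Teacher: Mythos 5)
Your proposal is correct and follows essentially the same route as the paper: identify the zero set of $f$ with that of $H$ via (\ref{eq_harmonicinvariant}), use real-analyticity of the harmonic function $H$ on $\mathbb{R}^3$ together with its nonvanishing asymptotic form to conclude $\{H=0\}$ has empty interior, transport this through the global coordinates of Lemma \ref{lemma_bijection}, and lift density back to $\llangle\mathcal{M}\rrangle$ by openness of the quotient map. Your treatment is slightly more explicit about the fixed-point set $\mathcal{F}$ and the connectedness of $\mathbb{R}^3$ minus the centres, but the argument is the same.
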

\begin{proof}
Since $H$ is harmonic on $\mathbb{R}^3$, it is also real-analytic in $x^i$ (on its domain). It follows that if $H=0$ on some open set of 
$\mathbb{R}^3$,  it is zero everywhere, and so by (\ref{eq_harmonicinvariant}) $f$ vanishes identically, which cannot happen (e.g. by asymptotic flatness). Therefore, the set 
$\{\bm x\in\mathbb{R}^3: f(\bm x)\neq0 \}$ is dense in $\mathbb{R}^3$. By Lemma \ref{lemma_bijection}, the $x^i$ are global coordinates on the orbit space $\hat\Sigma$, so 
$\{p\in \hat\Sigma: f(p)\neq0 \}$ is also dense in $\hat\Sigma$, and since the quotient map is open, $\{p\in\llangle\mathcal{M}\rrangle: f(p)\neq0 \}$ 
is also dense in $\llangle\mathcal{M}\rrangle$, as claimed.
\end{proof}
 
We now determine the behaviour of the harmonic functions $H,K,L,M$ at a fixed point.
\begin{lemma}\label{lemma_harmonic_fixedpt}
	Let $p\in \hat\Sigma$ be a fixed point of $W$ as above. $H, K, L, M$ have (at most) simple poles at $\bm x(p)\in \mathbb{R}^3$.\label{lemma_fixedpt}
\end{lemma}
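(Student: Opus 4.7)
The plan is to establish that each of $H, K, L, M$ is $O(r^{-1})$ as $\bm x \to \bm x(p)$, where $r := |\bm x - \bm x(p)|$, and then to apply the removable singularity theorem for harmonic functions on $\mathbb{R}^3$: a function harmonic on a punctured neighbourhood of a point and bounded there by a constant times $1/r$ extends to $\alpha/r$ plus a function harmonic throughout the neighbourhood, so that only a simple pole is possible.

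First I would exploit the local structure near the fixed point. By Assumption \ref{assumption2} \ref{assumption_span}, $f(p) \neq 0$, so a neighbourhood of $p$ in the base $B$ is a smooth hyper-K\"ahler 4-manifold on which $W$ is a triholomorphic Killing field with $p$ as an isolated zero. In normal coordinates $u^a$ on $B$ centred at $p$ one has $W^a(u) = (\nabla_b W^a)|_p \, u^b + O(R^2)$ with $R := |u|$, and $\nabla W|_p$ is a non-degenerate element of $\mathfrak{sp}(1)$ (otherwise $p$ would not be an isolated zero; this corresponds to a nut of type $(\pm 1, \pm 1)$ in the classification of \cite{gibbons_classification_1979}). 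Hence $h(W, W) = c R^2 + O(R^3)$ for some constant $c > 0$. A Taylor expansion of $\td x^i = \iota_W X^{(i)}$ around $p$, using that $W|_p = 0$, yields
\begin{equation}
x^i - x^i(p) = \tfrac{1}{2} (\nabla_a W^c)|_p \, X^{(i)}_{cb}|_p \, u^a u^b + O(R^3) \; .
\end{equation}
By a change of local coordinates on $B$ preserving the hyper-K\"ahler structure at $p$, the triholomorphic action of $W$ can be brought into its standard Hopf form on $\mathbb{R}^4$, so that the leading quadratic part of $\bm x - \bm x(p)$ reduces to the explicit form displayed in (\ref{R4R3}). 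It follows that $r = C R^2 + O(R^3)$ for some $C > 0$, and consequently $h(W, W) = O(r)$.

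Combining these scalings with the invariant expressions (\ref{eq_harmonicinvariant}) and $N = f/H = f \, h(W, W) = O(r)$ then closes the argument. Since $g(V, W)$ and $g(W, W)$ both vanish at $p$ at rate $r$, while $f$ and the magnetic potential $\Psi$ are smooth and finite there, the numerators of $H, K, L, M$ in (\ref{eq_harmonicinvariant}) remain bounded as $\bm x \to \bm x(p)$, so all four harmonic functions are $O(r^{-1})$, and the removable singularity theorem delivers the claimed simple poles. The main obstacle is verifying that $r$ vanishes at \emph{precisely} the rate $R^2$ rather than faster, i.e., that the leading quadratic part of $\bm x - \bm x(p)$ is non-degenerate; this is handled cleanly by linearising the triholomorphic action, which reduces the local computation to the same Hopf-action calculation already carried out in the asymptotic region (Lemma \ref{lem_AF}).
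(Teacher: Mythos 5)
Your proof is correct, but it establishes the simple pole of $H$ by a genuinely different route from the paper. The paper's argument is soft: since $f\neq 0$ near $p$ and $N>0$ on a punctured neighbourhood, $H=f/N$ is non-vanishing (hence single-signed) and harmonic on a punctured ball about $\bm x(p)$, so B\^ocher's theorem for positive harmonic functions immediately gives $H=\alpha/r+\text{regular}$ with no need to control the rate at which $N$ vanishes; the functions $K,L,M$ then inherit at most simple poles because their ratios with $H$ are bounded by (\ref{eq_harmonicinvariant}). You instead prove the quantitative bound $N\gtrsim r$ by linearising the triholomorphic Killing field at its isolated zero: non-degeneracy of $\nabla W|_p$ gives $h(W,W)\sim\lambda^2R^2$, the quadratic Hopf-type map gives $r\sim R^2/4$ as in (\ref{R4R3}), and the growth-based removable singularity theorem finishes. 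What your route buys is an explicit leading coefficient and a unified treatment with the paper's own horizon analysis (Lemma \ref{lemma_harmonic_horizon}), where positivity is unavailable and the paper likewise argues via an $O(1/r)$ bound; the price is the local computation, which the positivity argument bypasses entirely. Two small points to tidy: your claim that $g(V,W)$ vanishes ``at rate $r$'' is not correct a priori, since $g(V,W)=-f^2\iota_W\omega=O(R)=O(\sqrt{r})$, but only boundedness of the numerators in (\ref{eq_harmonicinvariant}) is actually used, so the conclusion stands; and you should state explicitly that $H,K,L,M$ are harmonic on a full punctured ball about $\bm x(p)$ in $\mathbb{R}^3$, which requires the surjectivity of $\bm x$ from Lemma \ref{lemma_bijection} and the isolation of fixed points from Lemma \ref{lemma_orbit}, both of which the paper invokes at this step.
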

\begin{proof}
 By the remark below Lemma \ref{lemma_harmonicwelldefined} $f$ is non-zero on some neighbourhood of $p$ in $\hat\Sigma$ and $N>0$ on this neighbourhood except at $p$.  Therefore, from (\ref{eq_harmonicinvariant}) we 
see that $H$ is also non-zero on some neighbourhood of $p$ in $\hat \Sigma$. By Lemma \ref{lemma_bijection}, $\bm x: \hat\Sigma\to \mathbb{R}^3$ is surjective to some neighbourhood of $\bm x(p)$ and therefore $H$ is a harmonic function non-vanishing on a neighbourhood of $\bm x(p)$ in $\mathbb{R}^3$. Using B\^{o}cher's theorem for harmonic functions on 
$\mathbb{R}^3$ (see e.g. \cite{axler_harmonic_2001}), we see that $H$ has a simple pole at $\bm x(p)$ on $\mathbb{R}^3$. By 
(\ref{eq_harmonicinvariant}), it follows that all other harmonic functions $K,L,M$ have (at most) simple poles at $\bm x(p)$.
\end{proof}

We can now put  together the results we have obtained so far to completely fix the functional form of the harmonic functions for any solution satisfying our assumptions.

\begin{theorem}\label{theorem_harmonicfunctions}
	Any solution $(\mathcal{M}, g, F)$ of $D=5$ minimal supergravity satisfying Assumption \ref{assumption1} and \ref{assumption2} must 
have a Gibbons-Hawking base (wherever $f \neq 0$) and the associated harmonic functions $H,K,L,M$ are of `multi-centred' form
\begin{align}
	H =  \sum_{i=1}^N \frac{h_i}{r_i} \; , && K =  \sum_{i=1}^N \frac{k_i}{r_i} \; , && L = 1+ \sum_{i=1}^N \frac{l_i}{r_i} \; , &&M = m+ \sum_{i=1}^N \frac{m_i}{r_i}  \; , 
\end{align}
where $m\in \mathbb{R}$ is a constant, $r_i := |\bm x-\bm a_i|$ and the centres $\bm a_i\in \mathbb{R}^3$ are the coordinates of fixed points of $W$ or connected horizon components, and  $h_i, k_i, l_i, m_i$  are constants satisfying
\begin{equation}
\sum_{i=1}^N h_i =1  \; . \label{eq_sumh}
\end{equation}
\label{thm_necessary}
\end{theorem}

\begin{proof}
Lemmas \ref{lemma_harmonicwelldefined}, \ref{lemma_harmonic_horizon}, \ref{lemma_bijection},  \ref{lemma_harmonic_fixedpt} imply that  the harmonic functions can be written as
\begin{equation}
H= \sum_{i=1}^N \frac{h_i}{r_i} +  \tilde{H}
\end{equation}
for some constants $h_i$, where $\tilde{H}$ is a harmonic function that is regular everywhere in $\mathbb{R}^3$. The finiteness of the number of centres follows from Assumption \ref{assumption1} 
\ref{assumption_cauchy} and Lemma \ref{lemma_orbit}. On the other hand, asymptotic flatness implies Corollary \ref{cor_AF_H}, which implies (\ref{eq_sumh}) and that $\tilde{H}\to 0$ in the asymptotically flat end.  Therefore, $\tilde{H}$ is a bounded everywhere regular harmonic function on $\mathbb{R}^3$ and hence must be a constant, and this constant  vanishes using (\ref{eq_AF_H}) again. Thus, $H$ takes the claimed form.  An identical argument works for the other harmonic functions $K, L, M$ using (\ref{eq_AF_KLM})  giving the claimed form. 
\end{proof}

A consequence of Theorem \ref{thm_necessary} is that 
a solution in this class is determined by choosing $N$ points on $\mathbb{R}^3$ corresponding to  the simple poles of the harmonic functions, and assigning weights to 
each of the poles. 
However, it is not guaranteed that all such solutions correspond to a solution that is smooth in the DOC and at the horizon. In the next section we will determine the necessary and sufficient criteria for this.

\section{Smoothness of multi-centred solutions}\label{sec_Sufficient}

In this section we will determine the conditions required for smoothness of the solution in Theorem \ref{thm_necessary} at the horizon and the fixed points. In each case 
the strategy is the same: we locally expand the harmonic functions in terms of spherical harmonics around a centre.

\subsection{Regularity and topology of the horizon}\label{ssec_horizonRegularity}

As we showed in Lemma \ref{lemma_GNC_cart}, a  connected component of the horizon corresponds to a simple pole in $\mathbb{R}^3$ of the harmonic functions associated to the Gibbons-Hawking base space. Without loss of generality we can take a horizon component at the origin of $\mathbb{R}^3$, so the harmonic functions take the form
\begin{equation}
H= \frac{h_{-1}}{r}+ h_0+ \tilde{H} \; ,   \label{eq_harmonicexpansion}
\end{equation}
where $h_{-1}, h_0$ are constant, $\tilde{H}$ is a harmonic function which is smooth (in fact analytic) and vanishes at $r=0$, and similarly for $K, L, M$. It will sometimes be useful to expand $\tilde{H}= \sum_ {l\geq 1, |m| \leq l} h_{lm} r^l Y^m_l$ where $Y^m_l(\theta, \phi)$ are the spherical harmonics on $S^2$ and $h_{lm}$ are constants.  It then follows from (\ref{eq_chieq}) and (\ref{eq_xi}) that the 1-forms take the form, up to a gradient,
\begin{align}
	\chi &= (h_{-1}\cos\theta+\chi_0)\td\phi +\tilde \chi \; ,   \label{eq_chi_nearH}\\
	\xi &= (-k_{-1}\cos\theta+\xi_0) \td\phi +\tilde \xi \; , \label{eq_xi_nearH}
	\end{align}
where we have used the identity $\star_3 \td (\cos\theta \td \phi)= \td (r^{-1})$, $\chi_0, \xi_0$ are constants and $\tilde{\chi}, \tilde{\xi}$ are 1-forms that satisfy $\star_3 \td \tilde{\chi}= \td \tilde H$ and $\star_3 \td \tilde{\xi}= \td \tilde K$. Therefore,  in particular, $\tilde{\chi}, \tilde{\xi}$ must be smooth 1-forms on $\mathbb{R}^3$.   Upon expanding the harmonic functions in spherical harmonics  we find that, up to a gradient,
\begin{equation}
\tilde\chi= \sum_{\substack{ l \geq 1\\ |m| \leq l}} \frac{h_{lm}  r^{l+1}}{l+1} \star_2 \td Y^m_l  \; ,  \label{eq_chi_nearH_polar}
\end{equation}
where $\star_2$ is the Hodge star operator for the metric $\td \Omega^2$ on the unit $S^2$, and similarly for $\tilde\xi$.

In order to determine the other 1-form $\hat\omega$  we need to solve (\ref{eq_omegahat}), which is a bit more complicated.  We can decompose this, up to a gradient, as
\begin{equation}
\hat\omega = (\omega_0 +\omega_{-1} \cos \theta) \td \phi +\tilde\omega\; , \qquad \tilde\omega:= \hat{\omega}_{\text{sing}} + \hat{\omega}_{\text{reg}} \;, \label{eq_omegahat_nearH}
\end{equation}
where $\omega_0$, $\omega_{-1}$ are constants and $\hat{\omega}_{\text{sing}}$, $\hat{\omega}_{\text{reg}}$ are 1-forms defined by
\begin{align}
\omega_{-1} &:= h_0m_{-1}-m_0h_{-1}+\tfrac{3}{2}(k_0l_{-1}-l_0k_{-1})\; , \label{eq_omegam1}  \\
\star_3 \td \hat{\omega}_{\text{sing}} &= \frac{1}{r} \td F- F\td \left(\frac{1}{r} \right)\; , \qquad F:= h_{-1}\tilde{M} - m_{-1} \tilde{H}+ \tfrac{3}{2} ( k_{-1} \tilde{L}- l_{-1} \tilde{K}) \; ,  \label{eq_doms}\\
\star_3 \td \hat{\omega}_{\text{reg}}&= \td \left(h_{0}\tilde{M} - m_{0} \tilde{H}+ \tfrac{3}{2} ( k_{0} \tilde{L}- l_{0} \tilde{K}) \right) + \tilde{H} \td \tilde{M}- \tilde{M} \td \tilde{H}+ \tfrac{3}{2} ( \tilde{K}\td \tilde{L}- \tilde{L}\td \tilde{K})   \; .  \label{eq_domr}
\end{align}
In particular,  $\hat{\omega}_{\text{reg}}$ is determined by the regular parts of the harmonic functions and therefore  must be a smooth 1-form on $\mathbb{R}^3$. On the other hand,  $\hat{\omega}_{\text{sing}}$ receives contributions from the singular parts of the harmonic functions and thus requires a little more care.  In fact, by expanding the harmonic functions in spherical harmonics, $F= \sum_{l\geq 1, |m| \leq l} f_{lm} r^l Y^m_l$ for constants $f_{lm}$, one can derive the explicit expression (again up to a gradient),
\begin{equation}
\hat{\omega}_{\text{sing}}= \sum_{\substack{ l \geq 1\\ |m| \leq l}}  \frac{f_{lm} r^l }{l} \star_2 \td Y^m_l  \; . \label{eq_omegasing}
\end{equation}
 In particular, $\hat{\omega}_{\text{sing}}$ is a smooth 1-form on $S^2$ for each fixed value of $r$, since the spherical harmonics $Y^m_l$ are smooth on $S^2$, and vanishes at $r=0$.

We now have all the ingredients to construct the spacetime metric and gauge field near the horizon.  In fact,  since the first two  orders in the $r$-expansions of the harmonic functions are $\phi$-independent,  the analysis is essentially identical to that in the case of solutions with biaxial symmetry~\cite{breunholder_moduli_2019}.

Using (\ref{eq_omegapsi}), (\ref{eq_f}) and (\ref{eq_Ndef}) it follows that near the horizon the invariants take the form
\begin{align}
  N^{-1} &= \frac{\alpha_0}{r^2} + \frac{\alpha_1}{r}+\mathcal{O}(1) \; , \qquad f= \frac{h_{-1}}{\alpha_0} r +\frac{h_0\alpha_0- h_{-1}\alpha_1}{\alpha_0^2} r^2+\mathcal{O}(r^3)\; , \label{eq_inv1_nearH} \\
  g_{\psi\psi} &= \beta_0+r\beta_1+\mathcal{O}(r^2) \; , \qquad  g_{t\psi} = r(\gamma_0+r\gamma_1)+\mathcal{O}(r^3) \; ,  \label{eq_inv2_nearH}
\end{align}
where  $\alpha_i$, $\beta_i$, $\gamma_i$ are constants and the error terms are analytic in $r$ and smooth on $S^2$ (since they depend on the spherical harmonics). Since $N>0$ in the DOC away from fixed points and $W=\partial_\psi$ is spacelike it follows that $\alpha_0>0$ and $\beta_0>0$ respectively. In fact, using the explicit expressions for these constants it turns out that these inequalities are equivalent to the single condition $\alpha_0^2\beta_0>0$  which reads~\cite{breunholder_moduli_2019}
\begin{equation}
	-h^2_{-1}m^2_{-1}-3h_{-1}k_{-1}l_{-1}m_{-1}+h_{-1}l^3_{-1}-2k_{-1}^3m_{-1}+\frac{3}{4}k_{-1}^2l_{-1}^2>0 \; . \label{ineq_horizon}
\end{equation}
In fact this is not only necessary, but also sufficient for the existence of a smooth horizon away from the axes $\theta =0,\pi$. 
This is revealed by performing the coordinate change $(t, \psi, r, \theta, \phi) \to (v, \psi', r, \theta, \phi')$ defined by 
\begin{align}
	\td t&= \td v+\left(\frac{A_0}{r^2}+\frac{A_1}{r}\right)\td r \; , &\td\psi&=\td\psi'+\frac{B_0}{r}\td r-\chi_0\td\phi' \; , &\td\phi&=\td\phi' \; ,
	\label{eq_coordTf}
\end{align}
with 
\begin{align}
	&A_0^2=\beta_0\alpha_0^2 \; , \qquad\qquad B_0 = -\frac{A_0\gamma_0}{\beta_0} \; , \nonumber \\
	A_1 = \frac{\alpha_0\beta_0}{2A_0}&\left(B_0^2\beta_1+2B_0A_0\gamma_1+\alpha_1-\frac{2h_{-1}(h_0\alpha_0-h_{-1}\alpha_1)}{\alpha_0^3}A_0^2\right) \; .   \label{eq_A0B0A1}
\end{align}
We emphasise that the single condition (\ref{ineq_horizon}) (which is equivalent to $\alpha_0>0, \beta_0>0$) is sufficient for this coordinate change to exist.  This coordinate change is  the same as in the case with biaxial symmetry~\cite{breunholder_moduli_2019}.  The metric in the new chart can be written as
\begin{align}
g  =&- f^2 (\td v+ \hat{\omega}')^2 + 2 g_{t\psi}  (\td v + \hat{\omega}') ( \td \psi'+h_{-1} \cos\theta \td \phi'+\tilde{\chi}')  \nonumber \\
&+ 2 g_{vr}( \td v+\hat\omega')\td r+ g_{rr} \td r^2 + 2 g_{\psi' r} ( \td \psi'+h_{-1} \cos\theta \td \phi'+\tilde{\chi}')  \td r \label{eq_g_nearH} \\
&+  g_{\psi\psi} ( \td \psi'+h_{-1} \cos\theta \td \phi'+\tilde{\chi}') ^2+ r^2 N^{-1} (\td\theta^2 + \sin^2\theta \td \phi'^2) \; , \nonumber 
\end{align}
where $\tilde{\chi}'$, $\hat\omega'$ denote the  1-forms $\tilde\chi$, $\hat\omega$ with $\phi$ replaced by $\phi'$ and
\begin{align}
g_{vr} &=   - f^2 \left(\frac{A_0}{r^2}+\frac{A_1}{r} +\frac{\omega_\psi B_0}{r}\right), \qquad g_{\psi'r}= \frac{N B_0}{f^2r} + g_{t\psi} \left(\frac{A_0}{r^2}+\frac{A_1}{r} +\frac{\omega_\psi B_0}{r}\right) \; , \\
g_{rr}&= \frac{1}{N}  +\frac{N B_0^2}{f^2 r^2}- f^2 \left(\frac{A_0}{r^2}+\frac{A_1}{r} +\frac{\omega_\psi B_0}{r}\right)^2  \;.
\end{align}
Using the expansion of the invariants (\ref{eq_inv1_nearH}), (\ref{eq_inv2_nearH}) and the form of the coordinate change (\ref{eq_A0B0A1}), it follows that 
\begin{equation}
g_{vr} =\pm  \frac{1}{\sqrt{\beta_0}}+ \mathcal O (r), \qquad g_{rr}= \mathcal O (1), \qquad g_{\psi'r} = \mathcal O(1) \; ,
\end{equation}
where the error terms are analytic in $r$ and smooth on $S^2$. Therefore, we deduce from (\ref{eq_inv1_nearH}), (\ref{eq_inv2_nearH}), (\ref{eq_chi_nearH_polar}), (\ref{eq_omegahat_nearH}), (\ref{eq_omegasing}), that the spacetime metric  (\ref{eq_g_nearH}) and its inverse are analytic in $r$ at $r=0$ and can be analytically extended to $r\leq 0$. The hypersurface $r=0$ is a Killing horizon of $V=\partial_v$ and the metric induced on the cross-section of the horizon $v=\text{const}, r=0$ is
\begin{equation}
	\beta_0(\td \psi'+ h_{-1} \cos\theta \td \phi')^2+\alpha_0 (\td\theta^2 + \sin^2\theta \td \phi'^2)  \; .\label{eq_Hmetric}
\end{equation} Furthermore, it can be shown that the Maxwell field is also analytic at $r=0$ and the near-horizon limit of the solution takes the same form as in the biaxisymmetric case.

We will now turn to analysing regularity at the axes $\theta=0, \pi$ including where these intersect the horizon at $r=0$.  By inspecting the horizon metric (\ref{eq_Hmetric}) it is clear  the vector fields that vanish at the axes  are
\begin{equation}
K_\pm = \partial_{\phi'}  \mp h_{-1} \partial_{\psi'}  \; ,
\end{equation}
in particular,  $K_+=0$ at $\theta=0$ and $K_-=0$ at $\theta=\pi$.  Therefore, smoothness of the spacetime metric at the axis $\theta=0, \pi$ requires
\begin{equation}
	g_{\mu \rho} K_\pm^\rho =0   \quad \text{ at } \theta = 0, \pi, \text{ respectively.}\label{eq_axisregularity} 
\end{equation}
For $\mu= v$  this condition is equivalent to 
\begin{equation}
0= f^2( \hat{\omega}_\phi+\omega_\psi( \chi_\phi- \chi_0 \mp h_{-1}) )= f^2 (\pm \omega_{-1} +\omega_0)   \quad \text{ at } \theta = 0, \pi, \text{ respectively, }
\end{equation}
where  we have used (\ref{eq_coordTf}) and the second equality follows from (\ref{eq_chi_nearH}), (\ref{eq_omegahat_nearH}) and the fact that $\tilde{\chi}_\phi=\tilde{\omega}_\phi=0$ at $\theta=0, \pi$ for any $r$ (this is because  $\partial_\phi=0$ at $\theta=0, \pi$ and $\tilde\chi_\phi= \iota_{\partial_\phi} \tilde\chi$ where $\tilde\chi$ is a smooth 1-form on $S^2$ and similarly for $\tilde\omega_\phi$). Therefore, since $\omega_{-1}, \omega_0$ are constants the condition $g_{v \rho} K_\pm^\rho=0$ at $\theta=0, \pi$ is equivalent to 
\begin{equation}
\omega_{-1}=\omega_0=0  \; .  \label{eq_Haxisreg}
\end{equation} 
It can be similarly shown that (\ref{eq_Haxisreg}) are sufficient for (\ref{eq_axisregularity}) to hold for all other components $\mu$. Therefore, (\ref{eq_axisregularity}) is equivalent to (\ref{eq_Haxisreg}).

To verify smoothness at the axes we also need to check that all higher order terms in the expansion around $\theta=0$ and $\theta=\pi$ are suitably smooth. We will return to this point below. First it is convenient to perform a global analysis of the horizon geometry in order to deduce the possible horizon topologies. In fact, we will now show that asymptotic flatness imposes global constraints that restrict the horizon topology as in the biaxisymmetric case.

\begin{lemma} 
\label{lemma_Htop}For a multi-centred solution as given in Theorem \ref{theorem_harmonicfunctions} the topology of cross-sections of each connected component of the event horizon  is $S^3, S^2 \times S^1$ or a lens space $L(p, 1)$.
\end{lemma}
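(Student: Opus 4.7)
My plan is to exhibit each connected component $H_i$ of the horizon cross-section as the total space of a principal $U(1)$-bundle over a round 2-sphere, and then to invoke the standard classification of such bundles. The first step is to show that the $U(1)$-action generated by $W$ on $H_i$ is free. Corollary \ref{cor_Wnonzero} asserts that $W$ has no zeros on the horizon, so the action is at least locally free. To rule out discrete isotropy, I would appeal to Lemma \ref{lemma_orbit} and, more precisely, to its proof, which actually excludes exceptional orbits everywhere, including in a deleted neighbourhood of each $\hat H_i$, by noting that such a neighbourhood admits a Gibbons--Hawking chart in which the $\psi$-period is fixed. Taking the limit onto $H_i$ then rules out any residual discrete isotropy there, so the $W$-action on $H_i$ is free and $H_i$ is a principal $U(1)$-bundle over the compact closed 2-manifold $H_i/U(1)$.

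The second step is to identify the base $H_i/U(1)$ as $S^2$. In the chart of (\ref{eq_Hmetric}) the $W$-orbits are the $\psi'$-circles, since $\partial_{\psi'} = \partial_\psi = W$ under the chart change (\ref{eq_coordTf}), so the quotient is naturally parametrised by $(\theta, \phi')$ with $\theta \in [0,\pi]$. An analysis of the near-axis polar structure at $\theta = 0, \pi$, combined with the asymptotic identification lattice (\ref{eq_lattice_original}) pushed through (\ref{eq_coordTf}), forces $\phi'$ to have period $2\pi$ in the quotient and collapses each axis to a single point, yielding $H_i/U(1) \cong S^2$. Equivalently, the induced quotient metric $\alpha_0(\td\theta^2 + \sin^2\theta\,\td\phi'^2)$ has constant positive curvature and area $4\pi\alpha_0$, so Gauss--Bonnet forces $\chi = 2$ and the quotient is $S^2$.

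The final step is purely topological: principal $U(1)$-bundles over $S^2$ are classified by the first Chern number $c \in H^2(S^2,\mathbb{Z}) \cong \mathbb{Z}$, with total space diffeomorphic to $S^2 \times S^1$ when $c = 0$, to $S^3$ when $|c| = 1$, and to the lens space $L(|c|, 1)$ when $|c| \geq 2$ (the last two cases coming from the free cyclic quotient of $S^3$ along the Hopf fibre). Feeding this into step two yields the claim.

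The main obstacle I anticipate lies in step two, namely verifying that the identifications inherited from asymptotic infinity really do act only along the $W$-fibres over the $(\theta, \phi')$-base and produce a globally smooth $S^2$ rather than an orbifold with conical defects at $\theta = 0, \pi$. Concretely this amounts to checking that the axis-smoothness conditions at both poles impose $\chi_0 \pm h_{-1} \in 2\mathbb{Z}$ (in particular, $h_{-1} \in \mathbb{Z}$), which is what pins down the Chern number $c$ of the bundle and therefore the topology of $H_i$.
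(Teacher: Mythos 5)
Your proof is correct in substance but organised quite differently from the paper's. The paper works directly with the identification lattice: it passes to the coordinates $\phi^\pm$ adapted to the degenerating directions, pushes the asymptotic identifications (\ref{eq_lattice_original}) through the coordinate change (\ref{eq_coordTf}), and demands that the resulting lattice on $(\phi^+,\phi^-)$ coincide with the standard lens-space lattice (\ref{eq_QS_phipm}); this simultaneously forces $h_{-1}\in\mathbb{Z}$, $\chi_0\equiv h_{-1}\bmod 2$ and $q\equiv -1 \bmod p$, so the restriction to $L(p,1)$ comes out of an explicit lattice computation (with $h_{-1}=0$ treated separately as $S^1\times S^2$). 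Your bundle-theoretic route — free $U(1)$ action, quotient forced to be $S^2$ by Gauss--Bonnet applied to the constant-curvature quotient metric $\alpha_0(\td\theta^2+\sin^2\theta\,\td\phi'^2)$, then the Euler-number classification of circle bundles over $S^2$ — explains more conceptually \emph{why} only $L(p,1)$ can occur, and it proves the lemma as stated without ever computing the periodicities; the price is that it does not by itself identify $p$ with $|h_{-1}|$ or produce the integrality conditions on $h_{-1}$ and $\chi_0$, which the paper needs downstream in Theorem \ref{thm_classification} and which you correctly flag as the residual work in your step two. One point needs tightening: freeness of the action on $H_i$ does not follow by ``taking the limit'' from freeness on a deleted neighbourhood, since isotropy is upper semicontinuous and can jump up on a limit orbit (an isolated exceptional fibre of a Seifert fibration has only free orbits nearby). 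The correct argument is the one implicit in the proof of Lemma \ref{lemma_orbit}: if an orbit in $H_i$ had isotropy $\mathbb{Z}_n$, the slice theorem applied to the $U(1)$ action on $\Sigma$ would produce a fixed set of $\mathbb{Z}_n$ containing an arc transverse to the horizon, i.e.\ an arc of exceptional orbits ending on $\hat H_i$, which is exactly what that lemma excludes. With that substitution your argument is complete.
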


\begin{proof} The analysis splits into two cases depending on if $h_{-1}$ vanishes. First suppose $h_{-1}\neq 0$. It is convenient to define coordinates adapted to the vectors that vanish on the axes, that is,  $K_\pm =\partial_{\phi^\pm}$ where
\begin{equation}
\phi^\pm:=\frac{1}{2} \left(\phi' \mp \frac{\psi'}{h_{-1}} \right) \; .
\label{eq_coordTf2}
\end{equation}
By asymptotic flatness the coordinates $\psi$, $\phi$ satisfy (\ref{eq_lattice_original}), which is equivalent to $(\theta, \tilde\phi, \tilde\psi)$, 
defined by (\ref{eq_eulerinfinity}),  being Euler angles on the $S^3$ at spatial infinity. Using the coordinate change (\ref{eq_coordTf}) this is 
equivalent to the identifications on the $\phi^\pm$ coordinates
\begin{align}
	P: \,\,\,(\phi^+, \phi^-)&\sim\left(\phi^+-2\pi\frac{1}{h_{-1}}, \phi^-+2\pi\frac{1}{h_{-1}}\right) \; ,\nonumber\\
	R:\,\,\,(\phi^+, \phi^-)&\sim\left(\phi^++2\pi\frac{h_{-1}-\chi_0}{2h_{-1}}, \phi^-+2\pi\frac{h_{-1}+\chi_0}{2h_{-1}}\right) \; .   \label{eq_PR_phipm}
\end{align}
Recall that the identification lattice of $L(p,q)$ is generated by
\begin{align}
	Q: \,\,\,(\phi^+, \phi^-)&\sim\left(\phi^++2\pi\frac{1}{p}, \phi^-+2\pi\frac{q}{p}\right) \; ,\nonumber\\
	S:\,\,\,(\phi^+, \phi^-)&\sim\left(\phi^+, \phi^-+2\pi\right) \; .     \label{eq_QS_phipm}
\end{align}
The requirement that (\ref{eq_Hmetric}) extends to a smooth metric on a compact manifold means that the lattice generated by $\{P,R\}$ must be 
the same as the one generated by $\{Q, S\}$. It can be shown that this  condition is  equivalent to $h_{-1} = \pm p$, $\chi_0\equiv h_{-1} \mod 2$ 
and $q\equiv-1 \mod p$. In particular, notice that $h_{-1}$ and $\chi_0$ are required to be integers with the same parity. Therefore, the
allowed topologies are $L(\pm h_{-1}, -1)\cong L(|h_{-1}|,  1)$ or $S^3$ for $h_{-1}=\pm1$.

In the case  $h_{-1}=0$ the horizon geometry (\ref{eq_Hmetric}) extends to a smooth metric on a compact manifold if and only if $\psi'$ and $\phi'$ are 
independently periodic and the periodicity of $\phi'$ is $2\pi$. The horizon topology in this case is  $S^1\times S^2$.  Again, by asymptotic 
flatness $\psi$ and $\phi$ are independently periodic with periodicities $4\pi$, $2\pi$, respectively (\ref{eq_lattice_original}), which 
in terms of the coordinates (\ref{eq_coordTf}) is equivalent to 
\begin{align}
P: (\psi', \phi') &\sim (\psi'+4\pi, \phi')  \; , \nonumber \\
R: (\psi', \phi') &\sim  ( \psi'+ 2\pi\chi_0 , \phi' +2\pi) \; . 
\end{align}
Thus,  in order for $\psi'$ and $\phi'$ to be independently periodic, $\chi_0$ must be an even integer. 
\end{proof}

Now we have the global geometry of the horizon, we can calculate its area, which yields
\begin{equation}
	A_H=16\pi^2\sqrt{-h^2_{-1}m^2_{-1}-3h_{-1}k_{-1}l_{-1}m_{-1}+h_{-1}l^3_{-1}-2k_{-1}^3m_{-1}+\frac{3}{4}k_{-1}^2l_{-1}^2} \; .
\end{equation}
The quantity inside the square-root is always positive due to (\ref{ineq_horizon}).

We now return to verifying smoothness at the axes $\theta=0, \pi$.  For definiteness, we focus on the $\theta =0$ axis, although the argument for $\theta=\pi$ is identical. First consider the case $h_{-1}\neq 0$ and introduce coordinates $\phi^\pm$ adapted to the vectors $K_\pm$ that vanish on the axes as in (\ref{eq_coordTf2}).  In particular, $K_+=\partial_{\phi^+}$ vanishes at $\theta=0$ and from the periodicities (\ref{eq_QS_phipm}) it is easy to see that $\phi^+$ must be $2\pi$-periodic for fixed $\phi^-$.   Now,  inverting (\ref{eq_coordTf2}) we have $\phi'= \phi^++\phi^-$ and $\psi' = h_{-1}( \phi^- - \phi^+)$ which allows us to easily write the full metric (\ref{eq_g_nearH}) in terms of $\phi^\pm$.  In particular, the explicit dependence of the metric components on $\phi^\pm$ comes from the dependence on $\phi'$ of the higher order terms in $r$, which in turn arises from the $\phi$-dependence of the spherical harmonics
\begin{align}
	Y_l^m(\theta, \phi) =  c_{lm} P_l^m(\cos\theta)e^{im\phi'} = c_{lm} P_l^m(\cos\theta)e^{im(\phi^++\phi^-)}  = Y^m_l(\theta, \phi^+) e^{im \phi^-}  \; .
\end{align}
Therefore, since $Y^m_l$ are smooth on $S^2$ they will be smooth at the pole $\theta=0$ of the sphere parameterised by $(\theta, \phi^+)$  (recall $\phi^+$ is $2\pi$-periodic for fixed $\phi^-$). Furthermore, the 1-form
\begin{equation}
 \td \psi'+h_{-1} \cos\theta \td \phi' = h_{-1} \left(  (1+\cos\theta)  \td \phi^- + (-1+\cos\theta) \td \phi^+ \right)
\end{equation}
is manifestly smooth at the axis $\theta=0$.  Hence, from the $r$-expansion of the functions (\ref{eq_inv1_nearH}), (\ref{eq_inv2_nearH})  and  the 1-forms (\ref{eq_chi_nearH_polar}), (\ref{eq_omegahat_nearH}), (\ref{eq_omegasing}) together with (\ref{eq_Haxisreg}), it now follows that the full metric is smooth at the axis $\theta=0$, at least on a neighbourhood of the horizon $r=0$ (the domain of convergence of the expansion of the harmonic functions into spherical harmonics). This establishes that the metric is smooth at the axis including up to where it intersects the horizon if and only if (\ref{eq_Haxisreg}) holds.   It can be easily seen that the Maxwell field is also smooth at the axes including at the intersection with the horizon (this essentially follows from the fact that the only new type of term is from $\td \xi= k_{-1} \sin\theta \td \theta \wedge \td \phi+\td \tilde\xi$ using (\ref{eq_xi_nearH})).

Finally, let us consider the case $h_{-1}=0$. This is simpler since as observed above $\psi', \phi'$ are independently periodic and correspond 
to the angle coordinates on the $S^1$ and $S^2$ factors of the horizon respectively. Therefore, the metric (\ref{eq_g_nearH}) is smooth on the $S^2$ 
parameterised by  $(\theta, \phi')$ in a neighbourhood of $r=0$ if and only if (\ref{eq_Haxisreg}), since the dependence of the higher order terms 
in $r$ on $(\theta, \phi')$ is through the spherical harmonics.

To summarise, we have shown that the solution is smooth at a horizon, including where it intersects the axis, if and only if the coefficients in the expansions (\ref{eq_harmonicexpansion}), (\ref{eq_chi_nearH}) etc, satisfy (\ref{ineq_horizon}), (\ref{eq_Haxisreg}),
\begin{equation}
h_{-1} \in \mathbb{Z}, \qquad   \chi_0+h_{-1} \in 2\mathbb{Z}  \; ,   \label{eq_hchicond_hor}
\end{equation}
and the horizon topology is $S^1 \times S^2$ if $h_{-1}=0$, $S^3$ if $h_{-1}= \pm 1$ and $L( |h_{-1}|, 1)$ otherwise.

\subsection{Smoothness at the fixed points}\label{ssec_fixed}

In this section we will derive the necessary and sufficient criteria for the solution to be smooth around a fixed point of the axial Killing field $W$.  As shown in Lemma \ref{lemma_harmonic_fixedpt} a fixed point corresponds to a simple pole of the harmonic functions $H, K, L, M$, so without loss of generality, we will take this to be at the origin. We then 
 expand these functions in the  same way that we did for a horizon in equation (\ref{eq_harmonicexpansion}).  It follows that the 1-forms $\chi, \xi, \omega$ can also be written in the same form as for a horizon, namely these are given by equations (\ref{eq_chi_nearH}-\ref{eq_omegasing}).
 
 As argued earlier, at a fixed point we must have $f\neq 0$ and therefore the base metric is well-defined at least on a neighbourhood of such points.  It follows that the base metric $h$, the 1-form $\omega$ and the function $f$ must be smooth at the fixed points.
Let us first consider the base metric near a fixed point at $r=0$.  It is convenient to introduce coordinates 
\begin{align}
	r&=\frac{1}{4}R^2\; , &\psi'&=\psi+\chi_0\phi\; , &\phi'&=h_{-1}\phi \; ,
\end{align}
so that the base metric takes the form
\begin{equation}
	h=G\left(\td R^2+\frac{1}{4}R^2\left[\td\theta^2+  \sin^2\theta \frac{\td\phi'^2}{h_{-1}^2}+\frac{1}{G^2}(\td\psi'+ \cos\theta \td\phi'+\tilde\chi)^2\right]\right),
	\label{eq_GHmetric_nearFP}
\end{equation}
where we have defined $G:=rH=h_{-1}+\mathcal{O}(R^2)$ and $\tilde\chi = \mathcal O (R^4)$ is defined by the regular part $\tilde H$ of the harmonic function which can be written as (\ref{eq_chi_nearH_polar}).  In order to avoid a curvature singularity of the base metric as $R\to 0$ we must require
\begin{equation}
	h_{-1}=\pm 1 \;,  \label{eq_hcond} 
\end{equation}
in which case (\ref{eq_GHmetric_nearFP}) approaches a locally flat metric on $\pm \mathbb{R}^4$.  Furthermore, to avoid any conical singularities 
at $R=0$ the angles $(\theta,\phi', \psi')$ must be Euler angles on $S^3$ with identifications as in (\ref{eq_lattice}), in which case the base 
metric approaches the flat metric on $\pm \mathbb{R}^4$ (note $G=1$ and $\tilde\chi=0$ corresponds to the flat metric on $\mathbb{R}^4$).  On the 
other hand, by asymptotic flatness, $(\psi, \phi)$ must obey identifications (\ref{eq_lattice_original}), which implies that $(\psi', \phi')$ have 
the correct identification lattice (\ref{eq_lattice}) if and only if $\chi_0$ is an odd integer.

To verify the base metric is smooth at the fixed point  requires control of the higher order terms as $r\to 0$.  In particular, we must check that the metric is smooth at the origin of $\mathbb{R}^4$. For this,  it is easiest to use  cartesian coordinates $(u_I)_{I=1,2,3,4}$ on $\mathbb{R}^4$, which are introduced as follows. We first define the double-polar coordinates\footnote{
	$\phi^\pm$ in this section are defined differently to those in Section \ref{ssec_horizonRegularity}.
}
\begin{align}
	X_+&=R\cos\frac{\theta}{2}\; , & X_-&=R\sin\frac{\theta}{2}\; , & \phi^{\pm}&=\frac{1}{2}(\psi'\pm\phi') \; ,\label{EulerToR4coords}
\end{align}
in terms of which the leading order metric is 
\begin{equation}
h \sim \pm \left( \td X_+^2 + X_+^2 (\td \phi^+)^2 +  \td X_-^2 + X_-^2 (\td \phi^-)^2 \right)   \; .
\end{equation}
Hence, cartesian coordinates on $\mathbb{R}^4$ are given by 
\begin{align}
	u_1&=X_+\cos\phi^+, &u_2&=X_+\sin\phi^+,& u_3&=X_-\cos\phi^-, & u_4&=X_-\sin\phi^- \; .
\end{align}
It is helpful for the analysis to note that the following $\mathbb{R}^3$-functions are smooth on $\mathbb{R}^4$:
\begin{align}
	r = \frac{1}{4}(u_1^2+u_2^2+u_3^2+u_4^2)\;, &\qquad r\cos\theta = \frac{1}{4}(u_1^2+u_2^2-u_3^2-u_4^2)\;, \nonumber\\
	r\sin\theta \exp(i\phi) &= \frac{1}{2}(u_1+ ih_{-1}u_2)(u_3- ih_{-1}u_4)  \; .\label{eq_regularfns}
\end{align}
Furthermore, noting that the $\mathbb{R}^3$ cartesian coordinates satisfy $x^1+ i x^2= r\sin \theta e^{i\phi}$ and $x^3= r \cos \theta$, 
it immediately follows that $x^i$ are smooth functions on $\mathbb{R}^4$, and more generally any smooth function $f(x)$ on $\mathbb{R}^3$ 
is also a smooth function on $\mathbb{R}^4$. In particular, any regular harmonic function on $\mathbb{R}^3$ is automatically smooth on 
$\mathbb{R}^4$. Therefore, the function $G=h_{-1}+ r (h_0+\tilde H)$ defined above is smooth on $\mathbb{R}^4$. It is also helpful to note 
that the 1-form
\begin{equation}
	r(\td \psi'+\cos\theta d\phi')=\frac{1}{2}(X_+^2\td\phi^++X_-^2\td\phi^-)
\end{equation}
is smooth on $\mathbb{R}^4$ (convert to cartesian coordinates $u_I$), and the 1-form $\tilde{\chi}$ which is defined by 
$\star_3 \td \tilde\chi= \td \tilde H$ where $\tilde{H}$ is the regular part of the harmonic function in (\ref{eq_harmonicexpansion}) 
must also be smooth on $\mathbb{R}^4$ (up to a gauge transformation). Now, we can write the base metric (\ref{eq_GHmetric_nearFP}) as 
\begin{align}
h 
&= G \td u_I \td u_I - \frac{(h_{-1}+G) (h_0+ \tilde H)}{G}  r^2(\td\psi'+ \cos\theta \td\phi')^2  
+ \frac{2}{G} r(\td\psi'+ \cos\theta \td\phi') \tilde \chi+ \frac{r }{G} \tilde \chi^2 \; ,
\end{align}
where  we used (\ref{eq_harmonicexpansion}) and the definition $G= r H$.  It is now manifest that the base metric is smooth at the origin 
of $\mathbb{R}^4$, since all functions and 1-forms that we have written it in terms of are smooth by the above comments.

We now turn to smoothness of the function $f$ on the base.  Using (\ref{eq_harmonicexpansion}), we have the useful identity
\begin{equation}
	(rH)^{-1}=h_{-1}-r H_0+r^2 G_1 \; ,
	\label{Hinvtrick}
\end{equation}
where $H_0 := h_0 + r \tilde H$ and $G_1:= \tilde H_0^2/(h_{-1}+r H_0)$ is a smooth function on $\mathbb{R}^4$. Then, using (\ref{eq_f}) and (\ref{Hinvtrick}), it is easy to see that
\begin{equation}
f^{-1} = \frac{l_{-1}+ k_{-1}^2 h_{-1}}{r} +  l_0- h_0 k_{-1}^2+ 2 h_{-1} k_{-1} k_0+ \mathcal{O}(r) \; ,
\end{equation}
where the error terms are smooth on $\mathbb{R}^4$.  Therefore, since we must have $f\neq 0$ at a fixed point, smoothness requires
\begin{align}
&l_{-1} = - h_{-1} k_{-1}^2  \; ,  \label{eq_fcond1} 
\\
&h_{-1}( l_0- h_0 k_{-1}^2+ 2 h_{-1} k_{-1} k_0) >0  \; ,\label{eq_fcond2}
\end{align} 
where the sign of the inequality is chosen to ensure the spacetime metric has the correct signature at the fixed point.
We deduce that these are the necessary and sufficient conditions for $f$ to be smooth at the fixed point.

Let us now look at the 1-form $\omega$ which decomposes as (\ref{eq_omega}).   The invariant $g(V, W)= - f^2 \omega_\psi$, together with the 
fact that $f\neq 0$ at the fixed point,  implies that $\omega_\psi$ must be a smooth function that vanishes at the fixed point.   By expanding 
(\ref{eq_omegapsi})   near $r=0$ and using (\ref{eq_fcond1}) one finds
\begin{equation}
\omega_\psi= \frac{ m_{-1}- \tfrac{1}{2} k_{-1}^3}{r} + \mathcal O (1) \; ,
\end{equation}
where the error terms are smooth.  Thus, in particular we must require
\begin{equation}
m_{-1} = \tfrac{1}{2} k_{-1}^3   \; .  \label{eq_ompsicond1}
\end{equation}
Then, using (\ref{Hinvtrick}) together with (\ref{eq_fcond1}), (\ref{eq_ompsicond1}),  we can write (\ref{eq_omegapsi}) as
\begin{equation}
\omega_\psi=  h_{-1} (- \omega_{-1}+ F) +r  \tilde G_1 \; ,
\end{equation}
where $\omega_{-1}$ is the  constant defined in (\ref{eq_omegam1}),  $F$ is the harmonic function defined in (\ref{eq_doms}), and $\tilde G_1$ is 
a smooth function on $\mathbb{R}^4$.   Therefore, since $F$ vanishes at $r=0$, the vanishing of $\omega_\psi$ at $r=0$ occurs iff
\begin{equation}
\omega_{-1}=0   \; .  \label{eq_omegam1cond}
\end{equation}
Thus,  together with the form for $\hat\omega$ given in (\ref{eq_omegahat_nearH}), we can write
\begin{equation}
\omega=  \underbrace{h_{-1} F ( \td \psi'+ \cos \theta \td \phi')+ \hat \omega_{\text{sing}}  }_{=: \alpha} +\omega_0 h_{-1} \td \phi'+ \underbrace{ \tilde{G}_1 r(\td \psi'+\cos\theta \td\phi')+ \omega_\psi \tilde \chi+ \hat \omega_{\text{reg}}}_{\text{smooth on $\mathbb{R}^4$}}  \; , \label{eq_alphadef}
\end{equation}
where the fact that the last three terms are smooth immediately follows from our above analysis.  Therefore, smoothness of $\omega$ reduces to that 
of $\alpha$.  From (\ref{eq_omegasing}) and the fact that $F$ is harmonic it automatically follows that both terms of $\alpha$ are smooth at $r>0$, 
but smoothness at $r=0$ remains to be checked. To this end, a short computation reveals that
\begin{align}
\star_\delta \td \alpha  &= - \frac{h_{-1}}{r} \star_3 \td F+ r (\td\psi'+\cos \theta \td \phi') \wedge (  h_{-1}F  \star_3 \td (\cos\theta \td \phi') + \star_3 \td \hat \omega_{\text{sing}}  ) \\
&= - \frac{h_{-1}}{r} \star_3 \td F +  (\td\psi'+\cos \theta \td \phi') \wedge \td F  \; \nonumber ,
\end{align}
where $\star_\delta$ denotes the Hodge dual with respect to the flat metric $\delta = r (\td \psi' + \cos\theta \td \phi')^2+ r^{-1} \td x^i \td x^i$  on 
$\mathbb{R}^4$ with orientation $\epsilon_{\psi' 123}=1$, and the second line is obtained using (\ref{eq_doms}). Then, using the fact that $F$ is harmonic, 
we deduce
\begin{equation}
\td \star_\delta \td \alpha=0 \; .
\end{equation}
Therefore, $\alpha$ must be  a smooth 1-form on $\mathbb{R}^4$ (up to a gauge transformation). In Appendix \ref{app_omega} we show that $\alpha$ is a smooth 
1-form on $\mathbb{R}^4$ by an explicit calculation.  We deduce that  $\omega$ is smooth on $\mathbb{R}^4$ if and only if the constant $\omega_0=0$.

We finish this section with the analysis of the Maxwell field, which takes the form
\begin{equation}
	F=\frac{\sqrt{3}}{2}\td\left[f(\td t+\omega)-\frac{K}{H}(\td\psi'+\cos\theta \td\phi')+h_{-1}k_{-1}\cos\theta \td\phi'-\frac{K}{H}\tilde\chi+\tilde\xi\right] \; ,
\end{equation}
where we have used (\ref{eq_Maxwell}) and (\ref{eq_xi_nearH}).  The 1-form $\tilde \xi$ is smooth on $\mathbb{R}^4$ by the same argument for $\tilde \chi$, and by the above analysis we also know that $f (\td t+\omega)$ and $(K/H)\tilde \chi$ are smooth. Using 
(\ref{Hinvtrick}), the middle two terms in the gauge field can be rewritten as
\begin{align}
	&-\frac{K}{H}(\td\psi'+\cos\theta \td\phi')+h_{-1}k_{-1}\cos\theta \td\phi' = -k_{-1}h_{-1}\td\psi'  \\ &\qquad \qquad \qquad  \qquad + \left( k_{-1}(H_0-r G_1)-  (k_0 +  \tilde K) (r H)^{-1} \right) r (\td\psi'+\cos\theta \td\phi') \; , \nonumber
\end{align}
where the terms on the second line are manifestly smooth. Thus, the only non-smooth term is pure gauge, so the Maxwell field is indeed smooth on $\mathbb{R}^4$. This concludes our analysis 
at a fixed point of $W$.

To summarise, we have shown that the solution at a fixed point $r=0$ is smooth if and only if the parameters satisfy (\ref{eq_hcond}), (\ref{eq_fcond1}), (\ref{eq_fcond2}), (\ref{eq_ompsicond1}), (\ref{eq_omegam1cond}), $\omega_0=0$ and 
\begin{equation}
\chi_0 \in 2\mathbb{Z}+1\; .   \label{eq_chi0om0cond}
\end{equation}
The spacetime in a neighbourhood of a such point is then diffeomorphic to $\mathbb{R}^5$.

\section{General black hole and soliton solutions}\label{sec_Summary}

\subsection{Classification theorem}
We are now ready to deduce the main result of this paper which provides a  classification of black hole and solitons spacetimes that satisfy our assumptions.

\begin{theorem}\label{thm_classification}
	Any asymptotically flat, supersymmetric black hole or soliton solution $(\mathcal{M}, g, F)$ of $D=5$ minimal supergravity, with an axial symmetry, satisfying Assumption \ref{assumption1} and \ref{assumption2} must 
have a Gibbons-Hawking base (wherever $f \neq 0$) with  `multi-centred' harmonic functions
	\begin{align}
	H =  \sum_{i=1}^N \frac{h_i}{r_i} \; , && K =  \sum_{i=1}^N \frac{k_i}{r_i} \; , && L = 1+ \sum_{i=1}^N \frac{l_i}{r_i} \; , &&M =m+ \sum_{i=1}^N  \frac{m_i}{r_i}  \; , \label{eq_thm_harmonic}
\end{align}
where  $r_i := |\bm x-\bm a_i|$, and $\bm a_i=(x_i, y_i, z_i) \in \mathbb{R}^3$ correspond to fixed points of the axial Killing field or the connected  components of the horizon, 
and the 1-forms can be written as
\begin{align}
\chi &= \sum_{i=1}^N \left(\chi_0^i+ \frac{h_i(z-z_i)}{r_i}   \right) \td \phi_i \; , \qquad \xi =-  \sum_{i=1}^N   \frac{k_i(z-z_i)}{r_i}  \td \phi_i \; , \label{chixi_sol}\\ 
		\hat\omega & = \sum_{\substack{i,j=1 \\ i \neq j}}^N \left(h_im_j+\frac{3}{2}k_il_j\right)\beta_{ij}\; ,\label{omega_sol_expl}
	\end{align}
	where 
	\begin{align}
	\td \phi_i &:= \frac{(x-x_i)\td y- (y-y_i)\td x}{(x-x_i)^2+ (y-y_i)^2} \; , \\
		\beta_{ij}&:= \left(\frac{(\bm x-\bm a_i)\cdot (\bm a_i-\bm a_j)}{|\bm a_i-\bm a_j|r_i}-\frac{(\bm x-\bm a_j)\cdot (\bm a_i-\bm a_j)}{|\bm a_i-\bm a_j|r_j}-\frac{(\bm x-\bm a_i)\cdot (\bm x-\bm a_j)}{r_ir_j}+1\right)\nonumber \\
		&\qquad\qquad\times\frac{((\bm a_i-\bm a_j)\times(\bm x-\bm a_j))\cdot \td\bm x}{|(\bm a_i-\bm a_j)\times(\bm x-\bm a_j)|^2} \; . \label{eq_beta_sol}
	\end{align}
The parameters $h_i, k_i, l_i, m_i$  must satisfy 
\begin{align}
	&\sum_{i=1}^N h_i = 1 \; , \label{eq_thm_AF_h}\\
	&m = -\frac{3}{2}\sum_{i=1}^N k_i \; , \label{eq_thm_AF_m}
\end{align}
and for each $i=1, \dots, N$, \begin{align}
		h_i m +\frac{3}{2}k_i + \sum_{\substack{j=1 \\j\neq i}}^N\frac{h_i m_j -m_i h_j +\frac{3}{2}(k_il_j-k_jl_i)}{|\bm a_i -\bm a_j|}=0 \; . \label{eq_thm_ctr}
	\end{align}
	Moreover, if $\bm a_i$ is a fixed point  $\chi_0^i \in 2\mathbb{Z}+1$,
	\begin{align}
		h_i = \pm 1 \; , && l_i = -h_ik_i^2 \; , && m_i = \frac{1}{2}k_i^3 \; , \label{eq_thm_fix}
	\end{align}
	\begin{align}
		h_i + \sum_{\substack{j=1 \\j\neq i}}^N\frac{2k_ik_j-h_i(h_jk_i^2-l_j)}{|\bm a_i-\bm a_j|}>0 \; ,  \label{eq_thm_fh}
	\end{align}
	whereas if $\bm a_i$ is a horizon component $h_i \in \mathbb{Z}$, $\chi^i_0 +h_i  \in 2\mathbb{Z}$  and
	\begin{align}
		-h_i^2m_i^2-3h_ik_il_im_i+h_il_i^3-2k_i^2m_i+\frac{3}{4}k_i^2l_i^2>0 \; , \label{eq_thm_hor}
	\end{align}
	and the horizon topology is $S^1\times S^2$ if $h_i = 0$, $S^3$ if $h_i = \pm 1$ and a lens space
	$L(h_i, 1)$ otherwise.	Finally, the harmonic functions must satisfy
	\begin{align}
		K^2+HL >0 \label{eq_thm_N}
	\end{align}
 for all $\bm x \in \mathbb{R}^3\setminus\{\bm a_1, \dots, \bm a_N\}$.
 \end{theorem}

\begin{proof}
 Theorem \ref{theorem_harmonicfunctions} shows that the solution must have a Gibbons-Hawking base with harmonic functions of multi-centred form (\ref{eq_thm_harmonic}).   The 1-forms are determined as follows.  First note that $\beta_i:= (z-z_i) \td \phi_i/r_i $ obeys $\star_3 \td \beta_i = \td (1/r_i)$. Therefore, solving (\ref{eq_chieq}), (\ref{eq_xi}) immediately gives that $\chi, \xi$ can be written in the claimed form, where $\chi_0^i$ are integration constants introduced for convenience.  To determine $\hat\omega$ it is convenient to follow~\cite{bena_black_2008} (see also~\cite{dunajski_einstein-maxwell_2007, tomizawa_supersymmetric_2016}) and define the 1-forms (\ref{eq_beta_sol}) which are a solution to 
 \begin{align}
		\star_3\td\beta_{ij}&=\frac{1}{r_i}\td\left(\frac{1}{r_j}\right)-\frac{1}{r_j}\td\left(\frac{1}{r_i}\right)+\frac{1}{r_{ij}}\td\left(\frac{1}{r_i}-\frac{1}{r_j}\right) \; , \label{eq_beta}
	\end{align}
	 where $r_{ij} := | \bm a_i - \bm a_j|$. 
	Note that, in contrast to $\beta_i$, the 1-forms $\beta_{ij}$ are free of Dirac string singularities, indeed $\beta_{ij}$ are smooth 1-forms on $\mathbb{R}^3$ except at the corresponding centres $\bm a_i, \bm a_j$.
	Then, we can solve (\ref{eq_omegahat}) and write 
	\begin{equation}
	\hat\omega = \sum_{\substack{i,j=1 \\ i\neq j}}^N \left(h_im_j+\frac{3}{2}k_il_j\right)\beta_{ij}   + \sum_{i=1}^N  \omega_{-1}^{i} \beta_i   \; , \label{eq_omega_beta}
	\end{equation}
	where we have defined the constants
	\begin{equation}
	\omega_{-1}^i:= -h_i m -\frac{3}{2}k_i - \sum_{\substack{j=1 \\j\neq i}}^N\frac{h_i m_j -m_i h_j +\frac{3}{2}(k_il_j-k_jl_i)}{r_{ij}}  \; .  \label{eq_omm1i}
	\end{equation} 
	The functional form of the solution is now fully fixed.   
 
 Next, by expanding the harmonic functions around a centre $\bm x= \bm a_i$  they take the form (\ref{eq_harmonicexpansion})  where the coefficients are given by 
 \begin{align}
 h_{-1}&= h_i \; , \qquad h_0 = \sum_{\substack{j=1 \\ j \neq i}}^N \frac{h_j}{r_{ij}} \; , \qquad  k_{-1}= k_i \; , \qquad k_0 = \sum_{\substack{j=1 \\ j \neq i}}^N  \frac{k_j}{r_{ij}} \; , \\
  l_{-1} &= l_i \; , \qquad l_0 =1+ \sum_{\substack{j=1 \\ j \neq i}}^N  \frac{h_j}{r_{ij}} \; , \qquad   m_{-1}= m_i \; , \qquad m_0 =m+ \sum_{\substack{j=1 \\ j \neq i}}^N  \frac{m_j}{r_{ij}}   \; .
 \end{align}
Necessary and sufficient conditions for smoothness at the event horizon and fixed points were determined in Section \ref{ssec_horizonRegularity} and 
\ref{ssec_fixed} in terms of these coefficients.  For smoothness at  fixed points, we must require (\ref{eq_hcond}), (\ref{eq_fcond1}), (\ref{eq_fcond2}), 
(\ref{eq_ompsicond1})  and (\ref{eq_chi0om0cond}), which upon use of the above coefficients give (\ref{eq_thm_fix}), (\ref{eq_thm_fh}) and that 
$\chi_0^i$ must be an odd integer.  For smoothness at the horizon, we require   (\ref{ineq_horizon}), (\ref{eq_hchicond_hor}), 
which upon use of the above coefficients gives (\ref{eq_thm_hor}), $h_i \in \mathbb{Z}$ and $\chi_0^i + h_i$ must be an even integer. 
The horizon topology is determined in Lemma \ref{lemma_Htop}. Notice in particular that $h_i$ and $\chi_0^i$ have the same parity at each centre.   

The remaining conditions for smoothness at both fixed points and horizons are $\omega_{-1}=0$ and 
$\omega_0=0$, which are defined by (\ref{eq_omegahat_nearH}) and (\ref{eq_omegam1}). Upon using the above coefficients $\omega_{-1}=0$ is equivalent to 
the constraint (\ref{eq_thm_ctr}) for each $i=1, \dots, N$, which in fact is the same as $\omega_{-1}^i=0$ (see definition (\ref{eq_omm1i})) so that the 
1-form $\hat \omega$ now has the claimed form (\ref{omega_sol_expl}).  Furthermore, from the explicit form of $\beta_{ij}$ one can check that $\omega_0=0$ 
for each centre (this is due to the aforementioned fact that $\beta_{ij}$ has no string singularities).  Therefore, all necessary and sufficient conditions 
for smoothness at fixed points and horizons are now satisfied.

Let us now look at the asymptotics as $r\to\infty$. As shown in Theorem \ref{theorem_harmonicfunctions}, asymptotic flatness requires the parameter constraint (\ref{eq_thm_AF_h}). Summing (\ref{eq_thm_ctr}) over all centres and using (\ref{eq_thm_AF_h}) we find (\ref{eq_thm_AF_m}). Using this together with (\ref{eq_omegapsi}) and (\ref{eq_f}), one can easily check that $\omega_\psi = \ord(r^{-1})$ and $f=1+\ord(r^{-1})$ as $r\to \infty$. Now we expand the 1-forms. Notice that 
    \begin{equation}
	       \chi= \left( \sum_{i=1}^N \chi_0^i   + \cos \theta\right) \td \phi  + \mathcal O(r^{-2})\td\bm x  \; ,
	\end{equation}
    so upon comparison to (\ref{eq_chiflat}) we deduce $\tilde\chi_0= \sum_{i=1}^N \chi_0^i $ where (\ref{eq_eulerinfinity}) gives the coordinates that are manifestly asymptotically flat. Furthermore, recall that $\omega_{-1}^i=0$ in (\ref{eq_omega_beta}), so  from the explicit expression for $\beta_{ij}$ we obtain that $\hat\omega = \mathcal O (r^{-2})\td \bm x$. Thus, the solution is asymptotically flat.
    
    Now, recall that we chose a gauge where $\tilde\chi_0$ is an odd integer, so we must check that this is compatible with the smoothness constraints at all centres. Indeed, we have $\tilde\chi_0= \sum_{i=1}^N \chi_i \equiv \sum_{i=1}^N h_i=1 \mod 2$, where in the second equality we used that $\chi_i$ and $h_i$ are of the same parity and in the final equality (\ref{eq_thm_AF_h}). Therefore, $\tilde\chi_0$ is automatically odd  which means that the smoothness constraints for $\chi$ at all centres can be satisfied simultaneously.

Finally,  (\ref{eq_thm_N}) is required by Lemma \ref{lemma1} and is equivalent to smoothness of the solution in the DOC away from the fixed points.
\end{proof}

\noindent \textbf{Remarks.} 
\begin{enumerate}
\item It is not known if the conditions (\ref{eq_thm_AF_h})-(\ref{eq_thm_N}) in the above theorem are also sufficient for the solution to have a 
	globally hyperbolic DOC, although it is clear that other assumptions in (\ref{assumption1}) and (\ref{assumption2}) are indeed satisfied. 
	In particular, it is not known if they are even sufficient for stable causality $g^{tt}<0$ (which is a consequence of global hyperbolicity). 
	In \cite{avila_one_2018} the authors argue that for solitons (\ref{eq_thm_N}) implies stable causality and support this with numerical evidence. 
	In Appendix \ref{appendix_Numerics} we present a numerical analysis of three-centred solutions and find that those configurations which satisfy 
	(\ref{eq_thm_AF_h})-(\ref{eq_thm_N}) together with positive mass $M>0$ (given by (\ref{eq_conserved})) are indeed stably causal.  This adds to previous 
	evidence for biaxially symmetric solutions that stable causality is a consequence of these conditions~\cite{kunduri_black_2014, kunduri_supersymmetric_2014, breunholder_supersymmetric_2019, breunholder_moduli_2019}.
	
\item The explicit form of the 1-forms (\ref{chixi_sol}) possess $N$ Dirac string singularities parallel to the $z$-axis and is a gauge choice.
While these cannot be removed simultaneously, by a local coordinate transformation each string can be rotated into any  direction. The spacetime can be covered  by a family of charts in which the strings are arranged to be between every other consecutive centre (for some arbitrary ordering of the centres),  similarly to the multi-centred gravitational instantons~\cite{gibbons_gravitational_1978}.

\end{enumerate}

\subsection{Conserved charges and fluxes}

Conserved charges associated to Killing fields can be determined by Komar integrals. Therefore the mass $M$ and angular momentum $J_\psi$ can be computed using Komar integrals with respect to the Killing vectors $\partial_t, \partial_\psi$ respectively. However, as we will show in the next section, generically $\partial_\phi$ is not a Killing vector. In order to compute the angular momentum $J_\phi$ associated to this, one can compare the asymptotic form of the metric to that of a localised source with known charges \cite{emparan_black_2008}. In order for the metric to be in the right form asymptotically one must align the $z$-axis in $\mathbb{R}^3$ with the direction of $\bm D$ defined as 
\cite{berglund_supergravity_2006}
\begin{equation}
	\bm D  = \sum_{i=1}^N\left(k_i -h_i\sum_{j=1}^N k_j\right)\bm a_i \; .
\end{equation} 
Note that this does not depend on the choice of origin in $\mathbb{R}^3$ due to (\ref{eq_thm_AF_h}).
One then finds that the conserved charges of the solution in Theorem \ref{thm_classification} are 
\begin{align}
	M &= 3\pi \left(\left(\sum_{i=1}^N k_i\right)^2+\sum_{i=1}^N l_i\right)\; , \qquad Q = 2\sqrt{3}\pi\left(\left(\sum_{i=1}^N k_i\right)^2+\sum_{i=1}^n l_i\right)\; , \nonumber\\
	J_\psi &= 2\pi \left(\left(\sum_{i=1}^N k_i \right)^3+\frac{3}{2}\sum_{i,j=1}^N k_il_j+\sum_{i=1}^N m_i\right)\; ,\qquad J_\phi = 3\pi |\bm D|\; . \label{eq_conserved}
\end{align}
As expected, the solutions saturate the BPS bound with $M= \sqrt{3} Q/2$.  Notice these take the same form as in the biaxially symmetric case~\cite{breunholder_moduli_2019}.

Each curve in $\mathbb{R}^3$ that connects two centres lifts to a non-contractible 2-cycle in spacetime ending on fixed points of $W$ or a 
horizon component. These 2-cycles either smoothly cap off at the fixed points as the length of orbits of $W$ goes to zero, or end on the horizon. 
They have topology of $S^2$ (between fixed points), a disc (between a fixed point and a horizon), or a tube (between horizon components). The flux through 
a 2-cycle $C_{ij}$ between centres $\bm a_i$ and $\bm a_j$ is defined as 
\begin{equation}
	\Pi_{ij} := \frac{1}{4\pi}\int_{C_{ij}} F =  \lim_{r_i\to 0}A_\psi - \lim_{r_j\to 0}A_\psi \; ,
\end{equation}
where we used Stokes' theorem. Using the explicit form of the Maxwell field (\ref{eq_Maxwell}), and taking the limit yields
\begin{align}
	\lim_{r_i\to 0} A_\psi =
	 \begin{cases}
		-\frac{\sqrt{3}}{2}\frac{k_i}{h_i} \; ,& \text{if } i \text{ corresponds to a fixed point,}\\
		\frac{\sqrt{3}}{2}\frac{h_im_i+\tfrac{1}{2}k_i l_i}{k_i^2+h_il_i} \; ,& \text{if } i \text{ corresponds to a horizon.}
	\end{cases}
\end{align}

\subsection{Symmetries} \label{ssec_symmetries}

Theorem \ref{thm_classification} shows that there is no requirement that forces the centres $\bm a_i \in \mathbb{R}^3$ to be collinear on $\mathbb{R}^3$. Naturally, one would expect 
that a solution with centres in generic positions (although still satisfying the constraints of the theorem) has less symmetry than those with collinear centres. 
In this section we will show that this is indeed the case. We will only consider Killing fields of  $(\mathcal{M}, g)$ that commute with the supersymmetric Killing field $V$, that is, symmetries which are also symmetries of the base space (wherever it is defined). 

Due to asymptotic flatness any Killing field of  $(\mathcal{M}, g)$ must approach a linear combination of those of Minkowski at asymptotic infinity 
\cite{chrusciel_killing_2005}, and in Lemma \ref{lem_AF} we showed that $W$ generates an isoclinic\footnote{
	That is, $W$ generates a rotation around two orthogonal axes with the same angle.
} rotation at infinity, i.e. $W = \frac{1}{2}(J_{12}+J_{34}) + \mathcal{O}(R^{-1})$, where $J_{IJ}$ were defined in Lemma \ref{lem_AF}. Therefore, 
excluding boosts (as they do not commute with $V$) and $V$ itself, the algebra of the Killing fields of $(\mathcal{M}, g)$ must be isomorphic to some subalgebra of 
the lie algebra of the four-dimensional euclidean group $E(4)$ that contains $J_{12}+J_{34}$. Such algebras must contain one of the following~\cite{fushchich_continuous_1986}:
\begin{enumerate} [label=(\roman*)]
	\item a Killing field that commutes with $W$, e.g. a subalgebra $\langle J_{12}, J_{34} \rangle$, \label{symm_commuting}
	\item an $SU(2)$ subalgebra $\langle J_{12}+J_{34}, J_{13}-J_{24}, J_{14}+J_{23}\rangle$, \label{symm_SU2}
	\item an $E(2)$ subalgebra $\langle J_{12}+J_{34}, P_1, P_2\rangle$,\label{symm_E2}
\end{enumerate}
where the  translations $P_I=\partial_I$ in the coordinates of Lemma \ref{lem_AF}. We will now consider each case in turn.

For case \ref{symm_commuting} let us denote the additional Killing field by $\xi$, which by assumption
commutes with $V$ and $W$. The orbit space $\hat\Sigma$ inherits a metric from the spacetime $q_{\mu\nu} := g_{\mu\nu}- G^{AB}g_{A\mu}g_{B\nu}$, where $G^{AB}$ is the inverse of matrix $G_{AB}$ in (\ref{eq_Ndef}), and indices $\{A, B\}$ correspond to $\{t, \psi\}$. We find that this orbit space metric is
\begin{equation}
	q= \frac{1}{N} \td x^i \td x^i \; , \label{eq_OSmetric}
\end{equation}
which is non-singular on the region $N>0$ (i.e. in the DOC away from fixed points).
Now, since $\xi$ preserves both 
the spacetime metric and Killing fields $V$ and $W$, it follows that $\mathcal{L}_\xi q=0$ and $\mathcal{L}_\xi N=0$, i.e. the orbit 
space has a Killing field that preserves $N$. From the explicit form of the orbit space metric (\ref{eq_OSmetric}) it follows that $\xi$ 
is a Killing field of the euclidean metric on $\mathbb{R}^3$. From (\ref{eq_harmonicinvariant}) it follows that the harmonic functions 
$H, K, L, M$ are invariant under $\xi$.   Now, 1-parameter subgroups of the isometry group of euclidean space $\mathbb{R}^3$ are either 
closed (rotation) or unbounded (translation with possibly a rotation).  However, since $H$ is invariant under this  subgroup and $H\to 0$ 
at infinity by asymptotic flatness, it must be that $\xi$ generates a rotation.  Thus, $\xi$ is an axial Killing field of $\mathbb{R}^3$ and 
hence the centres $\bm a_i$ must be collinear.  In this case the full spacetime solution has  $\mathbb{R}\times U(1)^2$ symmetry  and 
corresponds to the biaxially symmetric case\cite{breunholder_moduli_2019}. Therefore, if the centres are not collinear the abelian isometry 
group of the solution cannot be larger than $\mathbb{R}\times U(1)$.

In case \ref{symm_SU2} the $U(1)$ isometry generated by $W$ is a subgroup of the $SU(2)$ isometry. The complex structures of the hyper-K\"ahler base space must carry a real 
three-dimensional representation of any subgroup of its isometry group, which for $SU(2)$ must be either the trivial or adjoint 
representation. The latter is excluded by the triholomorphicity of $W$, therefore the whole $SU(2)$ symmetry is necessarily triholomorphic. Furthermore, 
from the asymptotic form of the Killing fields it follows that around spatial infinity the $SU(2)$ action has three-dimensional orbits. Hyper-K\"ahler 
manifolds with cohomogeneity-1 triholomorphic $SU(2)$ symmetry belong to the BGPP class of solutions \cite{belinskii_asymptotically_1978}. In Appendix 
\ref{app_symmetries} we show that the only multi-centred asymptotically euclidean BGPP solution is the trivial flat metric on $\mathbb{R}^4$.  Therefore, 
case (ii) cannot happen.

In case \ref{symm_E2} two of the generators corresponding to translations commute, therefore there exists a linear combination $K$ of them which is 
triholomorphic \cite{gibbons_hidden_1988}. Since $W$ has closed orbits it must correspond to the rotation in $E(2)$ and hence $W, K$ are linearly 
independent triholomorphic Killing fields. It follows that $[W, K]$ is also triholomorphic and by the algebra of $E(2)$ it must be a third 
independent Killing field. Therefore, the whole $E(2)$ symmetry must be triholomorphic. Similarly to case \ref{symm_SU2}, the $E(2)$ action 
has generically three-dimensional orbits at spatial infinity. In Appendix \ref{app_symmetries} we derive the most general  hyper-K\"ahler metric 
with cohomogeneity-1 triholomorphic $E(2)$ symmetry and show that the only multi-centred Gibbons-Hawking metrics in this class is the trivial flat metric on 
$\mathbb{R}^4$. Therefore, case (iii) also cannot happen.

Therefore, we have shown that our general solution in Theorem \ref{thm_classification} generically possesses an abelian isometry group $\mathbb{R}\times U(1)$ 
and not larger. Furthermore, this is enhanced to $\mathbb{R}\times U(1)^2$ precisely if the centres are collinear in $\mathbb{R}^3$.
To our knowledge the construction outlined in this paper provides the first examples of smooth, asymptotically flat black hole solutions in higher dimensions 
with exactly a single axial symmetry (on top of a stationary symmetry), confirming the conjecture of Reall at least for supersymmetric black 
holes~\cite{reall_higher_2004}.
In Appendix \ref{appendix_Numerics} we present a numerical study of three-centred solutions and find that the parameter constraints 
(\ref{eq_thm_AF_h}-\ref{eq_thm_hor}) can be easily satisfied even for non-collinear centres. 
Furthermore, (\ref{eq_thm_N}) can be proven to hold for the case of a black lens $L(3,1)$. For the other horizon topologies 
we found a large set of parameters that  numerically satisfy the parameter constraints and for which (\ref{eq_thm_N}) holds.  Therefore, we expect that there 
is a vast moduli space of asymptotically flat supersymmetric black holes with exactly a single axial symmetry.

\section{Discussion}\label{sec_Discussion}

In this work we have presented a classification of asymptotically flat, supersymmetric black holes in five-dimensional minimal supergravity. Our main 
assumption is that there is an axial symmetry that `commutes' with the remaining supersymmetry, i.e. it preserves the Killing spinor. It is an interesting 
question whether there are black hole solutions if this assumption is relaxed. A  natural possibility would be to only require that the axial symmetry
commutes with the supersymmetric Killing field, but that the Killing spinor is not preserved by its flow. Then, 
the complex structures on the base space (spinor bilinears) are no longer preserved by the axial symmetry, i.e. the axial Killing field is not triholomorphic. In this case the hyper-K\"ahler base is not a Gibbons-Hawking metric, but  instead can be obtained 
by solving the $SU(\infty)$ Toda equation \cite{bakas_toda_1997}. There exist constructions of supersymmetric solutions without a triholomorphic axial 
symmetry \cite{bena_bubbles_2007}, however no smooth black hole solution of this class is known. 

There have been a number of previous attempts to construct supersymmetric black holes with exactly a single axial symmetry, however in all these cases 
the spacetime metric itself or the matter fields are  not smooth at the horizon~\cite{bena_one_2004, horowitz_how_2005, bena_black_2006, bena_sliding_2006, bena_coiffured_2014, candlish_smoothness_2010}. 
It is worth noting that even the Majumdar-Papapetrou static multi-black hole spacetimes do not have smooth horizons in five (or higher) dimensions~\cite{welch_smoothness_1995, candlish_smoothness_2007, gowdigere_smoothness_2019, gowdigere_smoothness_2014}.
In contrast to our construction, a  common feature of all these solutions is that they break the $U(1)$ rotational
symmetry that corresponds to the triholomorphic Killing field $\partial_\psi$ in our setting.  Generally, it is expected that breaking a rotational symmetry leads to non-smooth horizons, essentially because coordinate changes of the type (\ref{eq_coordTf}) necessarily introduce logarithmically divergent terms in the corresponding angular coordinate ($\psi$ in this case), 
which would cause infinite oscillation in the metric and or matter fields near the horizon~\cite{horowitz_how_2005, candlish_smoothness_2010}. Our construction avoids this problem because we break the symmetry in a direction which is not rotational\footnote{
	By rotational here we mean that the near-horizon geometry has non-zero angular momentum associated to that direction, i.e. in our coordinates $J_\psi\neq 0$, while $J_\phi=0$.}  near the horizon (i.e. the angular coordinate $\phi$ does not diverge at the horizon by (\ref{eq_coordTf})). Furthermore, the classification of near-horizon geometries~\cite{reall_higher_2004} shows that the rotational symmetry near the horizon is always triholomorphic, which suggests that regular black hole solutions without a triholomorphic rotational symmetry may not exist. However, to properly check this, one would need to carefully 
analyse the near-horizon geometry in coordinates corresponding to the aforementioned Toda system.

For the class of solutions studied in this work, it is not obvious how large the moduli space of black holes is. Theorem \ref{thm_classification} imposes equations and 
inequalities not only on the parameters of the harmonic functions, but also on a combination of harmonic functions (\ref{eq_thm_N}) which must be satisfied 
everywhere on $\mathbb{R}^3$. It would significantly advance our understanding of the moduli space of black holes if an equivalent condition were known 
on the parameters of harmonic functions. The positivity of the ADM mass has been conjectured to be sufficient (together with the other parameter constraints 
listed in Theorem \ref{thm_classification}) \cite{breunholder_supersymmetric_2019}, however our numerical results show that this is unfortunately not the case (see Appendix 
\ref{appendix_Numerics}), even for the biaxially symmetric case.  This  prevents us from having a totally explicit description of the black hole moduli space in this theory.

Finally, as mentioned in the introduction, the rigidity theorem  does not apply to supersymmetric black holes since the stationary Killing field is null on 
the horizon.  It is therefore possible that there are supersymmetric black holes with no axial symmetry at all.  Obtaining a complete 
classification of supersymmetric black holes in this theory would require analysis of this case too. This would for example include the Majumdar-Papapetrou 
multi-black holes, which have been shown to be the most general asymptotically flat, static, supersymmetric black holes in this theory (although the horizon 
has low regularity)~\cite{lucietti_all_2021}. However, the analysis of  general supersymmetric solutions with no axial symmetry likely would require new 
techniques. We leave this as an interesting problem for the future. \\

\noindent \textbf{Acknowledgements.}  DK is supported by an EPSRC studentship. JL is supported by a Leverhulme Research Project Grant.
\\

\noindent \textbf{Competing interests.} The authors have no relevant financial or non-financial interests to disclose.
\\

\noindent \textbf{Data availability.} The datasets generated or analysed during the current study are available from the corresponding author on reasonable request.

\appendix

\section{Killing spinor is preserved by triholomorphic symmetry} \label{app_KS}

In this section we show that the Killing spinor is preserved by a triholomorphic Killing field that commutes with the supersymmetric Killing field. 
We will work on regions where $f\neq 0$, which are dense in $\mathcal{M}$ (Corollary \ref{cor_fdense}), and by continuity of the 
Lie-derivative we argue that it must be preserved on the whole spacetime.

Let us choose a pseudo-orthonormal co-frame $e^\mu$, where $\mu=0, i$, and $i=1,2,3,4$,
\begin{align}
	e^0 := -\frac{V^\flat}{f} = f(\td t +\omega) \; , \qquad e^i := \frac{\tilde e^{\tilde i}}{\sqrt{f}} \; ,
\end{align}
where $\tilde e^{\tilde i}$, with $\tilde i=i$, are an orthonormal co-frame of the base metric $h$,\footnote{
	In this section we distinguish quantities on the base space with a tilde.} with spin connection 1-forms $\tilde\omega^{\tilde i}{}_{\tilde j}$ defined by  $\td\tilde{e}^{\tilde i} = -\tilde{\omega}^{\tilde i}{}_{\tilde j}\wedge \tilde{e}^{\tilde j}$. 
Now, by Lemma \ref{Lemma_GH} the axial Killing field $W$ preserves the base data $(f, \omega, h)$, so we are free to choose our co-frame such that $\mathcal{L}_W \tilde e ^{\tilde i}=\mathcal{L}_W  e ^i=0$, and $\mathcal{L}_We^0=0$ is automatic.  The  spin connection 1-forms  $\omega^\mu_{~\nu}$ of the spacetime are given  by\footnote{
	The spin connection should not be confused with the 1-form $\omega$ defined on the base.}
\begin{align}
	\omega^0{}_i = f_i e^0+\frac{1}{2}G_{ij}e^j \; , \qquad \omega^i{}_j = \tilde{\omega}^{\tilde i}{}_{\tilde j}+\frac{1}{2}G_{ij}e^0+\frac{1}{2}(f_ie^j-f_je^i) \; ,
	\label{eq_spinconnection}
\end{align}
where $f_ie^i:=f^{-1}\td f$ and $\tfrac{1}{2} G_{ij}e^i\wedge e^j:=f\td\omega$.

Let $\{E_\mu\}$ be the dual frame of vector fields to $\{e^\mu\}$ 
and expand the Killing field $W$ in this frame,
\begin{equation}
	W = (\iota_W e^0) E_0 + (\iota_W e^i) E_i \; .
\end{equation}
The corresponding metric dual with respect to $g$ can be written as
\begin{align}
	W^\flat = -(\iota_W e^0) e^0 + (\iota_W e^i) e^i=  -(\iota_W e^0) e^0 + \frac{\widetilde W^\flat}{f} \; ,
\end{align}
where $\widetilde W^\flat$ is the metric dual of $W$ with respect to the base metric $h$.

By definition \cite{fatibene_geometric_1996}, the Lie-derivative of the Killing spinor by $W$ is
\begin{align}
	\mathcal{L}_W\epsilon &:= \nabla_W \epsilon -\frac{1}{4}\td W^\flat \cdot \epsilon \\
	& = W(\epsilon) -\frac{1}{4}(\iota_W\omega_{\mu\nu})\gamma^\mu\gamma^\nu\epsilon - \frac{1}{8} (\td W^\flat)_{\mu\nu}\gamma^\mu\gamma^\nu\epsilon \; ,
	\label{eq_LieKS}
\end{align}
where the second line follows from the expression for the spinorial Levi-Civita connection $\nabla_X\epsilon = X(\epsilon) -\frac{1}{4}(\iota_W\omega_{\mu\nu})\gamma^\mu\gamma^\nu\epsilon$, and for the Clifford algebra we use conventions $\gamma^\mu\gamma^\nu+\gamma^\nu\gamma^\mu = -2g^{\mu\nu}$. Using (\ref{eq_spinconnection}) and 
$0=\mathcal{L}_W e^\mu = \iota_W\td e^\mu + \td \iota_W e^\mu$, a calculation yields
\begin{equation}
	\td W^\flat = -2(\iota_W\omega_{0i})e^0\wedge e^i - (\iota_W e^j) f_i e^i\wedge e^j - \frac{1}{2}(\iota_W e^0)G_{ij}e^i\wedge e^j + \frac{1}{f}\td \widetilde{W}^\flat \; .
\end{equation}
Substituting into (\ref{eq_LieKS}), and using again (\ref{eq_spinconnection}) for $\omega_{ij}$, we get
\begin{align}
	\mathcal{L}_W\epsilon =  \widetilde{D}_{\widetilde{W}}\epsilon -\frac{1}{4f}\td \widetilde{W}^\flat\cdot\epsilon \; ,
	\label{eq_KSLiemixed}
\end{align}
where $\widetilde D$ is the Levi-Civita connection of the base metric $h$, and $\widetilde{W}=\pi_*W$  the projection of $W$ to the base 
(see the proof of Lemma \ref{Lemma_GH}). We can expand $\td \widetilde{W}^\flat$ on the base tetrad as 
\begin{equation}
	\td \widetilde{W}^\flat = (\td \widetilde{W}^\flat)_{\tilde i \tilde j}\tilde e^{\tilde i} \wedge \tilde e^{\tilde j} = f (\td \widetilde{W}^\flat)_{\tilde i \tilde j} e^i\wedge e^j \; .
\end{equation}
This factor of $f$ cancels the one in the second term of (\ref{eq_KSLiemixed}) and we get that the Lie-derivative of the Killing spinor on the spacetime is the same 
as on the base\footnote{Since we are in an orthonormal frame the gamma matrices will have the exact same form 
as the spatial gamma matrices of the spacetime, i.e. $\gamma^i = \gamma^{\tilde i}$.},
\begin{align}
	\mathcal{L}_W\epsilon = \widetilde{D}_{\widetilde{W}} \epsilon -\frac{1}{8}(\td \widetilde{W}^\flat)_{\tilde i \tilde j}\gamma^{\tilde i}\gamma^{\tilde j}\epsilon = \widetilde {\mathcal{L}}_{\widetilde{W}}\epsilon \; .
	\label{eq_LieKSbase}
\end{align}
The Killing spinor $\epsilon$ on the r.h.s can be interpreted as a spinor on the base and takes the form~\cite{gauntlett_all_2003} 
\begin{equation}
	\epsilon = \sqrt{f}\eta \; ,
\end{equation}
where $\widetilde{D}\eta = 0$. It follows (since $f$ is also preserved by $W$) that the first term of (\ref{eq_LieKSbase}) vanishes. Now we use 
the fact that $\widetilde W$ is triholomorphic to deduce that $\td \widetilde W^\flat$ is self-dual \cite{gibbons_hidden_1988}, and since
$\gamma^{[\tilde i}\gamma^{\tilde j]} \epsilon$ is anti-self-dual \cite{gauntlett_all_2003} on the base, the second term of 
(\ref{eq_LieKSbase}) also vanishes. Thus, the Killing spinor is preserved by a 
triholomorphic Killing field that also commutes with the supersymmetric Killing field, as claimed.

\section{On the non-existence of exceptional orbits} \label{app_exceptional}

Here we give a more detailed argument against the existence of exceptional orbits in the DOC. As discussed in the proof of Lemma \ref{lemma_orbit}, such an 
exceptional orbit can only end on a horizon component with spherical topology, which has a neighbourhood where $f>0$. It follows that there exists an open 
region $U_0\subset\Sigma$ with a Gibbons-Hawking coordinate chart such that $U_0\cap E\neq \emptyset$. By making $U_0$ smaller if necessary, we may assume 
without loss of generality that all points in $ U_0 \cap E$ have the same isotropy group $\mathbb{Z}_n$ for some integer $n>1$, i.e. the parameter $\psi$ 
of orbits of $W$ is $4\pi/n$-periodic (assuming regular orbits have $4\pi$ periodicity). 

Let $\Omega_0$ be the $\psi=0$ hypersurface in $U_0$. Each orbit of 
$W$ intersects $\Omega_0$ at most once, since $x^i$ are constants on orbits of $W$, and $\bm x$ is injective on $\Omega_0$. It follows from the periodicities of 
the orbits that the flowout from $\Omega_0$ along $W$ is injective for flow parameters $(-2\pi/n, 2\pi/n)$, so $U_0$ can be chosen such that it is 
diffeomorphic to $(-2\pi/n, 2\pi/n)\times\Omega_0$ by the Flowout Theorem (see e.g. \cite{lee_introduction_2012}). 

Let us denote by 
$\Psi:\mathbb{R}\times \Sigma\to \Sigma$ the flow of $W$, and let $\Omega_{2\pi}:=\Psi(2\pi, \Omega_0)$. By the same arguments $U_{2\pi}:=\Psi(2\pi, U_0)$
is diffeomorphic to $(-2\pi/n, 2\pi/n)\times\Omega_{2\pi}$. Recall that exceptional orbits are smooth arcs in $\hat \Sigma$, so using $x^i$ as coordinates 
in $\Omega_0$ and $\Omega_{2\pi}$, they can be viewed as smooth curves $\gamma$ in some open set of $\mathbb{R}^3$, i.e. 
$U_0\cap E\cong  (-2\pi/n, 2\pi/n) \times \gamma$. Using periodicity of the regular and exceptional orbits, one can show that the set 
$U_0\cap U_{2\pi}\cong S\times \gamma$, where $S=S^1\setminus\{p\}$ for $n$ even or $S=S^1\setminus\{p, q\}$ for $n$ odd, and $p$ and $q$ 
are antipodal points of an $S^1$ orbit. Both $U_0$ and $U_{2\pi}$ are open in $\Sigma$ and so is their intersection, but $S\times \gamma$ 
is not (it is a two-dimensional submanifold of $\Sigma$), which is a contradiction.

\section{Smoothness of  \texorpdfstring{$\omega$}{TEXT} at fixed points} \label{app_omega}

In this section we show explicitly that the 1-form $\alpha$ defined in (\ref{eq_alphadef}) is smooth on $\mathbb{R}^4$ around a fixed point $r=0$.  
The harmonic function $F$ defined in (\ref{eq_doms}) can locally be expanded as
\begin{equation}
	F=\sum_{\substack{k=1 \\-k \le m\le k}}^\infty r^k f_{km}P_k^m(\cos\theta)e^{im\phi}\; ,
	\label{omegapsiform}
\end{equation}
where $P_k^m(x)$ are the associated Legendre functions and $f_{km}$ some complex coefficients. For reference, the associated Legendre functions  for $m\geq 0$ can be written as
\begin{equation}
	P_k^m(x)= (-)^m\frac{(k+m)!}{(k-m)!}P^{-m}_k(x) = (-)^m(1-x^2)^{m/2}\frac{\td^m}{\td x^m}P_k(x)\; ,
\end{equation}
where $P_k(x)$ are the Legendre polynomials.
Therefore, for any $|m| \leq k$ they are a product of $(\sin\theta)^{|m|}$ and a polynomial of $x:=\cos\theta$ of order $k-|m|$.

 Integrating (\ref{eq_doms}), or using (\ref{eq_omegasing}), yields\footnote{Here we are omitting the argument of associated Legendre functions which we always take to be $x=\cos\theta$, and dot denotes the derivative with respect to $x$.}
\begin{align}
	\hat\omega_\text{sing}&=\sum_{\substack{k=1 \\-k \le m\le k}}^\infty \frac{r^{k}}{k}f_{km}\left[(\sin\theta)^2 \dot P_k^me^{im\phi}\td\phi +\frac{im}{\sin\theta}P_k^me^{im\phi}\td\theta\right]\; .
	\label{eq_omega_sol}
\end{align}
The 1-form $\alpha$ defined by (\ref{eq_alphadef}) in this gauge  is given by
\begin{equation}
	\alpha = \sum_{\substack{k=1 \\-k \le m\le k}}^\infty r^kf_{km}e^{im\phi}\left[ h_{-1}P_k^m(\td \psi' + \cos\theta\td \phi') + \frac{(\sin\theta)^2}{kh_{-1}} \dot P_k^m{}\td\phi' +\frac{im}{k\sin\theta}P_k^m\td\theta\right]\; .
\end{equation}
The smoothness of axisymmetric terms ($m=0$) has been checked in \cite{breunholder_moduli_2019}, but for completeness we also include it here. The axisymmetric 
terms for a given $k$, using (\ref{EulerToR4coords}) and basic properties for Legendre polynomials, can be written as 
\begin{align}
	&h_{-1}f_{k0}r^k \left[P_k(\td\psi' + \cos\theta\td\phi') + \frac{(\sin\theta)^2}{k}\dot P_k\td\phi'\right] =\nonumber\\
	&=h_{-1}f_{k0}r^k\left[(P_k+ P_{k-1})\td\phi^++(P_k- P_{k-1})\td\phi^-\right]\;, 
\end{align}
where we have used the identity $(1\pm x) P_k\pm k^{-1}(1-x^2) \dot{P}_k= P_k \pm P_{k-1}$.
From the recursion formula for the Legendre polynomials it follows that $r^k(P_k\pm P_{k-1})\td\phi^\pm = r(1\pm\cos\theta)\tilde G_\pm\td\phi^\pm =\frac{1}{2} \tilde G_\pm X_\pm^2\td\phi^\pm$ 
for some smooth function $\tilde G_\pm$ of $X_\pm^2$ and therefore these terms are indeed smooth on $\mathbb{R}^4$~\cite{breunholder_moduli_2019}.

We now consider the non-axisymmetric terms  in $\alpha$ for a given $k$ and $m\neq 0$. We focus on terms with $m>0$ ($P_l^m$ and $P_l^{-m}$ only differs by a constant factor, so the analysis is essentially identical for $m<0$), which can be written using (\ref{eq_hcond}) as 
\begin{align}
	&h_{-1}r^k\left(\frac{(\sin\theta)^2}{k}\dot P_k^m+(1+\cos\theta)P_k^m\right)e^{im\phi}\td\phi^+\nonumber\\
	&+h_{-1}r^k\left(-\frac{(\sin\theta)^2}{k}\dot P_k^m+(1-\cos\theta)P_k^m\right)e^{im\phi}\td\phi^-\nonumber\\
	&+r^k\frac{im}{k\sin\theta}P_k^me^{im\phi}\td\theta\;. \label{eq_alphakm}
\end{align}
For the last line we use the recursion formula
\begin{equation}
	\frac{1}{\sqrt{1-x^2}}P_k^m=\frac{-1}{2m}\left[P_{k-1}^{m+1}+(k+m-1)(k+m)P_{k-1}^{m-1}\right]\;.
\end{equation}
The term containing $P_{k-1}^{m+1}$ can be written as
\begin{align}
	&r^kP_{k-1}^{m+1}e^{im\phi}\td\theta\sim(r\sin\theta e^{i\phi})^{m}\left[(r\cos\theta)^{k-m-2}+\dots\right] \left[X_+^2\td (X_-^2)-X_-^2\td (X_+)^2\right]\;,
\end{align}
where $\dots$ represents lower order terms in $\cos\theta$. Since this term is explicitly smooth, we will omit it in the further analysis.

In the first two lines of (\ref{eq_alphakm}) we can use the identity 
\begin{equation}
	(1-x^2)\dot P_k^m+kxP_k^m=(k+m)P_{k-1}^m\;,
\end{equation}
and after omitting the smooth factor $\frac{1}{k}(r\exp(i\phi)\sin\theta)^{m-1}$, we get 
\begin{align}
	&h_{-1}r^{k-m+1}\sin\theta \left((k+m)\tilde P_{k-1}^m+k\tilde P_k^m\right)e^{i\phi}\td\phi^+\nonumber\\
	&+h_{-1}r^{k-m+1}\sin\theta \left(-(k+m)\tilde P_{k-1}^m+k\tilde P_k^m\right)e^{i\phi}\td\phi^-\nonumber\\
	&-\frac{i}{2}r^{k-m+1}(k+m-1)(k+m)\tilde P_{k-1}^{m-1}e^{i\phi}\td\theta\;,
	\label{omegaremnant2}
\end{align}
where we defined $\tilde P_k^m:=\frac{P_k^m}{(\sin\theta)^m}=(-)^m \frac{\td^m}{\td x^m}P_k$. 
Let us now look at the real part of (\ref{omegaremnant2}) (the analysis of imaginary part gives the same result), which can be written as
\begin{align}
	&\frac{h_{-1}}{2}r^{k-m}(u_1u_3+u_2u_4)[(k+m)\tilde P^m_{k-1}+k\tilde P^m_k]\frac{u_1\td u_2-u_2\td u_1}{X_+^2} \nonumber\\
	&+\frac{h_{-1}}{2}r^{k-m}(u_1u_3+u_2u_4)[-(k+m)\tilde P^m_{k-1}+k\tilde P^m_k]\frac{u_3\td u_4-u_4\td u_3}{X_-^2} \label{eq_omegaremnant_real}\\ 
	&+\frac{h_{-1}}{4}r^{k-m}(u_2u_3-u_4u_1)(k+m-1)(k+m)\tilde P^{m-1}_{k-1}\left(-\frac{u_1\td u_1+u_2\td u_2}{X_+^2}+\frac{u_3\td u_3+u_4\td u_4}{X_-^2}\right).\nonumber
\end{align}
By standard properties of the Legendre polynomials one can show that we can write the following polynomials as
\begin{align}
	k\tilde P_k^m\pm(k+m)\tilde P_{k-1}^m+\frac{1}{2}(k+m-1)(k+m)\tilde P_{k-1}^{m-1} = (1\pm x)Q^\pm_{km}\; 
	\label{omegacond}
\end{align}
with some polynomials $Q^\pm_{km}(x)$ of order $k-m-1$. Using this property with the upper (lower) sign for the $\td u_1$ and $\td u_2$ ($\td u_3$ 
and $\td u_4$) terms, one can check that (\ref{eq_omegaremnant_real}) is smooth. In detail, we can rearrange the $\td u_1$ terms as
\begin{align}
	\frac{h_{-1}}{2}\frac{r^{k-m}}{X_+^2}&\bigg\{(u_1^2u_4-u_1u_2u_3)\left[k\tilde P_k^m+(k+m)\tilde P_{k-1}^m+\frac{1}{2}(k+m-1)(k+m)\tilde P_{k-1}^{m-1}\right]\nonumber\\
	&-\left[k\tilde P_k^m+(k+m)\tilde P_{k-1}^m\right](u_1^2+u_2^2)u_4\bigg\}\td u_1 = \\
	=\frac{h_{-1}}{2}&\left[\frac{1}{2}(u_1^2u_4-u_1u_2u_3)r^{k-m-1}Q^+_{km}-r^{k-m}\left(k\tilde P_k^m+(k+m)\tilde P_{k-1}^m\right)u_4\right]\td u_1\;,\nonumber
\end{align}
where we used (\ref{omegacond}) and $r(1+x)=r(1+\cos\theta)=\frac{1}{2}X_+^2$. Since $Q^+_{km}(x)$ is of order $k-m-1$, the last line 
is explicitly smooth. The argument for the $\td u_{A=2,3,4}$ terms is identical. This proves our claim that $\alpha$ is smooth on $\mathbb{R}^4$.

\section{Three-centred solutions}\label{appendix_Numerics}

The simplest examples with a single axial symmetry are three-centred solutions. We will focus on single black hole solutions, where without loss of 
generality, we will choose coordinates such that the origin corresponds to the black hole. We can use the redundancy $\Psi\to\Psi+c$ in the harmonic 
functions so that they have the following form:
\begin{align}
	H &=\frac{h_0}{r}+\frac{h_1}{r_1}+\frac{h_2}{r_2} \; ,&& K= \frac{k_1}{r_1}+\frac{k_2}{r_2} \; , \nonumber \\
	L &= 1+\frac{l_0}{r}-\frac{h_1k_1^2}{r_1}-\frac{h_2k_2^2}{r_2} \; ,&& M=-\frac{3}{2}(k_1+k_2)+\frac{m_0}{r}+\frac{k_1^3}{2r_1}+\frac{k_2^3}{2r_2} \; .\label{def_3params}
\end{align}
The remaining constraints of Theorem \ref{thm_classification} on the parameters are
\begin{align}
	&h_1^2=h_2^2=1 \; ,\qquad\qquad\qquad  \qquad h_0+h_1+h_2=1 \; ,\label{eq_3_h}\\
	&\frac{h_0k_i^3-2m_0h_i-3k_il_0}{|\bm a_i|}+(-1)^i\frac{(k_2 h_1-k_1h_2)^3}{|\bm a_1-\bm a_2|}+3(k_1+k_2)h_i-3k_i=0 \; , \text{ for } i=1, 2 \; ,\label{omegapsi0_3}\\
	&h_0l_0^3-h_0^2m_0^2>0 \; , \label{horizon_3}\\
	&h_i -\frac{h_i(h_0k_i^2-l_0)}{|\bm a_i|}-h_1h_2\frac{(k_1h_2-k_2h_1)^2}{|\bm a_1-\bm a_2|}>0 \; , \text{ for } i=1, 2 \; .\label{hifi0_3}
\end{align}
There are three different choices for the parameters $h_i$:
\begin{enumerate}[label=(\roman*)]
	\item $h_0=1 \; ,\; h_1 = 1 \; ,\; h_2 = -1 \; ,$  \label{3ctr_1}
	\item $h_0 = -1 \; ,\; h_1 = h_2 =1 \; ,$   \label{3ctr_2}
	\item  $h_0 = 3 \; ,\; h_1=h_2=-1 \; .$ \label{3ctr_3}
\end{enumerate}
Cases \ref{3ctr_1} and \ref{3ctr_2} correspond to a horizon topology $S^3$, while \ref{3ctr_3} is a black lens $L(3,1)$. 
There are two further conditions that have to be satisfied everywhere on $\mathbb{R}^3$ for smoothness and stable causality:
\begin{align}
	&N^{-1}\equiv K^2+HL>0 \label{smoothness_3} \; , \\
	&g^{tt}<0  \; .\label{causal_3}
\end{align}
In \cite{breunholder_supersymmetric_2019} it has been conjectured that in the $U(1)^2$-symmetric case (\ref{eq_3_h}-\ref{hifi0_3}) together with 
the positivity of the total mass implies that (\ref{smoothness_3}-\ref{causal_3}) are automatically satisfied. Hence, for the analysis presented here, 
we add the requirement of positivity of mass $M>0$, which in terms of the parameters is
\begin{equation}
	l_0-h_1k_1^2-h_2k_2^2+(k_1+k_2)^2>0 \; .\label{MADM_3}
\end{equation}
For the $U(1)^2$-symmetric black lens 
this conjecture was analytically proven for (\ref{smoothness_3}), while for the other two cases numerical evidence was presented. As mentioned in the remarks below Theorem \ref{thm_classification}, 
 it has been argued that for soliton solutions (\ref{causal_3}) follows from (\ref{smoothness_3})~\cite{avila_one_2018}. 

For the numerical checks, sets of parameters are constructed as follows. Without loss of generality,  we fix coordinates for the centres 
$\bm a_i = (x_i, y_i, z_i)$ such that $x_1=y_1=y_2=0$. For each case \ref{3ctr_1}-\ref{3ctr_3}, we first choose the parameters $k_1, k_2, z_1, z_2, x_2$ randomly and then determine $l_0$ and $m_0$ as a solution of the linear system of equations (\ref{omegapsi0_3})\footnote{
	The linear equations are not invertible for some non-generic values of $k_1,k_2$, so these cases have been treated separately.}. Then those sets of parameters that fail to satisfy 
(\ref{horizon_3}-\ref{hifi0_3}) or (\ref{MADM_3}) are discarded. Finally, (\ref{smoothness_3}-\ref{causal_3}) are checked 
numerically at random points. For concreteness, some examples satisfying all conditions (\ref{eq_3_h}-\ref{MADM_3}) are 
shown in Table \ref{tab_3ctr}. We will now consider the three cases in turn.

\begin{table}
	\begin{center}
\begin{tabular}{|c|c|c|c|c|c|c|c|}
	\hline
	$h_0$ & $h_1$ & $h_2$ & $k_1$ & $k_2$ & $z_1$ & $z_2$ & $x_2$ \\
	\hline\hline
	1 & 1 & -1 & -3.467 & -7.256 & 0.2422 & -5.083 & 1.606 \\ \hline
	1 & 1 & -1 & 1.112 & -5.219 & 1.479 & 0.3712 & 1.810 \\ \hline
	1 & 1 & -1 & -1.009 & 9.371 & 8.756 & 9.130 & -9.022 \\ \hline
	-1 & 1 & 1 & -8.155 & -9.288 & 1.562 & 7.167 & -5.651 \\ \hline
	-1 & 1 & 1 & -3.061 & -5.430 & 0.02954 & -2.200 & 7.453 \\ \hline
	-1 & 1 & 1 & 6.501 & 7.217 & 2.931 & 0.6217 & -6.930 \\ \hline
	3 & -1 & -1 & 5.049 & 7.799 & 0.2238 & -8.775 & 3.644 \\ \hline
	3 & -1 & -1 & -9.477 & -8.117 & 7.736 & -0.3053 & 0.4992 \\ \hline
	3 & -1 & -1 & 7.571 & 5.287 & 9.393 & 0.3150 & -1.048 \\ \hline
\end{tabular}\caption{Examples of parameters satisfying the constraint equations and inequalities (\ref{eq_3_h}-\ref{MADM_3}). 
For all cases $x_1=y_1=y_2=0$, and $l_0,m_0$ are determined by (\ref{omegapsi0_3}).}\label{tab_3ctr}
\end{center}
\end{table}

For case \ref{3ctr_3}, the black lens, the proof that (\ref{smoothness_3}) is automatically satisfied carries over to the case with non-collinear centres as follows. Using 
(\ref{hifi0_3}) for $N^{-1}$ yields
\begin{align}
	N^{-1}>(k_1-k_2)^2&\left(\frac{|\bm a_1|}{|\bm a_1-\bm a_2|r r_1}+ \frac{|\bm a_2|}{|\bm a_1-\bm a_2|r r_2}-\frac{1}{r_1r_2}\right)\nonumber\\
	&+\frac{3 l_0}{r^2} +\left(\frac{3}{r}+\frac{|\bm a_1|}{r r_1}+\frac{|\bm a_2|}{rr_2}-\frac{1}{r_1}-\frac{1}{r_2}\right) \; .
\end{align}
The first term is non-negative by Ptolemy's inequality\footnote{
	When the centres are not co-planar, a lower estimate can be used by looking at a co-planar arrangement, keeping 
	$r_1$, $r_2$, $r$, $|\bm a_1|$ and $|\bm a_2|$ the same, and increasing $|\bm a_1-\bm a_2|$.}. 
The second term is positive due to (\ref{horizon_3}), and the last term is non-negative by the triangle inequality. The stable causality 
condition (\ref{causal_3}) was checked numerically. In the numerical checks we have looked at $10^{4}$ random points in a radius $R$ of the origin 
such that $R>3 \max\{| \bm a_1|, | \bm a_2|\}$, chosen either uniformly, or centred around the origin, closer to the singular points. We found that 
(\ref{causal_3}) was satisfied by all $10^4$ set of parameters that satisfy (\ref{eq_3_h}-\ref{hifi0_3}) and (\ref{MADM_3}).

For case \ref{3ctr_1} and \ref{3ctr_2}, we can only check (\ref{smoothness_3}-\ref{causal_3}) numerically on parameters that satisfy 
(\ref{eq_3_h})-(\ref{hifi0_3}) and (\ref{MADM_3}). The method of the numerical checks is identical to the one described in case \ref{3ctr_3}.
For each case we have found a large parameter space for which all the constraints are satisfied. We have found that for case \ref{3ctr_2} 
equations (\ref{eq_3_h})-(\ref{hifi0_3}) and (\ref{MADM_3}) are sufficient to guarantee (\ref{smoothness_3}-\ref{causal_3}) for the tested $10^4$ set of parameters. 
In contrast, rather surprisingly, for case \ref{3ctr_1} there are even collinear (therefore $U(1)^2$-symmetric) configurations that 
satisfy (\ref{eq_3_h})-(\ref{hifi0_3}) and (\ref{MADM_3}) but violate (\ref{smoothness_3}-\ref{causal_3}). This disproves the conjecture of 
\cite{breunholder_supersymmetric_2019}. When the centres are collinear, these configurations all have $k_2/k_1\sim \mathcal{O}(1)$, and even in that 
parameter range they appear with small probability when the parameters are chosen uniformly, which explains why they have not been found previously. 
None of these violating configurations have equal momenta, which is the relevant case when one would like to compare these black holes to the BMPV solution. Also, 
no configuration have been found which satisfy (\ref{smoothness_3}) but violate (\ref{causal_3}), inline with the conjecture of \cite{avila_one_2018} (although that conjecture is for solitons). 

We have 
also looked at soliton solutions, where instead of (\ref{horizon_3}), we require $l_0=0$, $m_0=0$, and 
\begin{equation}
	h_0\left(1- \frac{h_1k_1^2}{|\bm a_a|}-\frac{h_2k_2^2}{|\bm a_2|}\right) >0 \; .
\end{equation}
For these configurations, we have found that for all tested configurations satisfying (\ref{eq_3_h})-(\ref{hifi0_3}) and (\ref{MADM_3}), 
(\ref{smoothness_3}-\ref{causal_3}) are also satisfied.

\section{Hyper-K\"ahler metrics with triholomorphic \texorpdfstring{$SU(2)$ or $E(2)$}{TEXT} isometry} \label{app_symmetries}

In this section we determine the possible multi-centred Gibbons-Hawking solutions with triholomorphic cohomogeneity-1 $SU(2)$ or $E(2)$ symmetry. 
For the latter case we derive the general cohomogeneity-1 hyper-K\"ahler metric with triholomorphic $E(2)$ symmetry.

\subsection{\texorpdfstring{$SU(2)$}{TEXT}}

Hyper-K\"ahler manifolds with triholomorphic cohomogeneity-1 $SU(2)$ symmetry belong to the BGPP class~\cite{belinskii_asymptotically_1978}.  
In Gibbons-Hawking form adapted to a triholomorphic subgroup $U(1)\subset SU(2)$ the associated harmonic function can be written as 
\begin{equation}
	H(\bm x) = \left(\prod_{i=1}^3(\beta(\bm x)-\beta_i)\right)^{-1/2}\left(\sum_{i=1}^3\frac{x_i^2}{(\beta(\bm x)-\beta_i)^2}\right)^{-1}\;, \label{eq_Hbeta}
\end{equation}
where $\beta(\bm x)$ is an algebraic root of
\begin{equation}
	\sum_{i=1}^3\frac{x_i^2}{\beta-\beta_i}=C\;,	\label{eq_betadef}
\end{equation}
with $\beta_i$, $C$ constants~\cite{gibbons_gravitational_2003,dunajski_harmonic_2003}. First note that if $\beta_1=\beta_2=\beta_3$ then 
$H = 1/(\sqrt{C} r)$, and if we take $C=1$ the solution reduces to flat euclidean space.  We will therefore exclude this case.

A simple pole in $H$ can occur at 
$\bm x_0=(x_0, y_0, z_0)$ only if $\beta(\bm x_0)=\beta_i$ for some $i$.\footnote{If $C\neq 0$ then $H$ is clearly non-singular at all 
points $\beta(\bm x)\neq \beta_i$.  If $C=0$ and $\beta(\bm x)\neq \beta_i$ then $H$ can only be singular at the origin in which case $\beta$ 
has a direction dependent limit, so it cannot be a simple pole of $H$.} For definiteness, assume that $H$ is singular at $\beta(\bm x_0) = \beta_1$, 
so the analysis splits into the cases:
\begin{enumerate}[label=(\roman*)]
	\item $\beta_1=\beta_2\neq \beta_3$, \label{case_1beta}
	\item $\beta_i$ are distinct.	\label{case_2beta}
\end{enumerate}

Now consider case \ref{case_1beta} for which $H$ is necessarily axisymmetric. From (\ref{eq_Hbeta}) we can see that $H$ is singular at $\bm x_0$ iff $(x^2+y^2)/(\beta-\beta_1)\to 0$ as $\bm x\to \bm x_0$.  Then (\ref{eq_betadef}) implies that $z_0^2= C(\beta_1- \beta_3)\geq 0$. If $C(\beta_1-\beta_3)>0$, there are isolated singularities at $x_0 = y_0 = 0$, $z_0 = \pm \sqrt{C(\beta_1-\beta_3)}$,
and the corresponding harmonic function takes the form
\begin{equation}
	H(\bm x) = \frac{1}{2\sqrt{C}}\left(\frac{1}{|\bm x-\bm a|}\pm \frac{1}{|\bm x+\bm a|}\right)\;,
\end{equation} 
with $\bm a = (0, 0, \sqrt{C(\beta_1-\beta_3)})$. This family includes the Eguchi-Hanson metric (with $+$ sign and $C=1/4$). These configurations, however,
do not correspond to asymptotically euclidean metrics. If $C=0$, the harmonic function is given by $H = \const \frac{z}{r^3}$, so this solution does not 
have a simple pole.

Finally, let us look at case \ref{case_2beta}. From (\ref{eq_Hbeta}) it can be shown 
that if $H$ has a simple pole at a given $\bm x_0$ then
\begin{align}
	\lim_{\bm x\to \bm x_0}\frac{x^2}{(\beta(\bm x)-\beta_1)^{3/2}}=0 \; .
	\label{eq_polelimit}
\end{align}
Using (\ref{eq_polelimit}) in (\ref{eq_betadef}) implies that $(x_0, y_0, z_0)$ is on the curve defined by
\begin{align}
	\frac{y^2}{\beta_1-\beta_2}+ \frac{z^2}{\beta_1-\beta_3}=C\;, \qquad x=0  \; .	\label{eq_yzcond}
\end{align}
Let $S$ be the connected component of (\ref{eq_yzcond}) containing $\bm x_0$. We now show that $H$ is singular on $S$. Recall that 
having a simple pole requires $\lim_{\bm x\to \bm x_0} \beta=\beta_1$. Taking this limit in the direction along the curve (\ref{eq_yzcond}) and 
using the fact that the other root of (\ref{eq_betadef}) on (\ref{eq_yzcond}) is separated from $\beta_1$, it follows that $\beta=\beta_1$ on $S$. Let 
$(0, \tilde y, \tilde z)\in S$, and let us look at the limit
\begin{align}
	\lim _{x\to0}&\frac{x^2}{(\beta(x, \tilde y, \tilde z)-\beta_1)^{3/2}} = \lim _{x\to0}\frac{C-\frac{\tilde y^2}{\beta-\beta_2}-\frac{\tilde z^2}{\beta-\beta_3}}{\sqrt{\beta-\beta_1}} = \nonumber\\
	&=\lim _{x\to0}2\left(\frac{\tilde y^2}{(\beta-\beta_2)^2}+\frac{\tilde z^2}{(\beta-\beta_3)^2}\right)\sqrt{\beta-\beta_1} = 0\;,
\end{align}
where for the second line we used L'H\^opital's rule. Thus, by (\ref{eq_Hbeta}) $H$ diverges on $S$, hence the singularity of $H$ at 
$\bm x_0$ cannot be isolated.

\subsection{\texorpdfstring{$E(2)$}{TEXT}}
We will first derive the general form for cohomogeneity-1 hyper-K\"ahler metrics with triholomorphic $E(2)$ symmetry. Without loss of 
generality we can write this as
\begin{align}
	h =(\det h(\rho))\td\rho^2 + h_{ij}(\rho)\sigma_R^i\sigma_R^j\;, \label{eq_cohom1metric}
\end{align}
where $\sigma_R^i$, $i, j=1,2,3$, are left-invariant 1-forms on $E(2)$, satisfying 
\begin{align}
	\td\sigma_R^i=\frac{1}{2}c^i{}_{jk}\sigma_R^j\wedge\sigma_R^k\;, 
\end{align}
with $c^i{}_{jk}$ being the structure constants of the Lie algebra of $E(2)$. The Killing fields of (\ref{eq_cohom1metric}) are given by the right-invariant vector fields $L_i$ which satisfy $[L_i, L_j]= c^k_{~ij} L_k$. Let 
us define $t^{im}:=\tfrac{1}{2} \epsilon^{ijk}c^m{}_{jk}$. For $E(2)$, $t^{ij}$ is a symmetric, positive semi-definite matrix with one zero eigenvalue 
(VII$_0$ in the Bianchi classification).

By a change of basis $\sigma_R^i{}'= L^i{}_j\sigma_R^j$, where $L\in GL(3, \mathbb{R})$, one can simultaneously diagonalise $h$ and $t$ at a given 
$\rho=\rho_0$ as follows. Under such change $h' = (L^{-1})^T h L^{-1}$ and $t' = (\det L)^{-1}L t L^T$. Since $h$ is symmetric and positive definite, we 
can change basis such that $h' = 1$. As $t'$ is symmetric, we can diagonalise it by an orthogonal matrix so that $t'' = \diag (0, t_2, t_3)$ 
and $h'' = 1$. Finally, we can rescale each direction such that $t''' = \diag (0, 1, 1)$ and $h'''$ is diagonal.   

We now drop primes and assume that $h_{ij}(\rho_0)$ is diagonal and $t=\diag (0,1,1)$.  The latter are equivalent to the structure constants $c^3_{~12}=1 = c^2_{~31}$ with the rest vanishing. Now, following 
\cite{huggett_cohomogeneity-one_2017, dammerman_diagonalizing_2009}, the Einstein condition for the metric (\ref{eq_cohom1metric}) then gives at $\rho_0$
(omitting primes)
\begin{align}
	\dot h_{13}=\dot h_{12}=0\;,\qquad (h^{22}-h^{33})\dot h_{23}=0\;, \label{eq_Einsteincond}
\end{align}
where dot means derivative with respect to $\rho$. Generically, if no other symmetry is assumed, this implies that the first derivative of the off-diagonal 
metric components vanish. If there is additional symmetry and $h^{22}=h^{33}$, then $h$ is automatically diagonal \cite{dammerman_diagonalizing_2009}. By 
further differentiation of the Einstein condition, using real-analyticity\footnote{$h_{ij}$ and $K_{ij}:=\dot h_{ij}$ satisfy a system of first 
order ODEs (the Einstein equations) which are analytic in $(h_{ij}, K_{ij})$ if $\det h \neq0$, so by standard results,  there is a unique solution analytic at $\rho=\rho_0$ for given `initial' values at $\rho=\rho_0$.} in $\rho$, 
one can see that the metric remains diagonal for all $\rho$.

After diagonalisation, the metric can be put into the form 
\begin{align}
	h = \omega_1\omega_2\omega_3\td \rho^2 +\frac{\omega_2\omega_3}{\omega_1} (\sigma_R^1)^2 +\frac{\omega_1\omega_3}{\omega_2} (\sigma_R^2)^2 + \frac{\omega_1\omega_2}{\omega_3} (\sigma_R^3)^2\;,
\end{align}
for some functions $\omega_i(\rho)$. We can take a basis of anti-self-dual (ASD) 2-forms 
\begin{align}
	\Omega^1 &=\omega_2\omega_3 \td \rho \wedge \sigma_R^1 - \omega_1\sigma_R^2\wedge\sigma_R^3 \; , \nonumber \\
	\Omega^2 & =\omega_1\omega_3 \td \rho \wedge \sigma_R^2 - \omega_2\sigma_R^3\wedge\sigma_R^1 \; , \\
	\Omega^3 &= \omega_1\omega_2 \td \rho \wedge \sigma_R^3 - \omega_3\sigma_R^1\wedge\sigma_R^2    \; ,\nonumber
\end{align}
where we choose the orientation $\td\rho\wedge\sigma_R^1\wedge\sigma_R^2\wedge\sigma_R^3$. The requirement that $\Omega^i$ are closed 
is equivalent to the systems of ODE:
\begin{equation}
\dot{\omega}_1=0\; , \qquad \dot{\omega}_2=- \omega_1 \omega_3 \; , \qquad \dot{\omega}_3= -\omega_1 \omega_2   \; .
\end{equation}
In particular,  $\omega_1$ is a constant which must be non-vanishing, so without loss of generality we can assume $\omega_1>0$ 
(this can be arranged using the discrete symmetry $\omega_1\to -\omega_1, \omega_2\to -\omega_2$ of the above system).  We can 
now easily integrate for $\omega_{2,3}$, which gives us two qualitatively different classes of solutions. The first is 
$\omega_2= A e^{\pm \omega_1 \rho}$, $\omega_3=\mp A e^{\pm \omega_1 \rho}$, and it is easily checked that this gives euclidean space. 
The second class of solutions can be written as
\begin{align}
 \omega_2 = -A \sinh(\omega_1(\rho-\rho_0))\;, \qquad \omega_3 = A \cosh(\omega_1(\rho-\rho_0))\;,
\end{align}
with $A$ and $\rho_0$ arbitrary constants and without loss of generality we can set $\rho_0 = 0$.  Let us focus on this second non-trivial solution.

Now, by defining a new coordinate $\hat\rho:= - \omega_1 \rho$ and by rescaling $\hat\sigma_R^{2,3}= \sqrt{\omega_1} \sigma_R^{2,3}$, which does not 
change the structure constants,  we obtain the metric (dropping hats)
\begin{align}
h = K^2 \sinh \rho \cosh \rho\, (\td \rho^2+ (\sigma_R^1)^2) + \coth \rho\,  (\sigma_R^2)^2+ \tanh \rho \, (\sigma_R^3)^2  \; , \label{eq_hE2}
\end{align}
where $K^2:= A^2/ \omega_1 $. The ASD 2-forms are 
\begin{align}
	\Omega^1 &=- K^2\sinh \rho\cosh\rho\, \td \rho \wedge \sigma_R^1 - \sigma_R^2\wedge\sigma_R^3 \; , \nonumber\\
	\Omega^2 & =- K(\cosh \rho\,  \td \rho \wedge \sigma_R^2+ \sinh\rho\, \sigma_R^3\wedge\sigma_R^1 ) \; ,\\
	\Omega^3 &= - K( \sinh\rho\, \td \rho \wedge \sigma_R^3 +\cosh\rho\, \sigma_R^1\wedge\sigma_R^2)    \; , \nonumber
\end{align}
where note that the orientation is now $-\td\rho\wedge\sigma_R^1\wedge\sigma_R^2\wedge\sigma_R^3$. It is useful to note that we can always choose local 
coordinates $(Z, x,y)$  on $E(2)$ so
\begin{align}
	&\sigma_R^1 = \td Z\;, \qquad \sigma_R^2 = -\sin Z \td x + \cos Z \td y\;, \qquad \sigma_R^3 = \cos Z \td x +\sin Z \td y\;.
\end{align}
In these coordinates the Killing fields that generate the $E(2)$ algebra are
\begin{align}
	&L_1 = \partial_Z + x\partial_y - y\partial_x\;, \qquad L_2 = \partial_y\;, \qquad L_3 = \partial_x\;.
\end{align}
This completes the derivation of the general form of the metric which is given by (\ref{eq_hE2}). This 
is the analogue of the BGPP metric for $E(2)$-symmetry.

As $\rho \to\infty$ (\ref{eq_hE2}) approaches the metric
\begin{equation}
h_0 = \td R^2 + R^2 \td Z^2 + \td x^2+\td y^2,   \;  \label{eq_asymptotic_hE2}
\end{equation}
with $R := \frac{K}{2}\exp(\rho)$.  Even though (\ref{eq_asymptotic_hE2}) is (locally) isometric to the flat euclidean metric on $\mathbb{R}^4$,  it is 
only valid at $R\to\infty$, thus (\ref{eq_hE2}) is not asymptotically euclidean (one can also check that $R_{abcd}R^{abcd}\to 0$ iff $R\to\infty$).  This excludes (\ref{eq_hE2}) as a possible base for an asymptotically flat supersymmetric solution.

Even though (\ref{eq_hE2}) does not have the appropriate asymptotic behaviour, it is interesting to write it in Gibbons-Hawking form with respect to the 
Killing field $W' = \frac{1}{2}L_1$ (the analogue of an isoclinic Killing field $W$ for asymptotically euclidean metrics). The coordinates $(\psi, x^i)$ 
adapted to $W'$ are defined by $\td x^i= \iota_{W'} \Omega^i$ and $W'=\partial_\psi$. We find that they take a simpler form in polar 
coordinates $(x, y)\to (r, \theta)$, and a computation using the above ASD 2-forms reveals that
\begin{align}
	&x^1 = \frac{K^2\sinh^2\rho - r^2}{4}\;, \qquad x^2 = \frac{K}{2}r \sinh \rho \cos\phi\;, \\
	&x^3 = \frac{K}{2}r\cosh \rho \sin\phi\;, \qquad \psi = Z+\theta \; , 
\end{align}
where  $\phi:=Z-\theta$.
The associated harmonic function is given by $H=1/h(W',W')$, and we find
\begin{equation}
	H = \frac{16\sinh(2\rho)}{K^2(\cosh(4\rho)-1)+4r^2(\cosh(2\rho)+\cos(2\phi))}\;.
\end{equation}
This diverges on a parabola in $\mathbb{R}^3$ parametrised by $r$:
\begin{align}
	x^1 = -\frac{r^2}{4}\;, \qquad x^2 = 0\;, \qquad x^3=\pm \frac{K}{2}r\;,
\end{align}
and therefore its singularity is not isolated.

\bibliographystyle{sn-mathphys}
\bibliography{references}
\end{document}